\newtheorem{theorem}{Theorem}[section]
\newtheorem{corollary}[theorem]{Corollary}
\newtheorem{proposition}[theorem]{Proposition}
\newtheorem{claim}[theorem]{Claim}
\newtheorem{remark}[theorem]{Remark}
\newtheorem{openproblem}[theorem]{Open Problem}
\crefname{theorem}{Theorem}{Theorems}
\Crefname{theorem}{Theorem}{Theorems}
\crefname{corollary}{Corollary}{Corollaries}
\Crefname{corollary}{Corollary}{Corollaries}
\crefname{lemma}{Lemma}{Lemmas}
\Crefname{lemma}{Lemma}{Lemmas}
\crefname{proposition}{Proposition}{Propositions}
\Crefname{proposition}{Proposition}{Propositions}
\crefname{claim}{Claim}{Claims}
\Crefname{claim}{Claim}{Claims}
\crefname{inq}{Inequality}{inequalities}
\crefname{inq}{Inequality}{inequalities}
\newaliascnt{inq}{equation}
\def\endineq{\eqno \hbox{\@eqnnum}$$\@ignoretrue}
\numberwithin{inq}{section} 
\numberwithin{equation}{section}
\newcommand{\support}{\mathsf{supp}}
\newcommand{\deff}{\triangleq}
\newcommand{\eps}{\varepsilon}
\newcommand{\p}{{\mathfrak{p}}}
\newcommand{\q}{{\mathfrak{q}}}
\newcommand{\pp}{{\mathfrak{P}}}
\newcommand{\qq}{{\mathfrak{Q}}}
\newcommand{\FF}{{\mathbb{F}}}
\newcommand{\ZZ}{{\mathbb{Z}}}
\newcommand{\trcode}{\mathsf{TC}}
\newcommand{\tr}{{\mathsf{Tr}}}
\newcommand{\rrs}{{\mathcal{L}}}
\renewcommand{\v}{{\upsilon}}
\newcommand{\wrs}{\mathsf{WS}}
\newcommand{\spn}{{\mathsf{Span}}}
\newcommand{\con}{{\mathrm{Con}}}
\newcommand{\vr}{\mathcal{O}}
\newcommand{\vrc}{\mathcal{O}'}
\begin{document}
\pagenumbering{gobble}

\title{Tracing AG Codes: Toward Meeting the \\Gilbert--Varshamov Bound}

\date{}

\author{
  Gil Cohen\thanks{Tel Aviv University. \texttt{gil@tauex.tau.ac.il}. Supported by ERC starting grant 949499 and by the Israel Science Foundation grant 2989/24.  	
  }\and
  Dean Doron\thanks{Ben Gurion University. \texttt{deand@bgu.ac.il}. Supported in part by NSF-BSF grant 2022644.}
  \and
  Noam Goldgraber\thanks{Ben Gurion University and Tel Aviv University. \texttt{goldgrab@post.bgu.ac.il}. Supported by NSF-BSF grant 2022644.}
  \and
  Tomer Manket\thanks{Tel Aviv University. \texttt{tomermanket@mail.tau.ac.il}. Supported by ERC starting grant 949499.}
}

\maketitle

\begin{abstract}

One of the oldest problems in coding theory is to match the Gilbert--Varshamov bound with explicit binary codes. Over larger---yet still constant-sized---fields, algebraic-geometry codes are known to beat the GV bound. In this work, we leverage this phenomenon by taking traces of AG codes. Our hope is that the margin by which AG codes exceed the GV bound will withstand the parameter loss incurred by taking the trace from a constant field extension to the binary field. In contrast to concatenation, the usual alphabet-reduction method, our analysis of trace-of-AG (TAG) codes uses the AG codes’ algebraic structure throughout -- including in the alphabet-reduction step.

Our main technical contribution is a Hasse--Weil–type theorem that is well-suited for the analysis of TAG codes. The classical theorem (and its Grothendieck trace-formula extension) are inadequate in this setting. Although we do not obtain improved constructions, we show that a constant-factor strengthening of our bound would suffice. We also analyze the limitations of TAG codes under our bound and prove that, in the high-distance regime, they are inferior to code concatenation. Our Hasse--Weil–type theorem holds in far greater generality than is needed for analyzing TAG codes. In particular, we derive new estimates for exponential sums.
	
\end{abstract}

\newpage
\setcounter{tocdepth}{2}
\tableofcontents
\newpage  
\pagenumbering{arabic}

\section{Introduction}\label{sec:introduction}

The study of the tradeoff between rate and distance is as old as coding theory itself, dating back to the late 1940s and early 1950s. Gilbert \cite{Gilbert1952} proved the existence of codes with distance~$\delta$ and rate
$
\rho \ge 1 - H_2(\delta),
$
and Varshamov \cite{Varshamov1957} subsequently proved that such codes can be taken to be linear. Naturally, the ultimate goal is to achieve \emph{explicit} constructions, and indeed much effort has focused on obtaining efficiently encodable codes with a good rate-vs.-distance tradeoff.

For binary codes, notable early constructions include Justesen's code \cite{Jus72}, the first explicit construction to get constant rate and constant relative distance, and expander-based codes \cite{Tan82,SS96}, which further enjoy very efficient decoding. The large distance regime, $\delta = 1/2-\eps$, has been the subject of extensive and fruitful research over
the past decades, in part due to its importance in complexity theory via its relation to small-bias sets. The GV bound guarantees the existence of codes with such distance $\delta$ and rate $\Omega(\eps^{2})$. Earlier explicit constructions achieved suboptimal rates~\cite{AGHP,NN93,Ben-Aroya_Ta-Shma}, but a breakthrough construction of Ta\hbox{-}Shma attained nearly optimal rate $\eps^{2+o(1)}$ via a sophisticated expander-based bias-reduction technique \cite{T17}. Progress toward the GV bound in the regime where \(\delta\) is bounded away from \(\tfrac12\) has been comparatively limited.

Constructing codes over large alphabets has proved easier, particularly with respect to the rate–distance tradeoff. Reed--Solomon codes are the prototypical example: they attain the optimal tradeoff by meeting the Singleton bound with alphabet size equal to the block length. Consequently, a particularly useful tool for constructing codes over small alphabets is \emph{code concatenation}, which reduces the alphabet size by combining large-alphabet outer codes with short, small-alphabet inner codes, yielding arbitrarily long codes over a small alphabet. Importantly, in the $\delta = 1/2-\eps$ regime, several constructions are concatenation-based (say, \cite{AGHP,Ben-Aroya_Ta-Shma}), and in particular, one can show that concatenating optimal algebraic-geometric codes with the Hadamard code yield codes of rate $\Omega(\eps^3)$. 

Remarkably, the GV bound is suboptimal for sufficiently large constant alphabets. A major line of research, initiated by Goppa's 1981 introduction of algebraic-geometry (AG) codes \cite{Goppa1981Doklady}, constructs codes from algebraic curves over finite fields. This approach laid the foundation for the breakthrough of Tsfasman, Vl\u{a}du\c{t}, and Zink \cite{TsfasmanVladutZink1982}, who proved that for sufficiently large fields (e.g., already over $\mathbb{F}_{49}$), AG codes can asymptotically achieve a rate–distance tradeoff exceeding the GV bound. Their result relies on families of curves with many rational points relative to their genus. Subsequently, Garcia and Stichtenoth \cite{GarciaStichtenoth1995,GarciaStichtenoth1996} constructed explicit recursive towers of function fields attaining the Drinfel’d-Vl\u{a}du\c{t} \cite{DrinfeldVladut1983FAIA} upper bound on the number of rational points, thereby providing explicit families of AG codes that meet the Tsfasman–Vl\u{a}du\c{t}-Zink bound in a fully constructive way. Moreover, as the field size increases, the quantitative improvement becomes more pronounced, and the range of parameters for which the GV bound can be exceeded broadens accordingly.

\subsection{Our Approach: Trace-Based Alphabet Reduction for AG Codes}

Inspired by the work of Kopparty, Ta-Shma, and Yakirevitch~\cite{KTY24} and by a talk of Ta-Shma at the Simons Institute (Berkeley)~\cite{TaShma2024AGcodes}, this paper studies a family of explicit constructions over constant-size fields that aim to meet---or even surpass---the GV bound, with a particular focus on binary codes. Our primary technical contribution is to develop tools for analyzing these candidate constructions and for understanding their limitations. The resulting statements are general and extend beyond the original motivation, with potential further applications.

Our idea is simple: we aim to leverage the underlying algebraic structure of AG codes---which enables them to beat the GV bound---in the alphabet-reduction step as well. The hope is that, by exploiting this structure, the reduction will not substantially compromise the excellent rate–distance tradeoff of the original AG code. In its most basic form, the alphabet-reduction method we propose applies the field trace to each coordinate of every codeword. While this is the variant we focus on in this work, we view it as a special case within a broader family of constructions whose common feature is the aforementioned strategy of leveraging structure for alphabet reduction.

This approach is, in a sense, an antithesis of the off-the-shelf technique of the black-box analysis of code concatenation. The latter ignores the code’s internal structure: the rate and distance of the resulting code are simply the products of the corresponding parameters of the outer and inner codes (see \cref{sec:preliminaries}). 
One can hope to get better guarantees by exploiting additional, more complicated, structure.\footnote{We note that while our trace code approach is inherently ``non black-box'', there have been few attempts at exploiting structure in concatenation-based construction, most recently in \cite{DMW24}, with a similar goal of attaining the GV bound.}

Concatenation is reminiscent of other composition-type primitives, such as the zig–zag product in expander-graph constructions, where the roles of rate and distance are played by the graph’s degree and spectral expansion. In contrast, exploiting more of the underlying structure—in the graph setting, the entire spectrum—yields stronger, and in fact optimal, analyses \cite{CCM24}.

Of course, this is not the first work to consider trace codes: they have been studied for decades, most notably since Delsarte \cite{Del75}, who established their connection to subfield subcodes and, in particular, to dual BCH codes. Trace codes of AG codes have likewise been investigated since the 1990s (e.g., \cite{Vlugt,skorobogatov1991parameters,trace_code_dimension} and subsequent works). However, the aspects examined in that literature, largely motivated by the analysis of BCH and cyclic codes, are of limited relevance to our goals here. The papers by
Kopparty, Ta-Shma, and Yakirevitch \cite{KTY24,KTY25} as well as an earlier paper by Vl{\u{a}}du{\c{t}} \cite{MR1464542} are the most relevant to our work. We  discuss  Vl{\u{a}}du{\c{t}}'s work in \cref{result:exp sums} and \cite{KTY24,KTY25} in \cref{sec:comparision}.

To clarify the challenges in analyzing trace codes of AG codes and to place our technical results in context, we must first turn to describe the simplest case: the trace of Reed–Solomon codes. This (by now standard) analysis exposes the intimate relationship between taking traces of codes—AG codes or otherwise—and the geometry of algebraic curves. It also shows why existing fundamental results from the theory of algebraic curves, while effective for Reed--Solomon codes, fall short for trace codes of general AG codes. With this perspective, we present our main result, which constitutes a first step toward overcoming these obstacles.

\paragraph{Concurrent work.}
In concurrent, independent work, Kopparty, Ta-Shma, and Yakirevitch \cite{KTY25} also study trace codes of AG codes, focusing on the Hermitian function field; this extends their earlier paper~\cite{KTY24}, which inspired the present work. We give a technical comparison with~\cite{KTY24,KTY25} after presenting our results (see \cref{sec:comparision}).

\subsection{Trace Codes of Reed--Solomon: Analysis via the Hasse--Weil Theorem}\label{sec:intro rs trace}

We recall that a Reed--Solomon (RS) code over $\FF_q$ is defined by identifying messages with polynomials $f\in\FF_q[x]$ of degree $<k$ and evaluating them at distinct field elements. Specifically, for evaluation points $\p_1,\dots,\p_n\in\FF_q$ (with $n\le q$), the codeword corresponding to $f$ is
$
(f(\p_1),\ldots,f(\p_n))
$.
Let $p$ denote the characteristic of $\FF_q$, and write $q=p^m$. The trace code of the Reed--Solomon code is obtained by applying the (absolute) field trace $\tr$ to each coordinate. Thus the encoder maps $\FF_q^k \to \FF_p^n$ via
\begin{equation}\label{eq:tr of f}
f \longmapsto \big(\tr(f(\p_1)),\ldots,\tr(f(\p_n))\big),
\end{equation}
where $\tr(x)=x+x^p+\cdots+x^{p^{m-1}}$. Note that the resulted code is $\FF_p$-linear. As it stands, as long as the encoder is injective, the rate of the code is
\begin{equation}\label{eq:intro wrong rate}
\rho = \frac{k \log q}{n \log p} = \frac{mk}{n},
\end{equation}
which is a factor-$m$ improvement over the rate of the underlying RS code. We now turn to the distance analysis, which is more challenging.\footnote{As it turns out, analyzing the distance will require a slight tweak to the construction, incurring a small loss in the rate computed in \cref{eq:intro wrong rate}.}

Fix a polynomial $f \in \FF_q[x]$ of degree $t<k$. Using Hilbert's Theorem~90, it can be shown that the number of zeros $z_f$ in the codeword corresponding to $f$ in \cref{eq:tr of f} is related to the number of pairs $(x,y)\in\FF_q^2$ satisfying
\begin{equation}\label{eq:intro first as}
y^p - y = f(x),
\end{equation}
that is, to the number of $\FF_q$-rational points $n_f$ on this curve. More precisely, we have that
\begin{ineq}\label{intro:zeros and N}
z_f \leq \frac{n_f}{p},
\end{ineq}
and equality holds when $n=q$.
Therefore, analyzing the distance of the trace code is \emph{equivalent} to counting points on curves, one curve for each message $f$.

Counting points on curves over finite fields is a difficult task. A deep and powerful result, the Hasse--Weil theorem (see \cite[Chapter 5]{Stich}), provides a bound on the number of points. We will discuss the Hasse--Weil theorem in more depth later on; for now it suffices to note that in our current setting, where the curve has the form of \cref{eq:intro first as}, the theorem implies that the number of $\FF_q$-rational points $n_f$ satisfies
\begin{ineq}\label{ineq:intro hw for special case}
|n_f - (q+1)| \le (t-1)(p-1) \sqrt{q}.~\footnote{
To be precise, for some polynomials $f$ the Hasse--Weil theorem does not apply: If $f$ can be written as $f(x) = g(x)^p - g(x)$ for some polynomial $g(x)$, then the number of solutions of \cref{eq:intro first as} is $q \cdot p$, yielding a trivial bound on the number of zeros. To circumvent this, we require the degree $t$ of $f$ to be coprime to $p$. This accounts for the rate loss mentioned above, namely a multiplicative $(1-\frac{1}{p})$ factor relative to \cref{eq:intro wrong rate}.}
\end{ineq}

Combining this with \cref{intro:zeros and N}, we readily obtain a bound on the distance. Assuming, for simplicity, that we evaluate over all field elements (i.e., $n=q$), we get
\[
\delta \ge 1-\frac{1}{p} -\frac{1}{np} - \frac{p-1}{p} \cdot \frac{k-2}{\sqrt{n}} \geq 1 - \frac{1}{p} - \frac{k}{\sqrt{n}}.
\]
Since one cannot expect a distance better than $1-\frac{1}{p}$, we see that the “loss” term is $\frac{k}{\sqrt{n}}$. This, in particular, means that to obtain a nontrivial bound on the distance we must take $k = O(\sqrt{n})$, which forces the rate to vanish at an inverse–square-root rate, $\rho = O\!\left(\frac{1}{\sqrt{n}}\right)$.

\medskip

The above construction  extends naturally to trace codes of AG codes, and this is what we undertake in the following sections. However, the distance analysis based on the Hasse--Weil theorem does not carry over. This motivates our results, and in particular our main theorem: a Hasse--Weil–type bound suitable to the analysis of trace codes of AG codes, which yields a meaningful lower bound on their distance.

To keep this introductory section accessible, we continue to assume no prior knowledge of algebraic function fields. As a result, this section has an expository flavor, while the formal treatment appears later in the technical sections. In particular, in \cref{sec:breif intro curves} we provide a brief, informal introduction to algebraic curves.

\subsection{A Brief Introduction to Algebraic Curves}\label{sec:breif intro curves}

Informally, an algebraic curve over a finite field $\FF_q$ is the set of points in $\FF_q^m$ satisfying $m-1$ independent polynomial equations. A good example to have in mind is the Hermitian plane curve over a field of size $q=r^2$ for some prime power $r$, consisting of all points $(x,y)\in\FF_q^2$ satisfying $y^r + y = x^{r+1}$. It is not hard to show that this curve has $r^3 = q^{3/2}$ points. In general, the number of  points $n$ on a curve is a key parameter; in the coding-theoretic context, it governs the block length of the associated code, as we shall see. Generally, these points are denoted by $\p_1,\ldots,\p_n$.

AG codes are obtained by evaluating functions on a fixed curve. The notion of a function on a curve is somewhat delicate. For example, on the Hermitian curve the two polynomials $y^r$ and $x^{r+1}-y$, although different as formal polynomials, are identical as functions, since they agree on all points of the curve. Nonetheless, there is a notion analogous to the \emph{degree} of a function irrespective of its representation as a polynomial. As with polynomials, this notion of degree bounds the number of zeros a function may have on the curve. Moreover, the set of all functions of degree at most $k$ forms an $\FF_q$-vector space. 

Given a curve, we associate the corresponding AG code in a manner analogous to the Reed--Solomon code. We fix a degree $k$ and identify the messages with the subspace of functions of degree at most $k$. The codeword corresponding to such a function $f$ is obtained by evaluating $f$ at all points on the curve,
$
(f(\p_1),\ldots,f(\p_n)) \in \FF_q^n.
$

A second important parameter is the curve's \emph{genus}. This natural number, denoted by $g$, is a measure of the curve’s complexity. We will not give a formal definition of the genus here, but rather adopt the following operative perspective, which is based on the Hasse-Weil Theorem. The latter is fundamental in the study of algebraic curves over finite fields; it is, in fact, the proof of the Riemann Hypothesis over finite fields. We will discuss the theorem itself later; however, an important corollary---referred to here as the Hasse--Weil bound---states that the number of points $n$ on a curve of genus $g$ satisfies
\[
|n - (q+1)| \le 2\sqrt{q}\cdot g.~\footnote{The reason it is $q+1$ rather than $q$ is that, in this context, curves are taken to be projective rather than affine, so one additional “point at infinity” is included.} 
\]
From this result we deduced \cref{ineq:intro hw for special case}.
Thus, the smaller the genus, the better the bound on the number of points on the curve. That is, we view the genus as the parameter that controls $|n-(q+1)|$ (up to the factor $2\sqrt{q}$).

\subsection{TAG Codes}\label{sec:intro trace of ag}

With the concepts in \cref{sec:breif intro curves} in place, we are ready to give an informal definition of trace codes of AG codes (TAG codes, for short). Fix an algebraic curve over a finite field $\FF_q$ of characteristic $p$, choose $n$ points $\p_1,\ldots,\p_n$, and identify the message space with functions of degree less than $k$. As in \eqref{eq:tr of f}, we map a message $f$ by
\begin{equation}\label{eq:tr of f ag}
f \longmapsto \big(\tr(f(\p_1)),\ldots,\tr(f(\p_n))\big) \in \FF_p^n.
\end{equation}

As in the analysis of the trace of RS codes in \cref{sec:intro rs trace}, here too there is a precise relation between the number of zeros $z_f$ in the codeword corresponding to $f$ and the number of points on a certain curve. This time, the latter curve depends on both the original curve and the function $f$. In particular, analogously to \cref{eq:intro first as}, if the original curve lies in an ambient space of dimension $m$, then the new curve we consider lies in an ambient space of dimension $m+1$, and in addition to the $m-1$ polynomial relations among $x_1,\ldots,x_m$ dictated by the original curve, we have the relation
\begin{equation}\label{eq:m levels}
x_{m+1}^p - x_{m+1} = f(x_1,\ldots,x_m).
\end{equation}

Analogously to \cref{intro:zeros and N}, there is an exact relation between the number of points $n_f$ on the new curve and $n$, the number of points on the original curve. It takes a slightly less simple form, which is quantitatively almost identical; for simplicity, in this informal section we will use the same relation as before, namely,
\begin{equation}\label{intro:zeros and N ag}
z_f = \frac{n_f}{p}.
\end{equation}

Recall that, to analyze the distance of the trace of RS codes, we relied on the Hasse--Weil bound as stated in \cref{ineq:intro hw for special case}. Looking more closely at that equation, the term $q$ counts the number of points on the base curve underlying the Reed--Solomon code, namely the line over $\FF_q$. The Hasse--Weil bound controls the difference between the number of points on the base curve and the number of points on the new curve. The bound in \cref{ineq:intro hw for special case} is in terms of the genus of the new curve.
What should the bound be in the case of TAG codes? There are now two genera involved: the genus of the curve defining the code, denoted by $g$, and the genus of the new, extended curve, denoted by $g_f$. In the original case, since the genus of the line is $0$, it made no appearance in the Hasse--Weil bound.
There is a sense in which the $\sqrt{q}$ appearing in the Hasse--Weil bound provides the natural “units” for measuring the number of points, independent of whether we extend the line or another curve (see \cref{result:exp sums}). Thus, it is conceivable that the difference of genera should be used in the bound. Indeed, the general result in this context is Grothendieck’s trace formula \cite{Grothendieck1977Lefschetz} which states exactly this; namely, it yields the bound
\begin{ineq}\label{ineq:grothendieck}
|n_f - n| \le 2\sqrt{q} \cdot (g_f - g).
\end{ineq}

Unfortunately, this bound, which is tight in general, doesn't give any nontrivial bounds on the distance of TAG codes. To see why, we rely on two results from the theory of algebraic curves. First, there is a formula relating the genus $g_f$ of the extended curve and the genus $g$ of the base curve, known as the Hurwitz genus formula. For our introductory purposes it suffices to say that
\begin{equation}\label{eq:first delta}
g_f = p g + \Delta,
\end{equation}
where $\Delta \ge 0$ is an integer.~\footnote{For readers familiar with algebraic function fields, $\Delta$ is the degree of the Different divisor of the corresponding function field extension.} Hence, the bound given by \cref{ineq:grothendieck} is at least $2\sqrt{q}\,(p-1)g$.

The second result is the Drinfel'd--Vl\u{a}du\c{t} bound \cite{DrinfeldVladut1983FAIA}, which, informally, states that asymptotically, as the genus of the curve tends to infinity, we have
\begin{ineq}\label{ineq:Ihara_bound}
    \frac{n}{g} \le \sqrt{q}-1.
\end{ineq}
Combining this with the calculation above shows that the bound in \cref{ineq:grothendieck} is worse than $2(p-1)n$. Consequently, the resulting upper bound on $n_f$ is no better than $(2p-1)n$. Together with \cref{intro:zeros and N ag}, this yields an upper bound of
$
\frac{(2p-1)n}{p}
$
on the number of zeros in a codeword, which is trivial since this quantity exceeds $n$.\footnote{
We remark that even a bound of the form $\sqrt{q}\,(g_f-g)$ in \cref{ineq:grothendieck}$\,$---which may be obtainable asymptotically---still does not yield a nontrivial distance bound.}

As discussed above, a distance analysis for TAG codes demands estimates stronger than those yielded by Grothendieck’s trace formula. This is the main technical contribution of the present work. In the next section we state our main result and its applications to the analysis of TAG codes, as well as its broader consequences.

\section{Our Results}

As discussed in \cref{sec:intro trace of ag}, the issue with Grothendieck’s trace formula for analyzing the distance of TAG codes is that the upper bound it yields for $|n_f - n|$ depends too heavily on the genus $g$ of the underlying curve. In other words, regardless of how simple the extension is (equivalently, regardless of the degree $t$ of $f$), the bound is too loose when the underlying function field is of large genus. Indeed, the degree of $f$ is encoded in $\Delta$, and the previous section showed that even if the contribution of $\Delta$ to the bound is ignored, one still does not obtain a meaningful bound on the distance. Thus, we seek a bound that depends on the complexity of the \emph{extension} as captured by $\Delta$, rather than on how complex is the base curve, as encoded by the genus $g$. More precisely, one can show that $\Delta = \Theta(pt)$, and in particular it can be taken quite significantly smaller than the genus $g$.

A bound of the form
\begin{equation}\label{eq:Better error term approach}
  |n_f - n| \;=\; \Phi(\Delta)\,\sqrt{q},
\end{equation}
for some function $\Phi$ (e.g., linear in $\Delta$) is most desirable, as it is completely independent of the genus $g$. Our main technical contribution is a step toward such a result, in which the bound we obtain is the \emph{geometric mean} of $\Delta$ (the quantity we wish to appear in the bound) and the genus $g$, which by itself is too large.

For the analysis of TAG codes we only need to consider specific types of extensions, as in \cref{eq:m levels}. Such extensions are called Artin--Schreier extensions. Our result, however, holds in much greater generality. In this introductory section we choose to be somewhat informal and consider only the special case required for analyzing TAG codes.

\begin{theorem}[main result; informal]\label{thm:intro bound informal}
Let $\FF_q$ be a finite field of characteristic $p$. Let $C$ be a ``nice'' curve over $\FF_q$ with genus $g$ and $n$ points. Let $f$ be a ``nice'' function on $C$ of degree $t$, and define the curve $C_f$ by the additional polynomial constraint $y^p - y = f$, where $y$ is a new formal variable. Then the number of points $n_f$ on the curve $C_f$ satisfies
\begin{equation}\label{eq:simplified bound}
    |n_f - n| = O\!\left(p^2\sqrt{t g} \sqrt{q}\right).
\end{equation}
\end{theorem}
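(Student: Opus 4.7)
The natural first step is to convert the point-counting problem into an exponential-sum estimate. Fix a nontrivial additive character $\psi$ of $\FF_p$; the orthogonality identity $\#\{y\in\FF_q:y^p-y=\alpha\}=\sum_{a\in\FF_p}\psi(a\,\tr_{\FF_q/\FF_p}(\alpha))$ (which comes from Hilbert 90, exactly as in \cref{sec:intro rs trace}) gives
\[
n_f - n \;=\; \sum_{a\in\FF_p^\ast} S_a(f) \;+\; O(pt), \qquad S_a(f)\deff\!\!\sum_{P\in C(\FF_q)\setminus\support(f)_\infty}\!\!\psi\!\left(a\,\tr(f(P))\right),
\]
the error absorbing the at-most-$t$ poles of $f$ and their preimages in the Artin--Schreier cover. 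The theorem thus reduces to proving $|S_a(f)|=O(p\sqrt{tg}\sqrt q)$ for each fixed $a\in\FF_p^\ast$, the last factor of $p$ in the bound arising when we sum back over the $p-1$ nontrivial characters.

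Viewed cohomologically, $S_a(f)$ is a trace of Frobenius on $H^1_c(C^\circ,\mathcal L_\psi(af))$, where $\mathcal L_\psi(af)$ is the Artin--Schreier sheaf attached to $af$; Deligne's purity bounds $|S_a(f)|\le(\dim H^1_c)\sqrt q$, but the Grothendieck--Ogg--Shafarevich formula yields $\dim H^1_c=\Theta(g+pt)$---this is Grothendieck's bound \cref{ineq:grothendieck} in disguise, and hence useless. The plan is to sidestep the cohomological dimension bound via a Cauchy--Schwarz step that separates the ``local'' information (controlled by $t$) from the ``global'' information (controlled by $g$). Concretely, pick a nonconstant morphism $\pi:C\to\mathbb{P}^1$ of some degree $d$, to be tuned later (such $\pi$ exists by Riemann--Roch since $\mathcal L(D)$ contains a nonconstant function whenever $\deg D>2g$), set $T_a(x)\deff\sum_{P\in\pi^{-1}(x)(\FF_q)}\psi(a\tr(f(P)))$, and apply Cauchy--Schwarz:
\[
|S_a(f)|^2 \;\le\; q\cdot\sum_{x\in\FF_q}|T_a(x)|^2 \;=\; q\cdot\!\!\!\!\sum_{(P,Q)\in X(\FF_q)}\!\!\!\!\psi\!\left(a\tr(f(P)-f(Q))\right),
\]
where $X=C\times_{\mathbb P^1}C$ is the fiber product. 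The right-hand sum is again an exponential sum---but now on $X$, against the Artin--Schreier sheaf attached to $f(P)-f(Q)$, whose ramification is still governed by $\support(f)_\infty$. Applying a Weil-type bound on $X$ (whose genus can be made linear in $dg$, not $d^2g$, by a suitable choice of $\pi$) and then optimizing $d$ by balancing the $g$-term against the $t$-term is what should produce the geometric-mean bound $\sqrt{tg}\sqrt q$.

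The principal obstacle is exactly this balancing. One must choose $\pi$ so that (i)~$X$ is irreducible with genus linear in $dg$---which forces $\pi$ to be generically \'etale or to have only carefully controlled ramification, ideally disjoint from $\support(f)_\infty$; and (ii)~the companion Artin--Schreier sheaf on $X$ associated to $f(P)-f(Q)$ remains geometrically irreducible, so that Deligne's estimate actually gives square-root cancellation rather than degenerating. Condition (ii) is precisely where the coprimality hypothesis $\gcd(t,p)=1$ (which already appears in the footnote after \cref{ineq:intro hw for special case}) should enter. A secondary difficulty is that producing $\pi$ abstractly via Riemann--Roch is too coarse for explicit constant tracking; for the concrete curves of interest (e.g.\ Garcia--Stichtenoth towers) one should build $\pi$ from a low-degree function in $\mathcal L(kP_\infty)$ for a distinguished rational place $P_\infty$, and the accompanying Swan--conductor computation at the poles of $f$ is what I expect to be responsible for the factor $p^2$ in \cref{eq:simplified bound}.
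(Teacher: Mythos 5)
Your proposal takes a genuinely different route from the paper, and the route has a gap at its central step. The paper does not pass through exponential sums and Deligne/Weil at all: its proof of \cref{thm:intro bound informal} (via \cref{prop:general}) is an elementary Stepanov-type auxiliary-function argument carried out directly in the extension $L=F(z)$. One spans a space $U$ by products $b_i x^{jm} z^{km}$ (with $\{b_i\}$ a basis of $\rrs(A\pp)$), defines the Frobenius-semilinear map $\Psi$ sending $\sum c\, b x^{jm}z^{km}$ to $\sum c^{q/m} b^{q/m} x^{j}z^{k}$, and shows by a dimension count (Riemann's inequality for $\dim U$ versus $\ell(G)\le\deg G+1$ for the image) that $\ker\Psi$ contains a nonzero $h$; since $\Psi(h)^m(\qq)=h(\qq)$ at every rational place $\qq\neq\pp$, this $h$ vanishes at all of them, so $N_L\le\deg(h)_\infty+1$, and optimizing $m$ and $B$ yields the geometric mean $\sqrt{tg}$ (the powers of $p$ come from rounding $m$ to a power of $p$). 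The point-counting-to-character-sum dictionary you start with is the one the paper uses in the \emph{opposite} direction, to deduce \cref{thm:exponential_sums} from the point count.

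The gap: your Cauchy--Schwarz over the fibers of $\pi$ does not escape the large-genus obstruction, it reproduces it one level up. The off-diagonal part $X^{\circ}=X\setminus\Delta$ of the fiber product is a degree-$(d-1)$ cover of $C$, so by Riemann--Hurwitz $g_{X^{\circ}}\ge (d-1)(g-1)+1=\Theta(dg)$ --- as you note, linear in $dg$ is the best one can hope for. But any ``Weil-type bound'' on $X^{\circ}$ (Deligne purity times $\dim H^1_c$, i.e.\ Bombieri's \cref{cor:bom_bound_curve} applied to $X^{\circ}$) has error term at least $2g_{X^{\circ}}\sqrt{q}=\Omega\bigl((d-1)g\sqrt{q}\bigr)=\Omega\bigl((d-1)n\bigr)$ in the regime $n=\Theta(g\sqrt q)$ that the theorem targets --- which is no smaller than the trivial bound $\#X^{\circ}(\FF_q)\approx(d-1)n$. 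This is exactly the failure mode the paper isolates after \cref{cor:bom_bound_curve}: whenever $2g\sqrt q\ge n$, a bound whose main term is (genus of the ambient curve)$\times\sqrt q$ is vacuous, and differencing only increases the ambient genus. Concretely, your chain gives $|S_a(f)|^2\le q\cdot O(dn)$, hence $|S_a(f)|\le O(\sqrt{qdn})$; since the gonality bound (\cref{clm:LB_for_the_gonality}) forces $d\ge N/(q+1)$, this is $\Omega(n)$, i.e.\ trivial, for every admissible choice of $\pi$. No tuning of $d$ or of the ramification of $\pi$ can fix this, because the problem is not the conductor of the Artin--Schreier sheaf on $X$ but the genus of $X$ itself. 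To make your plan work you would need a bound on the fiber-product sum that does not go through $g_{X}$ --- which is precisely the problem you set out to solve, now for a curve of larger genus.
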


In fact, the result also extends to Kummer extensions and, more generally, to extensions of the form \(\phi(y)=f\) for an arbitrary polynomial \(\phi\), as formalized in~\cref{thm:extensions we work with}. The theorem further applies to even more general extensions under technical conditions—hidden in the two “nice” instances highlighted in the informal~\cref{thm:intro bound informal}. For the complete, formal statement, see~\cref{prop:After beta and mu instantiation}.

\subsection{Implications for Error Correcting Codes}
As discussed in \cref{sec:intro trace of ag}, an upper bound on $n_f$ directly translates into a lower bound on the weight of the codeword corresponds to $f$ in the TAG construction. In this section, we examine the resulting codes parameters, also recognizing a possible limitation of this approach. We start by giving an example.

\subsubsection{The Hermitian TAG code}
In this section, we illustrate \cref{thm:intro bound informal} by giving an explicit instantiation of TAG codes based on the Hermitian curve introduced in~\cref{sec:breif intro curves}.
Let $r = p^\ell$ be a power of a prime $p$, and let $q = r^2$. Let $A \subseteq \FF_q \times \FF_q$ be the set of roots of the polynomial $y^r + y - x^{r+1}$. For $T \geq 0$, let
\begin{equation}\label{def:Hermitian B}
\begin{split}
    B = \{ (i, j) \in \mathbb{N} \times \mathbb{N}:\ &ir + j(r+1) \leq T, \\
    &i \text{ is odd and } j \text{ is even}, \\
    &j < r/6\}.
\end{split}
\end{equation}
The Hermitian TAG code is defined by
\[
    \mathcal{C} = \left\{ \left(\tr \left( \sum_{(i,j)\in B} c_{i,j}\alpha^i \beta^j \right) \right)_{(\alpha, \beta) \in A} : \ c_{i,j} \in \FF_q \right\}.
\]
The following theorem specifies the parameters of this code; the formal statement appears in~\cref{thm:epsilon-balanced codes from Hermitian curve}.

\begin{theorem}[Hermitian TAG codes; informal]The Hermitian TAG code with message length $k$ and relative distance $1/2-\eps$ has block length
    \[
        n = O\left ( \frac{k}{\varepsilon^4} \right )^{3/2}.
    \]
\end{theorem}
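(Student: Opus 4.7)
The plan is to apply the main theorem (\cref{thm:intro bound informal}) as a black box in the Hermitian setting and then carry out the rate--distance bookkeeping.

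First I would record the standard geometric data of the Hermitian curve $H : y^r + y = x^{r+1}$ over $\FF_q$ with $q = r^2$: it has $n = r^3$ affine $\FF_q$-rational points, genus $g = r(r-1)/2 = \Theta(r^2)$, and a unique place $P_\infty$ at infinity with $v_{P_\infty}(x) = -r$ and $v_{P_\infty}(y) = -(r+1)$. Hence every $f = \sum_{(i,j)\in B} c_{i,j}\, x^i y^j$ lies in the Riemann--Roch space $\mathcal{L}(T \cdot P_\infty)$, and the pole-order parameter fed into \cref{thm:intro bound informal} is at most $T$.

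Before invoking the theorem, I would verify that the parity and range constraints defining $B$ suffice to place every nonzero such $f$ within the ``nice'' hypothesis of \cref{thm:intro bound informal}. Morally, these constraints play the role of ``$\deg f$ coprime to $p$'' in the Reed--Solomon trace analysis of \cref{sec:intro rs trace}: they should prevent $f$ from lying in the image of the Artin--Schreier map $g \mapsto g^p - g$ (which would make the cover $H_f : z^p - z = f$ degenerate and collapse the codeword) and should pin the ramification of $H_f$ over $H$ to $P_\infty$ alone. I would likewise check that the coefficient-to-codeword map is injective, using the standard fact that the monomials $x^i y^j$ with $0 \leq j < r$ have pairwise distinct pole orders at $P_\infty$ and hence are linearly independent in $\mathcal{L}(T\cdot P_\infty)$.

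Given this, \cref{thm:intro bound informal} yields $|n_f - n| = O\!\left(p^2 \sqrt{T g}\, \sqrt{q}\right)$ for every admissible $f$. Using the Artin--Schreier identity $z_f = n_f / p$ (up to an $O(1)$ contribution from $P_\infty$), the Hamming weight of the $f$-codeword is at least $n(1 - 1/p) - O\!\left(p \sqrt{T g q}\right)$. Specializing to $p = 2$, $g = \Theta(r^2)$, $q = r^2$, and $n = r^3$ gives relative distance at least $1/2 - O(\sqrt{T}/r)$; forcing this to be $1/2 - \eps$ yields $T = \Theta(\eps^2 r^2)$.

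Finally I would count $|B|$. The lattice region $\{(i,j) \in \mathbb{N}^2 : ir + j(r+1) \leq T\}$ is a triangle of area $\Theta(T^2 / r^2)$; the parity constraints shave off only a constant factor, and for $T = \Theta(\eps^2 r^2)$ with $\eps$ small the side condition $j < r/6$ is not binding. Hence $|B| = \Theta(\eps^4 r^2)$, and the message length $k$ equals this up to a $\log_p q$ factor that the $O$-notation absorbs. Solving $k = \Theta(\eps^4 r^2)$ for $r$ gives $r = \Theta(\sqrt{k}/\eps^2)$, so $n = r^3 = O\!\left((k/\eps^4)^{3/2}\right)$, as claimed. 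I expect the main obstacle to be the ``niceness'' verification in the second paragraph: translating the combinatorial constraints on $B$ into the precise algebraic hypotheses of \cref{thm:intro bound informal}, which in characteristic $2$ interacts delicately with the Hermitian function field---in particular, since $r$ is itself a power of $p$, the monomial pole orders at $P_\infty$ are never automatically coprime to $p$, so one has to argue nondegeneracy of $z^p - z = f$ via an Artin--Schreier reduction rather than via the naive pole-order criterion.
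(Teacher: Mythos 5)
Your proposal is correct and follows essentially the same route as the paper: apply the point-counting bound for the Artin--Schreier cover $z^p-z=f$ as a black box, convert it to a weight lower bound via $z_f\approx n_f/p$, and then choose $T=\Theta(k/\eps^2)$, $r=\Theta(\sqrt{k}/\eps^2)$ --- exactly the parameters in the paper's \cref{thm:trace code of Hermitian curve} and \cref{thm:epsilon-balanced codes from Hermitian curve}. One correction to your closing speculation: the ``naive pole-order criterion'' does work here, since $\deg(x^iy^j)_\infty=(i+j)r+j\equiv j\pmod p$, so the constraint $p\nmid j$ already makes the pole order of the leading monomial (hence of $f$) prime to $p$ and forces $\pp_\infty$ to be totally ramified; the remaining constraints ($\gcd(i+j,2)=1$ and $j<r/6$) are not about nondegeneracy of the cover at all, but are there to guarantee the linear-independence hypothesis on the auxiliary set $S$ in the paper's \cref{prop:deg(f)=t and independence of S}, which is internal to the point-counting machinery.
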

In fact, the result shows that the Hermitian TAG code not only has relative distance \( \tfrac12-\varepsilon \) but is actually \emph{\(\varepsilon\)-balanced}: every nonzero codeword has relative Hamming weight in \([\,\tfrac12-\varepsilon,\tfrac12+\varepsilon\,]\). In \cref{sec:specific function fields}, we further instantiate our results for additional curves—specifically the Hermitian tower of function fields and the norm–trace function field—obtaining \(\varepsilon\)-balanced codes with varying parameter trade-offs.

\subsubsection{The high distance regime: TAG codes vs.\ concatenation}\label{sec:barrier intro}
As our case studies in~\cref{sec:specific function fields} show, in the regime \(\delta=\tfrac{1}{2}-\varepsilon\) all the TAG codes we consider are still far from the GV bound, requiring
$
n=\omega\!\left(\tfrac{k}{\varepsilon^{2}}\right).
$
This prompts the question: does there exist a curve for which the corresponding TAG code matches the GV bound?

In~\cref{sec:trace_vs_had} we present strong evidence for a negative answer.
 We show that, when analyzed using our bound from~\cref{thm:intro bound informal}, in the high-distance regime \(\delta=\tfrac{1}{2}-\varepsilon\), TAG codes instantiated from a given AG code are outperformed by concatenating the same AG code with Hadamard. This remains true even under a bound of the form~\cref{eq:Better error term approach} with \(\Phi(\Delta)\) that is linear in \(\Delta\). In particular, we establish the following.

\begin{theorem}\label{prop:had vs tag}
    Assume the bound given by \cref{eq:Better error term approach} is tight up to a constant. Then, any TAG code with message length $k$ and relative distance $1/2 - \varepsilon$ has block length 
    \[
        n = \Omega\left( \frac{k}{\varepsilon^{3}} \right).
    \]
    Moreover, assuming the bound given in \cref{thm:intro bound informal} is tight up to a constant,
        $n = \Omega(k/\eps^6)$.
\end{theorem}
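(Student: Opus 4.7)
The plan is to turn the tightness assumption into a rate-distance constraint on any TAG code, and then to show that after optimizing over the underlying field and curve one cannot break the $\Omega(k/\eps^3)$ barrier. I work out the binary case $p = 2$ for concreteness.

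First I would use the identity $z_f = n_f/p$, the assumed linear tight bound $|n_f - n| = \Theta(\Delta(f)\sqrt{q})$, and the distance requirement $z_f \leq n(1/2 + \eps)$ to derive a cap on the pole orders appearing in the message space. Since for an Artin-Schreier extension $y^p - y = f$ with $f$ of pole order $t$ the Different satisfies $\Delta(f) = (p-1)(t+1) = t+1$, the distance inequality becomes $t \leq C_1 n\eps/\sqrt{q}$, and the maximum pole order $k'$ of any function in the code obeys the same bound.

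Next I would bound the code dimension using Riemann--Roch and the Drinfel'd--Vl\u{a}du\c{t} inequality. By Riemann--Roch, $\dim_{\FF_q} L(k' P_\infty) \leq k' - g + 1$, so the $\FF_2$-dimension is $k \leq m(k' - g + 1)$ with $m = \log_2 q$. Combining with $g \geq n/(\sqrt{q} - 1)$ yields
\[
k \;\leq\; \frac{m n}{\sqrt{q}} \bigl(C_1 \eps - 1\bigr) + O(m),
\]
which is meaningful only when $\eps > 1/C_1$; for smaller $\eps$ the code has non-positive dimension and the desired lower bound on $n$ holds vacuously.

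The $\eps^3$ exponent is produced in the optimization step. Minimizing $\sqrt{q}/m$ over $m$ alone gives only $n = \Omega(k/\eps)$; the extra factor of $\eps^{-2}$ arises from the requirement that the TAG code actually carry non-constant functions, i.e., that the first Weierstrass non-gap at $P_\infty$ be at most $k'$. For curves whose Weierstrass semigroup at $P_\infty$ is generated by elements of order $\sqrt{g}$---Hermitian being the canonical example---this forces $\sqrt{g} \leq C_1 n\eps/\sqrt{q}$, equivalently $g \leq O(n^2\eps^2/q)$; combined with Drinfel'd--Vl\u{a}du\c{t} $g \geq n/\sqrt{q}$ this yields $\sqrt{q} \leq O(n\eps^2)$, hence $n \geq \sqrt{q}/O(\eps^2)$. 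Feeding this back into the rate inequality from the previous paragraph gives $n \geq \Omega(k/\eps^3)$. For the moreover part, I would replace the linear-$\Phi$ tightness in step~1 by the bound of \cref{thm:intro bound informal}: the requirement $\sqrt{tg \cdot q} \leq O(n\eps)$ gives $t \leq O(n\eps^2/\sqrt{q})$---an extra factor of $\eps$ compared to the linear case---and propagating this through the subsequent steps produces the strengthened lower bound $n \geq \Omega(k/\eps^6)$.

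The main obstacle is making the last step uniform across curves. Hermitian sets the benchmark because of its small first non-gap, but for an arbitrary AG code one needs to show that no Weierstrass structure at $P_\infty$ gives a dimension-per-degree profile strictly better than Hermitian's quadratic $\dim L(t P_\infty) \sim t^2/g$. Reducing this to an abstract statement about semigroups of Weierstrass non-gaps and combining with the Hasse--Weil and Drinfel'd--Vl\u{a}du\c{t} bounds should close the uniform argument.
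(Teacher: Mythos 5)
Your proposal has two genuine gaps, one of which is an outright error. First, the dimension bound in your second step is reversed: Riemann--Roch gives $\ell(k'\mathfrak{P}_\infty)\ge k'-g+1$, not $\le$; the upper bound $\ell(G)\le\deg G-g+1$ holds only when $\deg G\ge 2g-1$, whereas the tightness assumption forces the TAG code into the sub-genus regime $T=O(n\eps/\sqrt q)\ll g$, exactly where that inequality fails. The conclusion you draw from it---that for $\eps<1/C_1$ the code has non-positive dimension, so the bound holds ``vacuously''---contradicts the paper's own constructions (e.g.\ \cref{thm:epsilon-balanced codes from Hermitian curve} gives positive-rate $\eps$-balanced Hermitian TAG codes for arbitrarily small $\eps$). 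Second, the step that is supposed to produce the extra $\eps^{-2}$ assumes the first Weierstrass non-gap at $\mathfrak{P}_\infty$ is on the order of $\sqrt g$; this is a Hermitian-specific feature, and in general the only available lower bound is the gonality bound of \cref{clm:LB_for_the_gonality}, which gives roughly $g/\sqrt q$, not $\sqrt g$. You acknowledge this yourself in the final paragraph: making the argument uniform over all curves would require an upper bound on the dimension-per-degree profile of arbitrary Riemann--Roch spaces below the genus, which you do not supply. So as written the proposal does not prove the theorem for \emph{any} TAG code, and even a repaired version of your dimension step (using $\ell(T\mathfrak{P})\le T+1$) only yields $n=\Omega(k/\eps^2)$ up to log factors.

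The paper avoids the sub-genus dimension question entirely by a comparison argument. It takes the same Riemann--Roch space, concatenates the AG code with Hadamard to get a binary code of length $n_H=Nq$ and bias $\eps_H\le T/N$, and invokes the Ben-Aroya--Ta-Shma lower bound \cref{eq:amnon}, $n_H=\Omega\bigl(k/(\eps_H^{2.5}\log(k/\eps_H))\bigr)$, valid in the regime $g=\Omega(\sqrt q)$. The tightness assumption gives $\eps_T=\Omega(\eps_H\sqrt q)$ (or $\eps_T=\Omega(\sqrt{\eps_H\sqrt q})$ for the square-root bound), while $n_T=n_H/q$, and the gonality bound supplies $\sqrt q=\Omega(1/\eps_T)$; chaining these turns the exponent $\beta\ge 2.5$ into $2\beta-2\ge 3$ (respectively $4\beta-4\ge 6$ for the ``moreover'' part). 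In other words, the hard quantitative content is imported from \cite{Ben-Aroya_Ta-Shma} rather than re-derived, which is precisely the ingredient your direct approach is missing.
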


\subsubsection{The constant distance regime}\label{subsubsec:constant rate TAGs}

As implied by \cref{sec:barrier intro}, under our analysis, attaining the GV bound requires operating in the regime where the distance \(\delta\) is bounded away from \(\tfrac12\).
To approach the GV bound in this regime, we must have a nonvanishing rate. In this short section, we examine the implications for the parameters of the underlying AG code.

For a curve \(C\), let \(\ell(T)\) denote the dimension of the vector space of functions on \(C\) of degree less than \(T\). Assume that \cref{thm:intro bound informal} applies to every function in this space, and let \(\mathcal{C}\) be the resulting TAG code. By \cref{eq:simplified bound} together with \cref{intro:zeros and N ag}, the encoding of a function \(f\) of degree \(t\) is nonzero provided the right-hand side is \(<(p-1)\,n\).

For simplicity, assume the implicit constant hidden in the big-\(O\) of \cref{eq:simplified bound} is \(1\), and that \(n/g=\sqrt{q}-1\) (the optimal AG-code guarantee, meeting the Drinfel'd-Vl\u{a}du\c{t} bound). To obtain a nontrivial bound on the number of zeros of \(f\), we require
\[
  p^{2}\sqrt{tg}\,\sqrt{q}<pn,
\]
which is equivalent (since \(n=g(\sqrt{q}-1)\)) to
\[
  t < \frac{g}{p^{2}}\left(1-\frac{1}{\sqrt{q}}\right)^{2}.
\]
Hence, to apply our result we must work in the regime \(T < g/p^{2}\). The resulting TAG code \(\mathcal{C}\) has rate
\[
  \rho \;=\; \frac{\ell(T)\,\log_{p}q}{n}.
\]
Since \(\ell(T)\le T\le g/p^{2}\) and \(n=(\sqrt{q}-1)g\), we obtain
\[
  \rho \;=\; \frac{\log_{p}q}{\sqrt{q}-1}\cdot\frac{\ell(T)}{g}
  \;\le\; \frac{\log_{p}q}{(\sqrt{q}-1)\,p^{2}}.
\]
Thus, to obtain a family with constant rate, we must fix \(q\) to a constant and choose curves (and \(T\le g/p^{2}\)) so that \(\ell(T)=\Omega_{q}(g)\).

This is a somewhat nonstandard regime for AG codes. Typically one works with functions of degree at least \(2g\), where Riemann--Roch guarantees linear growth of \(\ell(T)\) (indeed, \(\ell(T)=T+1-g\) for \(T\ge 2g-1\)). In particular, among degrees up to \((1+\alpha)g\) one obtains at least \(\alpha g\) attainable degrees. By contrast, working below \(g\) is subtler: the attainable sub-\(g\) degrees—the \emph{Weierstrass non-gaps}—depend on the curve. In our TAG setting, one must therefore understand this sub-\(g\) regime.

A property closely related to the latter was examined in~\cite[Section~4.1]{Ben-Aroya_Ta-Shma}.
However, we are not aware of any optimal family of function fields exhibiting this behavior. For instance, in~\cite{yang2019weierstrasssemigroupstowerfunction}, the authors show that certain Riemann--Roch spaces on the Garcia--Stichtenoth tower do not satisfy this property. We formalize this  as an open problem.

\begin{openproblem}
    Does there exist a prime power $q=p^\ell$ and a family of function fields $F_i / \FF_q$ with $\frac{n_i}{g_i} = \Theta_q(1)$, such that each $F_i$ of genus $g_i$ contains a Riemann--Roch space $\mathcal{L}_i$ of degree $T_i \leq \frac{g_i}{p^2}$ satisfying $\ell_i(T_i) = \Omega_q(g_i)$?
\end{openproblem}

\subsubsection{Limitations and consequences of improving the bound}

In \cref{subsubsec:constant rate TAGs}, we assumed that the constant appearing in the error term in \cref{eq:simplified bound} is equal to~1.
Suppose instead that we could obtain a small constant~$0<c<1$ such that, for every ``nice'' function~$f$ of degree~$t$, the bound
\[
    n_f \leq n + c p^2 \sqrt{t g} \sqrt{q}
\]
holds. If a $\beta$-fraction of all functions of degree up to~$T$ would be ``nice'', then the resulting TAG code would have rate and relative distance, correspondingly, 
\begin{align*}
    \rho &= \beta \cdot \frac{\log_p q}{\sqrt{q} - 1} \cdot \frac{\ell(T)}{g}, \\
    \delta &\geq 1 - \frac{1}{p} - c \cdot \frac{p \sqrt{gT}\sqrt{q}}{n}.
\end{align*}
Taking $T = 2g$, the Riemann--Roch theorem gives $\ell(T) = g + 1$. 
We then have
\begin{align*}
    \rho &\geq \beta \cdot \frac{\log_p q}{\sqrt{q} - 1}, \\
    \delta &\geq 1 - \frac{1}{p} - c p \frac{\sqrt{2q}}{\sqrt{q} - 1}.
\end{align*}

From this we see that if our bound were sharpened by reducing the leading constant, TAG codes would work in the usual AG regime (e.g., for degrees \(\ge 2g-1\)) and could lead to better code parameters.
Note that the constant \(c\) cannot be taken arbitrarily small; doing so would violate the MRRW linear-programming upper bounds~\cite{MRRW}. This offers an unusual direction of inference: using coding-theoretic limits to deduce statements about algebraic curves.

\subsection{Exponential Sums over Curves}\label{result:exp sums}
Exponential sums over finite fields play a central role in number theory and algebraic geometry.  
A.~Weil was the first to observe a deep connection between exponential sums of polynomials and Artin--Schreier covers.  
Specifically, for a polynomial $f(x)$ of degree $d$ over $\FF_q$ such that $f \neq u^p - u$ for all $u \in \overline{\FF_q}(x)$, the exponential sum
\begin{equation}
    S(f) = \sum_{\alpha \in \FF_q} e^{\frac{2\pi i}{p}\tr(f(\alpha))}
\end{equation}
can be expressed as
$S(f) = \omega_1 + \cdots + \omega_D$, 
where the $\omega_i$ are a subset of the reciprocals of the roots of the zeta function of the Artin--Schreier curve
\[
    y^p - y = f(x).
\]

Weil also showed that if $\deg f$ is relatively prime to~$p$, then $D = d - 1$.  
Combined with his proof of the Riemann Hypothesis for curves over finite fields, which asserts that each $\omega_i$ satisfies $|\omega_i| = \sqrt{q}$, this yields the bound
\[
    |S(f)| \leq (d - 1)\sqrt{q}.
\]
In 1966, Bombieri~\cite[Theorem~5]{Bom1} extended Weil’s results from polynomials to general rational functions defined over algebraic curves.

\begin{theorem}[\cite{Bom1}, Theorem~5]\label{thm:bom-exp-sum-curves}
    Let $C$ be a complete, irreducible, non-singular curve of genus $g$ over $\FF_q$, which contains $n$ points. 
    Let $f \in \FF_q(C)$ be a function on $C$ such that $f \neq u^p - u$ for all $u \in \overline{\FF_q}(C)$. 
    Let $\mathcal{P}$ be the set of poles of $f$. 
    Then the exponential sum
    \begin{equation}\label{eq:def_exp_sum_curve}
        S(f) = \sum_{\p \in C \setminus \mathcal{P}} e^{\frac{2\pi i}{p}\tr(f(\p))},
    \end{equation}
    can be expressed as
    \begin{equation}
        S(f) = \omega_1 + \cdots + \omega_D,
    \end{equation}
    where $|\omega_i| = \sqrt{q}$ for all $1 \leq i \leq D$, and 
    \[
        D \leq 2g - 2 + |\mathcal{P}| + \deg(f).
    \]
    Moreover, if $f$ has a single pole of degree relatively prime to $p$, then 
    $D = 2g - 1 + \deg(f)$.
\end{theorem}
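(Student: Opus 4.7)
The plan is to follow the classical Artin--Schreier/Weil strategy: turn the character sum into a point count on a cover of $C$, invoke the Riemann Hypothesis for curves, and control the resulting degree by Riemann--Hurwitz. First, I form the Artin--Schreier cover $\pi\colon C_f \to C$ by adjoining a variable $y$ with $y^p - y = f$. The hypothesis $f \neq u^p - u$ for any $u \in \overline{\FF_q}(C)$ ensures this is a geometrically irreducible Galois cover of degree $p$ with group $\ZZ/p\ZZ$ acting by $y \mapsto y + a$. After the standard Artin--Schreier reduction (adding a term $v^p - v$ so every pole order is coprime to $p$), the cover is \'etale away from the reduced pole set and totally (wildly) ramified at each pole, with exactly one $\FF_q$-point above every $\FF_q$-rational pole.

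Second, I translate $S(f)$ into a point count. For $\p \in C(\FF_q) \setminus \mathcal{P}$, the number of $\FF_q$-rational preimages of $\p$ in $C_f$ equals the number of $y \in \FF_q$ with $y^p - y = f(\p)$, which is $p$ when $\tr(f(\p)) = 0$ and $0$ otherwise. Using the identity
\[
\#\{y \in \FF_q : y^p - y = c\} \;=\; \sum_{a \in \FF_p} e^{\frac{2\pi i a}{p}\tr(c)},
\]
summing over $\p$, and adding the single contribution from each $\FF_q$-rational pole, yields
\[
\#C_f(\FF_q) \;=\; n + \sum_{a \in \FF_p^*} S(af).
\]

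Third, I invoke the Riemann Hypothesis for curves. The numerator $L_{C_f}(T)$ of the zeta function of $C_f$ factors, via the $\ZZ/p\ZZ$-action on characters, as $L_{C_f}(T) = L_C(T) \prod_{a \in \FF_p^*} L_a(T)$ where $L_a(T) = \prod_i (1 - \omega_{a,i} T)$. Matching $\#C_f(\FF_q) = q + 1 - \sum \alpha_i$ and $n = q + 1 - \sum \beta_j$ against the factorization identifies $S(af) = -\sum_i \omega_{a,i}$, and Weil's theorem gives $|\omega_{a,i}| = \sqrt{q}$. Absorbing the sign into the naming of the $\omega_i$ produces the claimed form $S(f) = \omega_1 + \cdots + \omega_D$ with $|\omega_i| = \sqrt{q}$.

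Finally, I bound $D = \deg L_1$ via Riemann--Hurwitz. The Galois action transitively permutes the $p - 1$ nontrivial characters, so every $L_a$ has the same degree and $D = (2g_f - 2g)/(p - 1)$. Riemann--Hurwitz for $\pi$ yields $2g_f - 2 = p(2g - 2) + \deg(\mathfrak{d})$, and for an Artin--Schreier extension the different exponent at a pole $P$ of reduced order $d_P$ coprime to $p$ is $(p - 1)(d_P + 1)\deg(P)$. Summing gives $\deg(\mathfrak{d}) = (p - 1)(\deg(f) + |\mathcal{P}|)$ in reduced form, hence $D = 2g - 2 + \deg(f) + |\mathcal{P}|$. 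Since Artin--Schreier reduction only decreases $\deg(f) + |\mathcal{P}|$ compared to an arbitrary presentation, the claimed inequality follows in general; the moreover part is the case $|\mathcal{P}| = 1$ with no reduction needed, yielding $D = 2g - 1 + \deg(f)$. The main obstacle I anticipate is rigorously justifying the character-twisted factorization of $L_{C_f}(T)$ so that $S(f)$ corresponds exactly to the roots of $L_1(T)$, together with carefully tracking the Artin--Schreier different exponent at each pole; both are standard but require care with higher ramification groups.
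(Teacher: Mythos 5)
The paper does not prove this statement; it is imported verbatim as Theorem~5 of Bombieri's paper \cite{Bom1}, so there is no internal proof to compare against. Your argument is the standard Weil--Bombieri proof and is essentially correct: form the Artin--Schreier cover $C_f$, convert $S(f)$ into a point count via $\#\{y:y^p-y=c\}=\sum_{a\in\FF_p}e^{2\pi i a\,\tr(c)/p}$, factor the zeta function of $C_f$ through the characters of $\ZZ/p\ZZ$ to identify $S(af)$ with minus the sum of the inverse roots of $L(T,\chi_a)$, apply the Riemann Hypothesis for curves, and compute $D=(2g_f-2g)/(p-1)$ via Riemann--Hurwitz and the Artin--Schreier different exponent $(p-1)(m_P+1)$. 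Two points deserve care. First, the per-character identification $S(af)=-\sum_i\omega_{a,i}$ does not follow from the single identity $\#C_f(\FF_q)=n+\sum_{a\neq 0}S(af)$ over $\FF_q$ alone; you need the same identity over every $\FF_{q^m}$ together with the definition of $L(T,\chi_a)$ as $\exp\bigl(\sum_m S_m(af)T^m/m\bigr)$ (or, equivalently, the conductor-discriminant formula giving $\deg L(T,\chi_a)=2g-2+\deg\mathfrak{f}(\chi_a)$). You flag this yourself, and it is standard. Second, the phrase ``Artin--Schreier reduction only decreases $\deg(f)+|\mathcal{P}|$'' is slightly delicate if taken to mean a single global substitution $f\mapsto f-(v^p-v)$, since a global $v$ could in principle introduce new poles; the clean route is that the reduced pole orders $m_P$ are \emph{local} invariants of the extension at each $P\in\mathcal{P}$ with $m_P\le -v_P(f)$, and the ramified places form a subset of $\mathcal{P}$, which directly yields $D\le 2g-2+\deg(f)+|\mathcal{P}|$ (with $|\mathcal{P}|$ read as the degree of the reduced pole locus, as in Bombieri). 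Neither point is a genuine gap.
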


\begin{corollary}[\cite{Bom1}]\label[corollary]{cor:bom_bound_curve}
    In the setting of \cref{thm:bom-exp-sum-curves}, we have
    \[
        |S(f)| \leq (2g - 1 + \deg(f)) \sqrt{q}.
    \]
\end{corollary}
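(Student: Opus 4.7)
The plan is to derive the corollary as an immediate consequence of \cref{thm:bom-exp-sum-curves} via the triangle inequality. The theorem provides a decomposition
\[
  S(f) = \omega_1 + \cdots + \omega_D
\]
in which each summand has magnitude exactly $\sqrt{q}$, together with an upper bound on $D$. Thus the only step is to combine the estimate on $|\omega_i|$ with a bound on the number of terms $D$.

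Concretely, I would first write
\[
  |S(f)| \;\le\; \sum_{i=1}^D |\omega_i| \;=\; D \cdot \sqrt{q},
\]
where the equality uses $|\omega_i| = \sqrt{q}$ for every $i$, as guaranteed by the theorem (this in turn is the content of the Riemann hypothesis for curves applied to the reciprocals of the zeta-function roots of the Artin--Schreier cover $y^p - y = f$). I would then invoke the theorem's bound on $D$ to conclude. In the ``moreover'' case where $f$ has a single pole of order coprime to $p$, the theorem yields the exact identity $D = 2g - 1 + \deg(f)$, immediately producing the stated estimate $|S(f)| \le (2g - 1 + \deg(f))\sqrt{q}$.

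The only substantive point worth double-checking is the interpretation of $D$ relative to $\deg(f)$ and $|\mathcal{P}|$ in the general statement of the theorem, which gives $D \le 2g - 2 + |\mathcal{P}| + \deg(f)$. In the single-pole case this collapses to the desired $2g - 1 + \deg(f)$, and this is the regime relevant to the Artin--Schreier-type curves $y^p - y = f$ arising in the analysis of TAG codes, where $f$ is typically taken with a prescribed single pole at infinity. So there is no genuine obstacle here: the proof is a one-line consequence of the decomposition and norm control provided by \cref{thm:bom-exp-sum-curves}, and the entire difficulty was already absorbed in establishing that theorem itself.
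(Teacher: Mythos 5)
Your derivation is correct and is exactly the intended one: the paper gives no separate proof of \cref{cor:bom_bound_curve} (it is quoted from Bombieri), and the only content is the triangle inequality $|S(f)|\le D\sqrt{q}$ combined with the bound on $D$ from \cref{thm:bom-exp-sum-curves}. You are also right to flag the one delicate point: the general bound $D\le 2g-2+|\mathcal{P}|+\deg(f)$ yields the stated constant $2g-1+\deg(f)$ only when $|\mathcal{P}|=1$ (otherwise one gets the slightly weaker $(2g-2+|\mathcal{P}|+\deg(f))\sqrt{q}$), which is indeed the single-pole regime relevant to the Artin--Schreier extensions used throughout the paper.
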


Note that a trivial bound $|S(f)| \leq n - |\mathcal{P}|$ follows directly from \cref{eq:def_exp_sum_curve}. 
Hence, the bound in \cref{cor:bom_bound_curve} is non-trivial only when
\begin{ineq}\label{ineq:cond_non_trivial_bom_result}
    (2g - 1 + \deg(f)) \sqrt{q} < n - |\mathcal{P}|.
\end{ineq}
\cref{cor:bom_bound_curve} was used in~\cite{trace_code_dimension} to compute the dimension of certain TAG codes defined over curves for which \cref{ineq:cond_non_trivial_bom_result} holds.

There is a major limitation of Bombieri's bound in our setting:
    If the curve $C$ satisfies $2g\sqrt{q} \geq n$, then the bound becomes trivial for any function on the curve. 
    By the Drinfeld--Vl\u{a}du\c{t} bound, \cref{ineq:Ihara_bound}, this inequality holds for all curves of sufficiently large genus. 
    Since such curves are precisely those used in AG codes constructions, Bombieri's bound is not applicable in this regime.

In \cite{MR1464542}, Vl{\u{a}}du{\c{t}} proves nontrivial bounds for the case of Hermitian and Hansen–Stichtenoth curves, showing that for certain functions \(f\) of bounded degree one has \(S(f) \le cn\) for some constant \(0 < c < 1\).  Our results provide new bounds on exponential sums of functions over curves of large genus, yielding in particular that \(S(f) = o(n)\) for functions \(f\) of degree \(o(g)\).

\begin{theorem}[exponential sums; informal]\label{thm:exp sums bound informal}
Let $\FF_q$ be a finite field of characteristic~$p$. 
Let $C$ be a ``nice'' curve over $\FF_q$ with genus~$g$ and~$n$ rational points, and let $f$ be a ``nice'' function on~$C$ of degree~$t$. 
Then,
\begin{equation*}
    S(f) = O\!\left(p^3 \sqrt{t g} \sqrt{q}\right).
\end{equation*}
\end{theorem}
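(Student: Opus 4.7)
The plan is to reduce the exponential-sum bound to the point-count theorem (\cref{thm:intro bound informal}) via the standard character-sum identity linking $S(f)$ to the number of $\FF_q$-points on the Artin--Schreier cover $y^p - y = f$. By orthogonality of the $p$ additive characters of $\FF_p$, for each $\p \in C \setminus \mathcal{P}$ we have $\sum_{c \in \FF_p} e^{2\pi i c\, \tr(f(\p))/p} = p \cdot \mathbbm{1}[\tr(f(\p)) = 0]$. Combined with the relation $n_f = p \cdot |\{\p : \tr(f(\p)) = 0\}|$ from \cref{intro:zeros and N ag} (up to an $O(|\mathcal{P}|)$ correction from points above the poles of $f$), this yields the fundamental identity
\[
  n_f - n \;=\; \sum_{c \in \FF_p^*} S(cf) \;+\; O(|\mathcal{P}|).
\]

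Next, I would apply \cref{thm:intro bound informal} to the nice function $f$ of degree $t$, obtaining the sum-level bound $|n_f - n| = O(p^2\sqrt{tg}\sqrt{q})$. Since the niceness hypothesis and the degree are preserved under scaling by $c \in \FF_p^*$, the same uniform estimate applies to each $n_{cf}$. By Bombieri's theorem (\cref{thm:bom-exp-sum-curves}), each $S(cf) = \sum_i \omega_i^{(c)}$ with $|\omega_i^{(c)}| = \sqrt{q}$, and the sums $\{S(cf)\}_{c \in \FF_p^*}$ form a Galois orbit under $\mathrm{Gal}(\mathbb{Q}(\zeta_p)/\mathbb{Q}) \cong \FF_p^*$.

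The remaining and hardest step is to isolate $|S(f)|$ from the sum of its $p-1$ Galois conjugates. I would adapt the cohomological input behind \cref{thm:intro bound informal}: instead of bounding $\dim H^1_c(X_f)$ on the full cover (which controls only the sum), bound each isotypic component $\dim H^1_c(C, \mathcal{L}_{\psi_c}(f))$ separately under the $\FF_p$-action of the Artin--Schreier cover. Since $|S(cf)| \le \dim H^1_c(C, \mathcal{L}_{\psi_c}(f)) \cdot \sqrt{q}$, a per-character dimension bound of order $p^2\sqrt{tg}$ immediately yields the claimed $|S(f)| = O(p^3\sqrt{tg}\sqrt{q})$; the extra factor of $p$ relative to \cref{thm:intro bound informal} accounts for distributing the total cohomological budget across the $p-1$ nontrivial characters. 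The main obstacle is exactly this isolation, since \cref{thm:intro bound informal} leaves open the possibility that a single $|S(cf)|$ is as large as $(p-1)$ times the sum bound; handling this cleanly requires revisiting the Hurwitz- and Swan-conductor-type inputs of \cref{thm:intro bound informal} at the level of an individual character sheaf rather than on the full cover $X_f$.
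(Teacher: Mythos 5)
Your reduction is set up correctly (orthogonality of additive characters, the identity relating $n_f-n$ to $\sum_{c\in\FF_p^\times}S(cf)$, and applying the point-count bound to each $cf$), but the plan stops exactly where the actual work lies, and you say so yourself: isolating $S(f)$ from the sum of its $p-1$ conjugates. The fix you sketch---bounding each isotypic piece $\dim H^1_c(C,\mathcal{L}_{\psi_c}(f))$ separately via Swan-conductor-type inputs---is not carried out, and it is not available inside the framework that proves \cref{thm:intro bound informal}: that theorem is established by an elementary Riemann--Roch/dimension-count argument (\cref{prop:general}, constructing an auxiliary function that vanishes at all rational places of the full cover), not by $\ell$-adic cohomology, so there is no ``cohomological budget'' to redistribute across characters. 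As written, your argument controls only the average of the conjugate sums and explicitly concedes that a single $S(cf)$ could be as large as the trivial bound; this is a genuine gap at the decisive step, not a technicality.

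The paper closes this gap with no new machinery by exploiting shift-invariance rather than trying to separate Galois conjugates. The hypotheses of the point-count bound depend only on the pole structure and degree of $f$ (\cref{prop:deg(f)=t and independence of S}), hence they hold for every additive shift $f-\alpha$ with $\alpha\in\FF_q$. Via the splitting criterion (\cref{c:splits whenever trace vanishes}) and \cref{cor: vanishing trace upper bound}, this bounds \emph{each} level set $S_\beta=\left|\{\q\in\mathbb{P}^1_F\setminus\{\p\}:\tr(f(\q))=\beta\}\right|$ from above by $\left(\tfrac1p+O(p^{2}\sqrt{\tau}\,\gamma)\right)(N-1)$, where $\tau$ and $\gamma$ are the normalized degree and genus; the partition identity $\sum_\beta S_\beta=N-1$ then turns these $p$ upper bounds into matching lower bounds (\cref{cor: distance trace code}), i.e., equidistribution of $\tr(f(\cdot))$ over $\FF_p$. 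Writing $S(f)=\sum_{\beta\in\FF_p}e^{2\pi i\beta/p}\bigl(S_\beta-\tfrac{N-1}{p}\bigr)$ and applying the triangle inequality over the $p$ level sets gives $|S(f)|=O\!\left(Np^{3}\sqrt{\tau}\,\gamma\right)$, which is the claimed $O\!\left(p^{3}\sqrt{tg}\sqrt{q}\right)$ when $\gamma=\Theta(1)$ (this is also where the extra factor of $p$ over \cref{thm:intro bound informal} really comes from). So the missing idea is: do not try to isolate the conjugates $S(cf)$; apply the same point-count theorem to the shifts $f-\alpha$ to control every trace level set individually, and the exponential-sum bound follows immediately, as in \cref{thm:exponential_sums}.
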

The reader is referred to \cref{thm:exponential_sums} for the formal statement.
Interestingly, in conjunction with \cref{thm:bom-exp-sum-curves}, this implies that the complex roots~$\omega_i$, each of absolute value~$\sqrt{q}$, cannot be all in the same direction.

Note that this bound remains meaningful even for families of curves with $g \to \infty$. 
However, since $g\sqrt{q} = \Theta(n)$, as in \cref{subsubsec:constant rate TAGs}, the function~$f$ must satisfy $t \ll g$ for the bound to be non-trivial.

\subsection{Comparison with \cite{KTY24,KTY25}}\label{sec:comparision}

In this section, we compare our results with those obtained in an earlier work by Kopparty, Ta-Shma, and Yakirevitch~\cite{KTY24}, and with a concurrent work by the same authors~\cite{KTY25}. These works focus primarily on the Hermitian function field. For a technical comparison, we instantiate~\cref{prop:Kummer bound before instantiation} with this function field.
 
\begin{theorem}\label{prop:kummer Hermitian}
    Let $d > 1$ be such that $d\mid q-1$. Let $f \in \rrs(q \mathfrak{P}_\infty)$ of degree $t = ir + j(r+1)$, such that $(d, t) = 1$, $(i+j, d) = 1$ and $j < \frac{r}{3d}$. Then,
    \[
        \frac{1}{N} \left| \left\{ \q \in \mathbb{P}^1_F \setminus \{ \p \} : \exists\, y \in \FF_q^\times,\ y^d = f(\q ) \right\}\right| \leq \frac{1}{d} + O\left ( p \sqrt{\frac{t}{q}} + p d\, \frac{t}{q} \right ).
    \]
\end{theorem}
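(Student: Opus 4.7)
The plan is to apply \cref{prop:Kummer bound before instantiation}---the Kummer-extension case of our main bound---to the Hermitian function field, and then translate the resulting point count on the Kummer cover $y^d=f$ into the claimed density estimate.

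First, I would pass from the quantity $|\{\q\neq\p:\exists\,y\in\FF_q^\times,\,y^d=f(\q)\}|$ to the number $n_f$ of $\FF_q$-rational places on the Kummer cover $C_f$ of $C$ defined by $y^d=f$. Standard splitting/ramification theory in Kummer extensions gives: for each place $\q\neq\p$ with $f(\q)\neq 0$, there are exactly $d$ places of $C_f$ above $\q$ if $f(\q)\in(\FF_q^\times)^d$ and none otherwise; each of the $Z_f\le t$ rational zero-places of $f$ contributes exactly one place above (namely $y=0$); and the pole $\p$, which under $(d,t)=1$ is totally ramified in the cover, contributes a single place. Hence
\begin{equation*}
n_f \;=\; d\cdot\bigl|\{\q\neq\p : f(\q)\in(\FF_q^\times)^d\}\bigr| \;+\; Z_f \;+\; 1,
\end{equation*}
so the quantity of interest equals $(n_f-Z_f-1)/(dN)$.

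Next, I would invoke \cref{prop:Kummer bound before instantiation} for the Hermitian function field, with its parameters $N=r^3+1$ and $g=r(r-1)/2$, to obtain an inequality $|n_f-N|\le E$. The four hypotheses in the statement are exactly what lets us apply that theorem: $d\mid q-1$ guarantees that all $d$-th roots of unity lie in $\FF_q$; $(d,t)=1$ together with $(i+j,d)=1$ ensure both geometric irreducibility of the Kummer extension $y^d=f$ (so that $f$ is not a proper power in $\overline{\FF_q}(C)$) and that the pole $\p$ is a single totally-ramified place of the cover; and the bound $j<r/(3d)$ controls the support of the zero divisor of $f$ so that the "niceness" assumption on the different divisor underlying the main theorem is satisfied. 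The resulting error $E$ splits into a principal Hasse--Weil-type contribution of order $p\sqrt{tg}\,\sqrt{q}$ and a secondary Kummer-ramification contribution of order $pd\sqrt{q}\,t$.

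Finally, substituting $N=r^3+1$, $g=r(r-1)/2$, and $q=r^2$ yields
\begin{equation*}
\frac{n_f}{dN} \;\le\; \frac{1}{d}+\frac{E}{dN} \;=\; \frac{1}{d}+O\!\left(\frac{p\sqrt{tq}\,\sqrt{q}}{d\cdot q^{3/2}}+\frac{pd\sqrt{q}\,t}{q^{3/2}}\right) \;=\; \frac{1}{d}+O\!\left(p\sqrt{\tfrac{t}{q}}+pd\cdot\tfrac{t}{q}\right),
\end{equation*}
after absorbing the lower-order term $(Z_f+1)/(dN)=O(t/q^{3/2})$ into the second error term. The main obstacle is the middle step: checking that each "niceness" hypothesis of \cref{prop:Kummer bound before instantiation} translates cleanly into one of the stated coprimality and size conditions on $(i,j,d)$, and in particular that the support bound $j<r/(3d)$ suffices to control the different divisor in the Kummer setting. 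Once this verification is in hand, the remainder of the proof is routine bookkeeping of ramified versus split places together with direct substitution of the Hermitian parameters.
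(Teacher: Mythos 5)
Your overall architecture is the same as the paper's: reduce the density of places where $f$ is a $d$-th power to the rational-place count of the Kummer cover $z^d=f$, bound that count by \cref{prop:Kummer bound before instantiation}, and substitute the Hermitian parameters $N=r^3+1$, $g=r(r-1)/2$, $q=r^2$ (so $\gamma=\Theta(1)$, $\tau=\Theta(t/q)$), which indeed yields $\tfrac1d+O\bigl(p\sqrt{t/q}+pd\,t/q\bigr)$; this is exactly how the paper proceeds (it is the Hermitian instantiation of \cref{prop:Kummer bound before instantiation}, via the splitting/ramification accounting already carried out in the proof of \cref{prop:kummer equality}). The minor inaccuracies in your reduction step are harmless for an upper bound: zero-places of $f$ need not contribute exactly one rational place each, and the proposition gives only a one-sided bound rather than $|n_f-N|\le E$, but neither affects the direction of inequality you need, and your factor-of-$d$ bookkeeping slips cancel out in the final display.

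The genuine gap is the step you yourself flag as ``the main obstacle'': verifying that the stated hypotheses on $(i,j,d)$ deliver the hypotheses of \cref{prop:Kummer bound before instantiation}, and your sketch of how that verification goes is wrong. Irreducibility of $T^d-f$ and total ramification of $\mathfrak{P}_\infty$ follow from $(d,t)=1$ alone (Stichtenoth, Prop.\ 3.7.3 / Cor.\ 3.7.4), and the genus condition $g_L\le d(g+t)$ is established in \cref{sec:Kummer} for every such $f$ with no restriction on $j$ — so neither $(i+j,d)=1$ nor $j<\tfrac{r}{3d}$ has anything to do with irreducibility, the ramification of the pole, or ``controlling the different divisor.'' Their actual role is to satisfy the hypothesis of \cref{claim_item:independent_S} of \cref{prop:deg(f)=t and independence of S}, i.e.\ the $\FF_q$-linear independence of the set $S$ on which \cref{prop:general} rests: for the Hermitian field one takes $x$ with $s=\deg(x)_\infty=r$, hence $\sigma=\tfrac{r^3}{r^3+1}\le 1$, and writes $t=ir+j(r+1)=(i+j)\,s+\varepsilon\,\tfrac{N}{q}$ with $\varepsilon=j\,\tfrac{q}{N}\le \tfrac{j}{r}$; the conditions $\gcd(i+j,d)=1$ and $j<\tfrac{r}{3d}$ translate precisely into $\gcd(\ell,d)=1$ and $|\varepsilon|<\tfrac{1}{3d}<\tfrac{\sigma-1/2}{d-1}$. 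Since this pole-number computation at $\mathfrak{P}_\infty$ is the substantive content of the theorem (the rest being, as you say, routine substitution), leaving it unverified — and pointing the verification at the wrong hypotheses — means the proposal as written does not yet prove the statement.
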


The corresponding theorem from \cite{KTY24} goes as follows.

\begin{theorem}[\cite{KTY24}]\label{thm:Kedem Hermitian}
    Assume that $r > 500$ is a prime number, $q = r^2$ and let $d$ be a prime number that divides $q - 1$.  Let $f \in \rrs(q \mathfrak{P}_\infty)$ of degree $t = ir + j(r+1)$, such that $(d, t) = 1$, $(i+j, d) = 1$, and $d < O(\sqrt{t} + \frac{q^{3/2}}{t})$. Then,
    \[
        \frac{1}{N} \left| \left\{ \q \in \mathbb{P}^1_F \setminus \{ \p \} : \exists\, y \in \FF_q^\times,\ y^d = f(\q ) \right\}\right| \leq \frac{1}{d} + O\left ( \sqrt{\frac{t}{q}} \right ).
    \]
\end{theorem}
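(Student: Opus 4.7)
The plan is to derive \cref{prop:kummer Hermitian} as a direct instantiation of the Kummer-extension version of our main Hasse--Weil-type bound (\cref{prop:Kummer bound before instantiation}) on the Hermitian function field $F/\FF_q$ defined by $y^r + y = x^{r+1}$, whose genus is $g = r(r-1)/2$ and whose number of degree-one places is $N = r^3 + 1 = q\sqrt{q}+1$.

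The first step is to translate the counting quantity into a rational-place count on the Kummer cover $F' = F(z)$ with $z^d = f$. Since $d \mid q-1$, the field $\FF_q$ contains all $d$-th roots of unity, so a degree-one place $\q$ of $F$ with $v_\q(f) = 0$ either splits completely in $F'$ into $d$ rational places (iff $f(\q)$ is a nonzero $d$-th power in $\FF_q^\times$) or is inert (contributing no degree-one place). Places with $v_\q(f) \neq 0$ are ramified; the hypothesis $\gcd(d,t) = 1$ together with the structure of $\rrs(q\pp_\infty)$ ensures that the pole place $\pp_\infty$ (and any zero place contributing nontrivially) is totally and tamely ramified, giving one rational place above. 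Writing $A$ for the quantity in the statement and $n'$ for the number of degree-one places of $F'$, one then obtains $dA \le n' + O(1)$, with the $O(1)$ accounting for the finitely many ramified places below.

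The second step is to apply \cref{prop:Kummer bound before instantiation}, which yields $|n' - n| = O\!\left(p\sqrt{\Delta\, g\, q}\right) + O\!\left(pd\, t\, \sqrt{q}\right)$ for the Kummer extension $z^d = f$, where $\Delta$ is the degree of the different. By the Kummer ramification formula, with $\gcd(d,t) = 1$ at $\pp_\infty$ and all affine zero places contributing tame ramification, $\Delta = (d-1)\bigl(1 + |\text{zero support of }f|\bigr) = O(dt)$. Inserting $g = \Theta(q)$ and $\Delta = O(dt)$ gives $\sqrt{\Delta g q} = O\!\left(r^2\sqrt{dt}\right)$, and together with $N = \Theta(q\sqrt{q})$ the per-place normalization becomes
\[
\frac{A}{N} \le \frac{n'}{dN} \le \frac{n}{dN} + O\!\left(\frac{p\sqrt{dt}}{d\sqrt{q}}\right) + O\!\left(\frac{p\,t}{\sqrt{q}}\right) = \frac{1}{d} + O\!\left(p\sqrt{\tfrac{t}{q}} + pd\,\tfrac{t}{q}\right),
\]
absorbing $\sqrt{1/d}$ into the constant and using $n/(dN) = 1/d + O(1/N)$.

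The main obstacle, and what the conditions $\gcd(i+j,d) = 1$ and $j < r/(3d)$ are in place to handle, is verifying the hypotheses of \cref{prop:Kummer bound before instantiation}. Specifically, the general statement requires a ``nice'' choice of auxiliary parameters $\beta,\mu$ controlling the local expansions of $f$ at the ramified places; the Weierstrass non-gap decomposition $t = ir + j(r+1)$ at $\pp_\infty$ means $f$ admits an expansion in a local uniformizer whose low-order terms are governed by $i+j$, and $\gcd(i+j,d) = 1$ is exactly what ensures that the cover $z^d = f$ remains geometrically integral and tamely ramified at $\pp_\infty$ after the necessary change of uniformizer. The bound $j < r/(3d)$ ensures the support of $(f)_0$ is controlled and that the auxiliary degree parameters used in the proof of \cref{prop:Kummer bound before instantiation} lie within the admissible window. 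Once these niceness checks are carried out, the bound above follows by plugging in the Hermitian numerics.
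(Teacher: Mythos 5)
There is a genuine gap: your argument proves the wrong statement. What you derive by instantiating \cref{prop:Kummer bound before instantiation} on the Hermitian function field is (essentially) the paper's own \cref{prop:kummer Hermitian}, whose error term is $O\bigl(p\sqrt{t/q}+pd\,t/q\bigr)$ and which carries the extra hypothesis $j<r/(3d)$. The statement you were asked to prove, \cref{thm:Kedem Hermitian}, is a different result (it is cited from \cite{KTY24}, not proved in this paper): there $r>500$ is a \emph{prime}, so the characteristic is $p=r=\sqrt{q}$, and the claimed error is $O\bigl(\sqrt{t/q}\bigr)$ with no factor of $p$, under the much weaker condition $d<O\bigl(\sqrt{t}+q^{3/2}/t\bigr)$ and with no restriction of the form $j<r/(3d)$. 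Plugging $p=\sqrt{q}$ into your bound gives an error term $O\bigl(\sqrt{t}+d\,t/\sqrt{q}\bigr)$; since $t$ lies in the Weierstrass semigroup generated by $r$ and $r+1$, we have $t\ge r$, so $\sqrt{t}\ge\sqrt{r}>22$ and the bound is vacuous in exactly the regime of \cref{thm:Kedem Hermitian}. This is not a fixable bookkeeping issue: the paper itself points out (in the comparison discussion) that its method incurs a dependence on the characteristic $p$ and is therefore meaningful only when $r=p^e$ with $e$ not too small, whereas \cref{thm:Kedem Hermitian} handles $e=1$ precisely because \cite{KTY24} uses a different technique (the derivatives method) that avoids the $p$-dependence.

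Two further mismatches are worth flagging even for the weaker statement you do prove. First, \cref{prop:Kummer bound before instantiation} is stated in terms of the normalized parameters $\tau,\gamma,\sigma$ and the linear-independence condition of \cref{prop:deg(f)=t and independence of S}, not in terms of the degree of the different $\Delta$ as you write; the role of $j<r/(3d)$ is to guarantee that independence condition (via the decomposition $t=(i+j)r+j$ with $s=r$), not merely to ``control the support of $(f)_0$.'' Second, your claim that the bound remains valid for the full range $d<O\bigl(\sqrt{t}+q^{3/2}/t\bigr)$ is unsupported by the machinery you invoke: the paper's Kummer bound only becomes nontrivial when $\sqrt{\tau}\,d\gamma p$ is small, which imposes genuinely different constraints on $d$.
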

There are several distinctions between these results. First, \cref{thm:Kedem Hermitian} is restricted to the case where $r$ is a prime number, reflecting the limitations of the derivatives method as used in \cite{KTY24}. In contrast, our result applies to general prime powers $r = p^e$; but the error term in our bound depends on $p$.
For the purpose of TAG codes, one can take $p$ to be a constant, particularly $p=2$ (and we are free to vary $r$) so it has no effect on our bound. Importantly, our result is meaningful for all $e > 2$ whereas \cref{thm:Kedem Hermitian} requires $e = 1$.

In~\cite{KTY25}, the authors extend the bound for the case \(d = 2\) to all functions \(f\in \mathcal{L}\left(q^{3/4}\,\mathfrak{P}_\infty\right)\) for which the polynomial \(T^{2} - f\) is absolutely irreducible, without imposing any restriction on \(\deg(f)\).  
However, this comes at the cost of replacing the error term \(O\!\left(\sqrt{\frac{t}{q}}\right)\) with \(O\!\left(\frac{t}{q^{3/4}}\right)\). Moreover, in \cite{KTY25}, the authors provide a bound on the exponential sum \cref{eq:def_exp_sum_curve} in the setting of the Hermitian curve, when \(p = 2\) and \(q\) is an arbitrary power of \(2\), for any function \(f\in \mathcal{L}\left(q^{3/4}\,\mathfrak{P}_\infty\right)\) of odd degree; the resulting error term is \(O\!\left(\tfrac{t}{q^{3/4}}\right)\), whereas our result \cref{thm:exponential_sums} yields the sharper bound \(O\!\left(\sqrt{\tfrac{t}{q}}\right)\) as $t = \Omega(\sqrt{q})$.

\section{Preliminaries}\label{sec:preliminaries}
We assume familiarity with basic background on algebraic function fields, such as places, valuations, extensions, decomposition and ramification of places, etc. A detailed exposition of the subject can be found in \cite{Stich}. In this section we recall some basic notions concerning algebraic function fields, the trace map over finite fields, $\varepsilon$-balanced codes, and code concatenation.

\subsubsection*{Algebraic function fields, valuations and places}
\label{subsec:function-fields}

Let $\FF_q$ be the finite field with $q$ elements. The \emph{rational function field} $\FF_q(x)$
is the field of rational functions in an indeterminate $x$ with coefficients in $\FF_q$. An \emph{algebraic function field} $F/\FF_q$ is a finite algebraic extension of $\FF_q(x)$.  Elements of $F$ are called \emph{functions}.

A \emph{discrete valuation} on $F$ is a map $v \colon F^\times\to \ZZ$ satisfying
$v(fg)=v(f)+v(g)$, $v(f+g)\ge\min\{v(f),v(g)\}$, and it can be extended to $F$ by setting $v(0) = \infty$ with the understanding that $\infty>n$ for all $n\in\ZZ$.
Associated to a discrete valuation $v$ is its valuation ring
$\mathcal{O}_v=\{f\in F: v(f)\ge 0\}$, a maximal ideal $\mathfrak m_v=\{f\in\mathcal{O}_v: v(f)>0\}$ and the
residue field $\FF_v=\mathcal{O}_v/\mathfrak m_v$.  A \emph{place} $\p$ of $F$ is the maximal ideal $\mathfrak m_v$ of some valuation ring $\mathcal{O}_v$.  We write $v_\p$ for the valuation corresponding to $\p$,
$\mathcal{O}_\p$ for its valuation ring and $\FF_\p$ for its residue field.

The \emph{degree} of a place $\p$ is $\deg \p := [\FF_\p:\FF_q]$. Places of degree $1$ are called
\emph{rational places}.  We denote the set of all places of $F$ by $\mathbb{P}_F$, and the set of all rational places of $F$ by $\mathbb{P}^1_F$.

\subsubsection*{Extensions of function fields and integral closures}
\label{subsec:function-field-extensions}

Let \( F/\FF_q \) be an algebraic function field, and let \( L/F \) be a finite extension of function fields. 
For each place \( \p \in \mathbb{P}_F \), we consider the behavior of \(\p\) in the extension \( L/F \).
A place \( \mathfrak{P} \in \mathbb{P}_L \) is said to \emph{lie above} \( \p \), denoted  $\pp \,|\, \p$, if 
$\mathcal{O}_\p \subseteq \mathcal{O}_{\mathfrak{P}}$,
or equivalently, if \( \mathfrak{P} \cap \mathcal{O}_\p = \p \).  
For a fixed place \( \p \) of \( F \), there are finitely many places \( \mathfrak{P}_1, \ldots, \mathfrak{P}_r \) of \( L \) lying above it.
Associated with each \(\pp\,|\,\p\) are two important parameters:
\begin{itemize}
    \item The \emph{ramification index} \( e(\mathfrak{P}|\p) \), defined to be the integer satisfying \( v_{\mathfrak{P}}(h) = e(\mathfrak{P}|\p)\, v_\p(h) \) for all \( h \in F \).
    \item The \emph{inertia degree} \( f(\mathfrak{P}|\p) := [\FF_{\mathfrak{P}} : \FF_\p] \).
\end{itemize}
These parameters satisfy the fundamental identity
\[
    \sum_{\mathfrak{P} \mid \p} e(\mathfrak{P}|\p) f(\mathfrak{P}|\p) = [L : F].
\]
The \emph{integral closure} of \( \mathcal{O}_\p \) in \( L \) is defined as
\[
    \mathcal{O}_\p' := \{\, f \in L : f \text{ is integral over } \mathcal{O}_\p \,\}.
\]
Equivalently, \( \mathcal{O}_\p' \) consists of all elements of \( L \) that satisfy a monic polynomial with coefficients in \( \mathcal{O}_\p \). 
\subsubsection*{Divisors}
\label{subsec:divisors}

A \emph{divisor} of $F$ is a formal finite $\ZZ$-linear combination
\[
G = \sum_{\p\in\mathbb P_F} n_\p\,\p,\qquad n_\p\in\ZZ,
\]
with finite support $\support(G)=\{\p\in\mathbb{P}_F:n_\p\neq0\}$. 
For two divisors $G_1,G_2$ we write $G_1\ge G_2$ if $n_\p(G_1)\ge n_\p(G_2)$
for all $\p$.
The \emph{degree} of $G$ is
$\deg(G)=\sum_\p n_\p\deg \p$. 

Every nonzero function $f\in F^\times$ determines the \emph{principal divisor}
\[
(f) := \sum_{\p\in\mathbb{P}_F} v_\p(f)\,\p.
\]
The \emph{pole divisor} (or divisor of poles) of $f$ is
\[
(f)_\infty := \sum_{\p\,:\, v_\p(f)<0} -v_\p(f)\,\p.
\]
The \emph{zero divisor} of $f$ is defined 
analogously by
\[
(f)_0 := \sum_{\p\,:\,v_{\p}(f)>0}v_{\p}(f)\p.
\]
An important connection between those divisors states that the number of zeros equals the number of poles.
Moreover, if $f\in F\setminus\FF_q$ then
\[
    \deg (f)_\infty = \deg (f)_0 = [F : \FF_q(f)].
\]
In particular, $(f)=(f)_0-(f)_\infty$ has degree 0. 
The number 
$$
\gamma = \min\{[F:\FF_q(f)] : f\in F\setminus\FF_q\}
=
\min\{\deg(f)_\infty : f\in F\setminus\FF_q\}
$$
is called the \emph{gonality} of 
$F\slash \FF_q$. 
The following claim gives a lower bound on $\gamma$.
\begin{claim}[\cite{Ben-Aroya_Ta-Shma}, Lemma 4.2]\label[claim]{clm:LB_for_the_gonality}
Let $F\slash\FF_q$ be a function field with $N_F$ rational places. Then
$$ \gamma \geq \frac{N_F}{q+1}. $$
\end{claim}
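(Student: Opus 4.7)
The plan is to fix a function $f\in F\setminus\FF_q$ that witnesses the gonality, i.e.\ with $\deg(f)_\infty=\gamma$, and then partition the set of rational places of $F$ according to the value of $f$ at that place (including the value ``$\infty$'' for poles). The goal is to bound the size of each class in the partition by $\gamma$; summing over all $q+1$ classes will then yield $N_F\le(q+1)\gamma$.

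First I would handle the finite values. For any $\alpha\in\FF_q$, the function $f-\alpha$ lies in $F\setminus\FF_q$, so it is nonzero and has a well-defined zero divisor $(f-\alpha)_0$. Since $\alpha\in\FF_q$ contributes no poles, $(f-\alpha)_\infty=(f)_\infty$, and therefore by the standard identity $\deg(f-\alpha)_0=\deg(f-\alpha)_\infty=\deg(f)_\infty=\gamma$. A rational place $\p\in\mathbb{P}^1_F$ satisfies $f(\p)=\alpha$ iff $\p$ lies in the support of $(f-\alpha)_0$, and since each rational place contributes at least $1$ to the degree of a divisor it supports,
\[
\bigl|\{\p\in\mathbb{P}^1_F : f(\p)=\alpha\}\bigr|\;\le\;\deg(f-\alpha)_0\;=\;\gamma.
\]
Second I would do the same for the poles: the number of rational places $\p$ with $v_\p(f)<0$ is at most $\deg(f)_\infty=\gamma$ by the same counting argument.

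Finally, since every rational place of $F$ either has $f$ taking some value in $\FF_q$ or is a pole of $f$, summing over the $q$ possible finite values and the single pole class gives
\[
N_F \;=\; \sum_{\alpha\in\FF_q}\bigl|\{\p\in\mathbb{P}^1_F:f(\p)=\alpha\}\bigr|\;+\;\bigl|\{\p\in\mathbb{P}^1_F:v_\p(f)<0\}\bigr|\;\le\;(q+1)\gamma,
\]
which rearranges to the desired inequality. There is no real obstacle here; the only point to be careful about is invoking the equality $\deg(g)_0=\deg(g)_\infty$ for $g\in F\setminus\FF_q$ (a consequence of the fact that a principal divisor has degree zero), and observing that $(f-\alpha)_\infty=(f)_\infty$ for constants $\alpha\in\FF_q$ so that the bound $\gamma$ is uniform across all $q$ finite classes.
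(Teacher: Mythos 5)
Your proof is correct. The paper itself gives no proof of this claim---it is quoted directly from \cite{Ben-Aroya_Ta-Shma}, Lemma~4.2---and your divisor-counting argument is a valid and standard one: each of the $q+1$ fibers of the map $\p\mapsto f(\p)$ (into $\FF_q\cup\{\infty\}$) has size at most $\gamma$, since rational places in the fiber over $\alpha$ contribute at least $1$ each to $\deg(f-\alpha)_0=\deg(f)_\infty=\gamma$, and likewise for the pole fiber. This is essentially the same as the argument in the cited reference, which phrases the fiber bound instead via the fundamental identity for the degree-$\gamma$ extension $F/\FF_q(f)$: at most $[F:\FF_q(f)]=\gamma$ places of $F$ lie over each of the $q+1$ rational places of $\FF_q(f)$. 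The two formulations are interchangeable here, and your version has the minor advantage of using only the degree-zero property of principal divisors rather than the decomposition theory of places in extensions.
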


\subsubsection*{Riemann--Roch spaces and the genus}
\label{subsec:riemann-roch}

For a divisor $G$, define the \emph{Riemann--Roch space}
\[
\rrs(G) \;=\; \{\, f\in F^\times : (f) + G \ge 0 \,\}\cup\{0\}.
\]
This is a finite-dimensional $\FF_q$-vector space; denote $\ell(G):=\dim_{\FF_q}\rrs(G)$.  The function
$\ell(\cdot)$ encodes the number of independent functions with poles bounded by a given divisor.
\begin{claim}\label[claim]{clm:UB_for_rrs_dim}
    Let $G$ be a divisor of $F$ with $\deg(G)\geq 0$. Then $\ell(G)\leq \deg(G)+1$. 
\end{claim}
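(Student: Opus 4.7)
The plan is to prove the bound by induction on $d := \deg(G)$, using at the inductive step the standard device of subtracting a single place from $G$ and bounding the resulting jump in $\ell$. For the base case $d = 0$, if $\rrs(G) = 0$ there is nothing to show, so suppose a nonzero $f \in \rrs(G)$ exists. Then $(f) + G$ is effective of degree $\deg(f) + 0 = 0$, which forces $(f) + G = 0$, i.e.\ $G = -(f)$. For any other $g \in \rrs(G)$, the function $g/f$ satisfies $(g/f) = (g) + G \geq 0$ and so has no poles, placing it in $\FF_q$. Hence $\rrs(G) = \FF_q \cdot f$ and $\ell(G) \leq 1$, as required.

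For the inductive step with $d > 0$, I would first reduce to $G \geq 0$. If $\rrs(G) = 0$ the bound is immediate; otherwise I pick $0 \neq f \in \rrs(G)$ and use the $\FF_q$-linear isomorphism $g \mapsto g/f$ between $\rrs(G)$ and $\rrs(G + (f))$, which preserves the degree while making the divisor effective. Since $d > 0$, the support of $G$ is nonempty, so I can choose a place $\p$ with $v_\p(G) \geq 1$ and pass to the strictly smaller effective divisor $G - \p$. Its degree satisfies $0 \leq \deg(G - \p) = d - \deg \p < d$ (the lower bound follows from $\deg G \geq v_\p(G)\deg \p \geq \deg \p$), so the inductive hypothesis applies to give $\ell(G - \p) \leq d - \deg \p + 1$.

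It remains to relate $\ell(G)$ to $\ell(G - \p)$. Fixing a uniformizer $t$ at $\p$, I consider the $\FF_q$-linear evaluation map
\[
\rrs(G) \longrightarrow \FF_\p, \qquad h \longmapsto t^{\,v_\p(G)}\, h \bmod \p.
\]
For $h \in \rrs(G)$ one has $v_\p(h) \geq -v_\p(G)$, so $t^{\,v_\p(G)} h \in \mathcal{O}_\p$ and the reduction is well-defined. The kernel consists of those $h$ with $v_\p(h) \geq -v_\p(G) + 1$, which is precisely $\rrs(G - \p)$, while the codomain $\FF_\p$ has $\FF_q$-dimension $\deg \p$. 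Hence $\ell(G) \leq \ell(G - \p) + \deg \p \leq d + 1$, closing the induction. The main point requiring care is the kernel identification: valuation conditions away from $\p$ are unchanged by subtracting $\p$, while at $\p$ the vanishing of $t^{\,v_\p(G)} h$ modulo $\p$ is exactly the condition $v_\p(h) \geq -v_\p(G) + 1 = -v_\p(G - \p)$.
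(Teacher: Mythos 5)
Your proof is correct, and since the paper states \cref{clm:UB_for_rrs_dim} without proof (it is the standard fact, cf.\ Stichtenoth, Lemma~1.4.11), there is nothing to diverge from: your induction—reducing to an effective divisor via $g \mapsto g/f$, peeling off one place with the evaluation map $h \mapsto t^{\,v_\p(G)}h \bmod \p$ whose kernel is $\rrs(G-\p)$ and whose codomain has $\FF_q$-dimension $\deg\p$—is exactly the textbook argument. The only implicit hypothesis is that $\FF_q$ is the full constant field of $F$ (so that pole-free functions are constants in the base case); this is the standard convention the paper also tacitly adopts, and without it the claim itself would fail (e.g.\ $\ell(0)$ would exceed $1$).
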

The \emph{genus} $g=g(F)$ of the function field $F$ is the unique nonnegative integer for which $\ell(G)\ge \deg(G)-g+1$ holds for every divisor $G$, and equality holds
for all $G$ of sufficiently large degree. A divisor $W$ is called \emph{canonical} if $\deg(W)=2g-2$ and $\ell(W)=g$. The precise relation between $\deg(G)$ and $\ell(G)$ is given by the celebrated Riemann-Roch Theorem.

\begin{theorem}[Riemann--Roch]
If $G$ is a divisor of $F$ and $W$ is a canonical divisor, then
\[
\ell(G) - \ell(W-G) \;=\; \deg(G) - g + 1.
\]
In particular, if $\deg(G)\ge 2g-1$ then
$\ell(G)=\deg(G)-g+1$.
\end{theorem}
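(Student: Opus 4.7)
This is the classical Riemann--Roch theorem, and for a complete proof I would refer the reader to Stichtenoth~\cite{Stich}. Nevertheless, the plan is to follow the adelic approach, which proceeds in three conceptual stages: (i) derive Riemann's inequality via a dimension count on adele quotients, intrinsically defining the genus $g$; (ii) introduce Weil differentials and establish that they form a one-dimensional $F$-vector space; and (iii) identify the deficit in Riemann's inequality with $\ell(W-G)$ via a pairing argument.

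First I would introduce the adele (or repartition) space $\mathbb{A}_F$ -- the restricted product of completions $F_\p$ -- together with the subspaces $\mathbb{A}_F(G) = \{\alpha : v_\p(\alpha_\p) + n_\p \geq 0 \text{ for all } \p\}$, and the diagonal embedding $F \hookrightarrow \mathbb{A}_F$. Direct verification yields $\rrs(G) = F \cap \mathbb{A}_F(G)$, while for $G_1 \geq G_2$ a place-by-place count gives $\dim_{\FF_q} \mathbb{A}_F(G_1)/\mathbb{A}_F(G_2) = \deg(G_1) - \deg(G_2)$. Setting $i(G) := \dim_{\FF_q} \mathbb{A}_F/(\mathbb{A}_F(G) + F)$ and running additivity on the exact sequence
\[
0 \to \rrs(G) \to \mathbb{A}_F(G) \oplus F \to \mathbb{A}_F(G) + F \to 0
\]
yields the preliminary identity $\ell(G) - i(G) = \deg(G) + 1 - g$, where $g$ is the (finite, by Riemann's inequality) common stabilized value of $i(G)$ for $\deg(G)$ sufficiently large. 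This is Riemann--Roch modulo interpreting $i(G)$ concretely.

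The main obstacle, and the heart of the argument, is to identify $i(G)$ with $\ell(W-G)$ for a canonical divisor $W$. For this I would introduce the space $\Omega_F$ of Weil differentials -- $\FF_q$-linear functionals on $\mathbb{A}_F$ vanishing on some $\mathbb{A}_F(G) + F$ -- which carries a natural $F$-module structure. The key delicate step is proving $\dim_F \Omega_F = 1$: given two nonzero differentials $\omega_1, \omega_2$, one applies Riemann's inequality to divisors $mG$ for large $m$ and compares the dimensions of the $F$-subspaces generated by $\{x^i \omega_1\}$ and $\{x^i \omega_2\}$ for a chosen $x \in F \setminus \FF_q$, forcing an $F$-linear dependence. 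Fixing any nonzero $\omega_0 \in \Omega_F$ and defining $W := (\omega_0)$ produces a canonical divisor, and a direct pairing argument yields the $\FF_q$-isomorphism $\rrs(W - G) \cong (\mathbb{A}_F/(\mathbb{A}_F(G)+F))^\vee$, from which $i(G) = \ell(W-G)$ follows.

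The ``in particular'' statement is then immediate. If $\deg(G) \geq 2g - 1$, then $\deg(W - G) = 2g - 2 - \deg(G) \leq -1$; any nonzero $f \in \rrs(W-G)$ would satisfy $(f) + (W-G) \geq 0$, whence $0 = \deg((f)) \geq -\deg(W-G) \geq 1$, a contradiction. Hence $\ell(W-G) = 0$, and substituting into the preliminary identity gives $\ell(G) = \deg(G) + 1 - g$.
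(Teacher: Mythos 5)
Your sketch is correct and follows exactly the standard adelic proof from Stichtenoth's book, which is precisely the reference this paper defers to for this classical result (the paper states the theorem without proof in its preliminaries). The reduction of the ``in particular'' clause to $\ell(W-G)=0$ via the degree count $\deg(W-G)\le -1$ is also the standard and correct argument.
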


\subsubsection*{Bounds on the number of rational places}
\label{subsec:hasse-weil}

Let $N_F$ denote the number of rational places of a function field $F/\FF_q$ of genus $g$. The classical Hasse--Weil bound states that
\begin{ineq}\label{ineq:Hasse-Weil}
    \left| N_F - (q+1) \right| \le 2g\sqrt{q}.
\end{ineq}
Thus $N_F$ grows at most linearly with $g$.

Considering the asymptotic regime, let
\[
    N_q(g) := \max \{ N_F\ : \ F \textnormal{ is a function field over } \FF_q \textnormal{ of genus }g\}.
\]
Ihara's constant is defined by
\[
A(q) := \limsup_{g\to\infty} \frac{N_q(g)}{g}.
\]
Drinfeld and Vl\u{a}du\c{t} proved the upper bound
\[
A(q) \le \sqrt{q}-1.
\]
Moreover, when $q$ is a square, this bound is tight: explicit towers of function fields (notably those of Garcia--Stichtenoth) achieve $N_F/g_F\to \sqrt{q}-1$. Such optimal towers are the key source of asymptotically optimal algebraic-geometric codes.

\subsubsection*{Trace map on finite fields}
\label{subsec:trace}

For a prime power $q$ and an extension of finite fields $\FF_{q^m}/\FF_q$, the \emph{trace} 
$\tr_{\FF_{q^m}/\FF_q}\colon \FF_{q^m}\to\FF_q$
is the $\FF_q$-linear map given by
\[
\tr_{\FF_{q^m}/\FF_q}(z) \;=\; z + z^{q} + z^{q^2} + \cdots + z^{q^{m-1}},\qquad z\in\FF_{q^m}.
\]
The trace is surjective and every $a\in\FF_q$ has inverse image $\tr^{-1}(a)$ of size $q^{m-1}$.
\subsubsection*{Additional coding theory preliminaries}
\label{subsec:balanced}

We say that $C$ is an $[n,k,d]_q$ code if $C$ is a linear subspace of $\FF_q^n$ of dimension $k$, and the distance of $C$ (i.e., the minimal Hamming distance between each two distinct codewords) is at least $d$. We will often choose to omit the distance parameter.
We say that an $[n,k]_2$ code $C$ is \emph{$\eps$-balanced} if for any nonzero $c \in C$, the relative Hamming weight of $c$ lies in the range $[1/2-\eps,1/2+\eps]$.\footnote{Linear $\eps$-balanced codes are essentially equivalent to \emph{$\eps$-biased sets}, a primitive of great importance in pseudorandomness (see, e.g., \cite[Section 2.2]{HH24}).}

We conclude this section by recalling the operation of code concatenation.
Letting $C_{\mathrm{out}}\subseteq\FF_q^{N}$ be an ``outer'' $[N,K,D]_q$ code, and $C_{\mathrm{in}}$ be an ``inner'' $[n_{\mathrm{in}},\log q,d_{\mathrm{in}}]_2$ code, the concatenated code $C_{\mathrm{out}}\circ C_{\mathrm{in}}$ is an $[N \cdot n,K \cdot \log_{2}q,D \cdot d]$ such that for any $x \in \FF_{q}^{K} \equiv \FF_{2}^{K \cdot \log_{2}q}$, 
\[
\left(C_{\mathrm{out}}\circ C_{\mathrm{in}}\right)_{i,j} = C_{\mathrm{in}}(C_{\mathsf{out}}(x)_{i})_{j}
\]
where $i \in [N]$ and $j \in [n]$.
A particularly useful inner code is the \emph{Hadamard code}. Given a message length $m$, set $q = 2^m$, and identify $\FF_q$ with the vector space $\FF_2^m$. For $u\in\FF_2^m$, the (binary) Hadamard codeword $\text{Had}(u)$ is the length-$2^m$ vector indexed by $v\in\FF_2^m$ with entries $\langle u,v\rangle$ (dot product modulo $2$). The Hadamard code has relative distance $\tfrac12$.

\section{TAG Codes and Function Field Extensions}\label{sec:trace codes}
In this short section, we formally define the construction of the trace code associated with an algebraic geometric code. We then review the relation of its minimum distance to the number of rational points in a certain elementary abelian $p$-extension of the underlying function field. This connection is a known result, which we include here for completeness. 

Let $p$ be a prime number, and let $p\leq \ell \leq q$ be powers of $p$ such that $\FF_p\subseteq\FF_{\ell}\subseteq\FF_q$. Throughout, $\tr$ is the trace function from $\FF_q$ down to $\FF_\ell$. Let $F/\FF_q$ be a function field of genus $g$ with $N = n+1$ rational points, one of which is denoted by $\p$, and the remaining by $\p_1, \ldots, \p_n$. For an integer $r$, we denote the set of pole numbers up to $r$ at $\p$ by
\[
\wrs_r = \{ i \in [r] : \rrs(i\p) \neq \rrs((i-1)\p) \},
\]
and for each $i \in \wrs_r$, let $b_i \in \rrs(i\p) \setminus \rrs((i-1)\p)$. We now focus on those functions whose pole order is coprime to the characteristic $p$: let $i_1 < i_2 < \cdots < i_k$ be the elements of $\wrs_r$ such that $p \nmid i$, and define
\[
B_r = \{ b_i : i \in \wrs_r \text{ such that } p \nmid i \}
\]
to be the corresponding set of functions.

With these notations, we define the $\FF_\ell$-linear code
$
\trcode \colon \FF_q^k \to \FF_\ell^n
$
as follows. Given $m = (m_1, \ldots, m_k) \in \FF_q^k$, define the function
$
f_m = \sum_{j=1}^{k} m_j b_{i_j}.
$
Then, set
\[
\trcode(m) = \left( \tr(f_m(\p_1)), \ldots, \tr(f_m(\p_n)) \right).
\]
Put differently, we identify the domain of $\trcode$ with $\spn_{\FF_q}(B_r)$, and for a given input function $f \in \spn_{\FF_q}(B_r)$, we evaluate it at the $n$ rational points $\p_1, \ldots, \p_n$, followed by applying the trace function from $\FF_q$ to $\FF_\ell$ coordinate-wise.

We now turn to analyze the distance of the code $\trcode$. As hinted above, the key fact used is that the distance is closely related to the number of rational points in a certain extension of the function field $F$. To make this connection precise, fix a function $0 \neq f \in \spn_{\FF_q}(B_r)$ and consider the field extension $L = F(z)$ defined by the equation 
\[
    z^\ell - z = f.
\]
We first note that $L/F$ is an elementary abelian $p$-extension, which is a generalization of Artin-Schreier extensions (which are the case $\ell = p$). Indeed, notice that the polynomial $a(T) = T^\ell - T$ is additive, and its set of roots is $\FF_\ell \subseteq \FF_q$. In addition, $v_\p(f)<0$ with $p\nmid v_\p(f)$, and $v_\mathfrak{q}(f) \geq 0$ for all $\q\in \mathbb{P}_F\setminus\{\p\}$. Thus, all the conditions of~\cite[Proposition~3.7.10]{Stich} are satisfied, and the extension $L / F$ is an elementary abelian $p$-extension of degree $\ell$.

\begin{claim}\label[claim]{c:splits whenever trace vanishes}
	For every $i \in [n]$, we have $\tr(f(\p_i)) = 0$ if and only if the place $\p_i$ splits completely in the extension $L/F$.
\end{claim}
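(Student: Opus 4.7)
The plan is to leverage the decomposition criterion for elementary abelian $p$-extensions (as already invoked just before the claim via \cite[Proposition~3.7.10]{Stich}) to translate the splitting condition at $\p_i$ into an algebraic condition on $f(\p_i)\in\FF_q$, and then to identify the latter with the vanishing of $\tr(f(\p_i))$ using the standard relation between the additive polynomial $a(T)=T^\ell-T$ and the trace map.

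\textbf{Step 1: $f$ is regular at $\p_i$.} Since $f \in \spn_{\FF_q}(B_r)$ and every element of $B_r$ lies in $\rrs(r\p)$, the only pole of $f$ is at $\p$. Thus $v_{\p_i}(f)\ge 0$, so $f(\p_i) \in \FF_{\p_i}=\FF_q$ is well-defined (we use here that $\p_i$ is a rational place of $F$).

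\textbf{Step 2: Splitting in $L/F$ is controlled by $T^\ell-T=f(\p_i)$.} Because $L/F$ is the elementary abelian $p$-extension defined by $z^\ell-z=f$, and the conditions of \cite[Proposition~3.7.10]{Stich} hold (as observed just above the claim), that proposition describes the decomposition of every place $\p_i \neq \p$ in $L/F$. Since $v_{\p_i}(f)\ge 0$, the place $\p_i$ is unramified in $L/F$, and it splits completely if and only if the polynomial $T^\ell - T - f(\p_i) \in \FF_q[T]$ splits into linear factors over the residue field $\FF_{\p_i} = \FF_q$. Equivalently, $\p_i$ splits completely iff $f(\p_i)$ lies in the image of the $\FF_\ell$-linear additive map
\[
a \colon \FF_q \to \FF_q, \qquad a(z) = z^\ell - z,
\]
since then a single root $z_0 \in \FF_q$ yields the full factorization $\prod_{\alpha\in\FF_\ell}(T-(z_0+\alpha))$.

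\textbf{Step 3: $\mathrm{image}(a)=\ker(\tr)$.} The map $a$ is $\FF_\ell$-linear with kernel $\{z : z^\ell=z\} = \FF_\ell$, hence $|\mathrm{image}(a)| = q/\ell$. On the other hand, writing $q=\ell^s$, for any $z\in\FF_q$,
\[
\tr(z^\ell - z) \;=\; \sum_{i=0}^{s-1}z^{\ell^{i+1}} \;-\; \sum_{i=0}^{s-1}z^{\ell^i} \;=\; 0,
\]
where the last equality uses $z^{\ell^s}=z^q=z$. Thus $\mathrm{image}(a) \subseteq \ker(\tr)$, and since $\tr$ is surjective we have $|\ker(\tr)|=q/\ell=|\mathrm{image}(a)|$, so the inclusion is an equality.

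\textbf{Step 4: Conclusion.} Combining Steps 2 and 3, the place $\p_i$ splits completely in $L/F$ if and only if $f(\p_i)\in\mathrm{image}(a)=\ker(\tr)$, i.e.\ if and only if $\tr(f(\p_i))=0$. There is no real obstacle here: the only care needed is to correctly invoke the decomposition criterion from \cite[Proposition~3.7.10]{Stich} (ensuring that its hypotheses on $v_\p(f)$ and $\gcd(v_\p(f),p)$ apply, which was already verified in the paragraph preceding the claim) and to identify the image of the Artin--Schreier-type operator with the kernel of the trace.
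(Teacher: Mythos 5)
Your proof is correct and follows essentially the same route as the paper: reduce complete splitting of $\p_i$ to whether $T^\ell-T-f(\p_i)$ splits into distinct linear factors over the residue field $\FF_q$, and then identify that condition with $\tr(f(\p_i))=0$. The only cosmetic differences are that the splitting criterion you invoke is really Kummer's Theorem (Theorem~3.3.7 in \cite{Stich}, applied to $z\in\vrc_{\p_i}$ with the separable reduction $\bar\varphi(T)=T^\ell-T-f(\p_i)$) rather than Proposition~3.7.10, which only records ramification data, and that your counting argument for $\mathrm{image}(a)=\ker(\tr)$ is a self-contained substitute for the additive Hilbert~90 used in the paper.
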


For the proof of \cref{c:splits whenever trace vanishes}, we make use of Kummer's Theorem (see Theorem~3.3.7 in~\cite{Stich}), which we cite here, with some modifications to suit our needs, for convenience.
\begin{theorem}[Kummer's Theorem]\label{thm:kummers theorem}
	Let $L/F$ be a function field extension, and fix a place $\p$ of $F$. Assume there exists an element $z \in L$ such that $L = F(z)$ and $z \in \vrc_\p$. Consider the minimal polynomial of $z$ over $F$,
	\[
	\varphi(T) = \sum_{i=0}^{d} h_i T^i,
	\]
	where we use the known fact that $h_i \in \mathcal{O}_\p$ for all $i$. Define
	\[
	\bar{\varphi}(T) = \sum_{i=0}^{d} h_i(\p) T^i \in F_\p[T],
	\]
	where $F_\p$ denotes the residue class field at $\p$. Factor $\bar{\varphi}$ over $F_\p$ as
	\[
	\bar{\varphi}(T) = \prod_{j=1}^{r} \gamma_j(T)^{\eps_j},
	\]
	where $\gamma_1, \ldots, \gamma_r$ are distinct irreducible factors and $\eps_j$ denotes the multiplicity of $\gamma_j$ in the factorization. Assuming that $\eps_1 = \cdots = \eps_r = 1$, there are exactly $r$ distinct places $\pp_1, \ldots, \pp_r$ of $L$ lying above $\p$. Moreover, for each $i \in [r]$, we have $e(\pp_i|\p) = 1$ and $f(\pp_i|\p) = \deg \gamma_i$.
\end{theorem}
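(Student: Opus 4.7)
The plan is to analyze the integral closure $B = \vrc_\p$ of $A = \mathcal{O}_\p$ in $L$, whose maximal ideals are in bijection with the places of $L$ above $\p$. For any such place $\pp$, reducing the relation $\varphi(z) = 0$ modulo the maximal ideal of $\mathcal{O}_\pp$ shows $\bar\varphi(z(\pp)) = 0$, so $z(\pp) \in F_\pp$ is a root of exactly one irreducible factor $\gamma_{j(\pp)}$. Since $F_\p[z(\pp)] \cong F_\p[T]/(\gamma_{j(\pp)})$ embeds in $F_\pp$, this yields a well-defined map $\pp \mapsto j(\pp)$ satisfying $f(\pp|\p) \geq \deg \gamma_{j(\pp)}$. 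The theorem will follow once this map is shown to be a bijection onto $\{1,\ldots,r\}$ and all ramification indices are forced to be~$1$.

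The key technical step is \emph{surjectivity}. Since $\varphi$ is the minimal polynomial of $z$, one has $A[z] \cong A[T]/(\varphi)$ and hence $A[z]/\p A[z] \cong F_\p[T]/(\bar\varphi)$. The multiplicity-one hypothesis $\eps_j = 1$ ensures $(\bar\varphi, \gamma_j) = (\gamma_j)$ in $F_\p[T]$ (as $\bar\varphi/\gamma_j$ is coprime to $\gamma_j$), so
\[
A[z]/(\p, \gamma_j(z)) \;\cong\; F_\p[T]/(\gamma_j),
\]
which is a field. Thus $\mathfrak{n}_j = (\p, \gamma_j(z))$ is a maximal ideal of $A[z]$. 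Because $B$ is integral over $A[z]$ (since $A \subseteq A[z] \subseteq B$ and $B$ is integral over $A$), the lying-over theorem produces a maximal ideal $\mathfrak{m}$ of $B$ with $\mathfrak{m} \cap A[z] = \mathfrak{n}_j$. The corresponding place $\pp$ of $L$ lies above $\p$ and satisfies $\gamma_j(z(\pp)) = 0$, so $j(\pp) = j$, establishing surjectivity.

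With surjectivity in hand, let $\pp_1,\ldots,\pp_s$ enumerate the places of $L$ above $\p$, and set $m_j = |\{i : j(\pp_i) = j\}|$. Using $\deg \bar\varphi = d = [L:F]$ and $\sum_j \deg \gamma_j = d$ (which follows from $\eps_j = 1$), the fundamental identity sandwich
\[
d = \sum_i e(\pp_i|\p)\, f(\pp_i|\p) \;\geq\; \sum_i f(\pp_i|\p) \;\geq\; \sum_i \deg \gamma_{j(\pp_i)} \;=\; \sum_j m_j \deg \gamma_j \;\geq\; \sum_j \deg \gamma_j = d
\]
forces equality throughout. Consequently $e(\pp_i|\p) = 1$ and $f(\pp_i|\p) = \deg \gamma_{j(\pp_i)}$ for every $i$, and $m_j = 1$ for every $j$. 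In particular $s = r$ and $\pp_i \mapsto j(\pp_i)$ is a bijection; relabeling yields the stated conclusion.

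The main obstacle is the surjectivity step, namely constructing a place of $L$ whose residue image realizes each prescribed factor $\gamma_j$. This is precisely where the hypothesis $\eps_j = 1$ is essential: without it, the ring $F_\p[T]/(\bar\varphi,\gamma_j)$ is a local Artinian ring with nilpotents rather than a field, a single ramified place above $\p$ could contribute several reductions associated with the same $\gamma_j$, and the tight chain of inequalities in the counting step would fail (indeed, the conclusion $e(\pp_i|\p)=1$ itself would fail). Once surjectivity is established, the rest is an essentially automatic consequence of the tightness of the fundamental degree identity.
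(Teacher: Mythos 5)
Your argument is correct, but note that the paper itself gives no proof of this statement: it is quoted (in a simplified form) from Stichtenoth's book, Theorem~3.3.7, so there is no internal proof to match against. Your route is the standard one and is sound: the identification $\mathcal{O}_\p[z]\cong\mathcal{O}_\p[T]/(\varphi)$ (valid because $\varphi$ is monic with coefficients in $\mathcal{O}_\p$), the maximality of $(\p,\gamma_j(z))$, lying-over into the integral closure $\vrc_\p$ together with the correspondence between maximal ideals of $\vrc_\p$ and places above $\p$, and finally the fundamental identity sandwich forcing $e(\pp_i|\p)=1$, $f(\pp_i|\p)=\deg\gamma_{j(\pp_i)}$ and $m_j=1$. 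Stichtenoth's own proof produces the places via Chevalley's extension theorem for places (extending a homomorphism $\mathcal{O}_\p[z]\to \overline{F_\p}$ with $z\mapsto$ a root of $\gamma_j$) rather than lying-over for the integral extension $\mathcal{O}_\p[z]\subseteq\vrc_\p$, and then closes with the same degree count; the two constructions are interchangeable, yours leaning on commutative-algebra facts about $\vrc_\p$, his on the place-extension theorem.

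One small inaccuracy in your commentary (not in the proof): the hypothesis $\eps_j=1$ plays no role in the surjectivity step. Since $\gamma_j\mid\bar\varphi$, one has $(\bar\varphi,\gamma_j)=(\gamma_j)$ in $F_\p[T]$ unconditionally, so $F_\p[T]/(\bar\varphi,\gamma_j)\cong F_\p[T]/(\gamma_j)$ is a field and $(\p,\gamma_j(z))$ is maximal even when $\eps_j>1$; indeed the general form of Kummer's theorem (without the multiplicity-one assumption) still yields, for every $j$, a place with residue root of $\gamma_j$ and $f\ge\deg\gamma_j$. The assumption $\eps_1=\cdots=\eps_r=1$ is needed exactly where you also say it is needed, namely to get $\sum_j\deg\gamma_j=\deg\bar\varphi=[L:F]$ so that the chain of inequalities closes; your claim that without it the quotient "is a local Artinian ring with nilpotents rather than a field" is incorrect as stated, though it does not affect the validity of the proof.
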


With this we are in position to prove \cref{c:splits whenever trace vanishes}.

\begin{proof}[Proof of \cref{c:splits whenever trace vanishes}]
	We recall Hilbert's Theorem 90, in its additive form, which asserts that for every $\alpha \in \FF_q$
	$$
	\tr(\alpha) = 0 \quad \iff  \quad \exists \beta \in \FF_q \,\,\,\, \alpha = \beta^\ell - \beta.
	$$
	Fix $i \in [n]$ and assume that $\tr(f(\p_i)) = 0$. By Hilbert's Theorem 90, 
	\begin{equation}\label{eq:f of p is diff of betas}
	\exists \beta \in \FF_q \quad 
	f(\p_i) = \beta^\ell - \beta.
	\end{equation}
	Observe that $z \in \vrc_{\p_i}$. Indeed, the minimal polynomial of $z$ over $F$ is  
    $$
    \varphi(T) = T^\ell-T-f\in \vr_{\p_i}[T],
    $$
    where $f\in\vr_{\p_i}$ since the only pole of $f$ is $\p$.  
    As we also have that $L = F(z)$, we are in a position to apply Kummer's Theorem. With the notations of \cref{thm:kummers theorem}, we get by \cref{eq:f of p is diff of betas} that
	\begin{align*}
	\bar{\varphi}(T) &=
	T^\ell-T-f(\p_i) \\ 
	&= T^\ell-T-(\beta^\ell - \beta) \\
	&= (T-\beta)^\ell - (T-\beta) \\
	&= \prod_{t \in \FF_\ell}(T-\beta-t).
	\end{align*}
	Hence, Kummer's Theorem implies that $\p_i$ splits completely in the extension $L/F$.

    On the other hand, if \( \tr(f(\p_i)) \neq 0 \), then Hilbert's Theorem~90 implies that the polynomial \( \bar{\varphi}(T) \) has no roots in \( \FF_q \). 
    Notice that $F_{\p_i}=\FF_q$, as $\p_i$ is a rational place. Let
    \[
	\bar{\varphi}(T) = \prod_{j=1}^{r} \gamma_j(T)^{\eps_j}
	\]
    be the factorization of $\bar{\varphi}(T)$ over $F_{\p_i}$. Then it follows that $\deg\gamma_j>1$ for all $j\in [r]$. 
    In addition, since $\gcd\left (\bar{\varphi}(T), \bar{\varphi}'(T) \right ) = 1$, the polynomial $\bar{\varphi}(T)$ is separable. Hence $\varepsilon_j= 1$ for all $j\in [r]$. Thus, by Kummer's Theorem, every place of $L$ lying above $\p_i$ must be of degree $ > 1$, and the proof is completed.
\end{proof}

As a consequence of \cref{c:splits whenever trace vanishes}, we obtain the following result. We denote by $N_L$ the number of rational points of $L$.

\begin{corollary}\label[corollary]{cor:distance of trace and as}
	The relative-distance of the code $\trcode$ defined above is 
\[
\delta = 1 - \frac{1}{\ell}\cdot\frac{N_L-1}{n}.
\]
\end{corollary}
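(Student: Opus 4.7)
The plan is to translate the question about the Hamming weight of a codeword into a rational-place count on the extension $L/F$, using \cref{c:splits whenever trace vanishes} as the bridge. Fix a nonzero $f \in \spn_{\FF_q}(B_r)$, let $L = F(z)$ with $z^\ell - z = f$, and set $z_f := |\{ i \in [n] : \tr(f(\p_i)) = 0\}|$; since $\trcode(f)$ has Hamming weight $n - z_f$, it suffices to show $z_f = (N_L - 1)/\ell$.

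First, I would apply \cref{c:splits whenever trace vanishes} to identify the zero coordinates with the $\p_i$'s that split completely in $L/F$. Because $[L:F] = \ell$, complete splitting means that precisely $\ell$ places of $L$ lie above $\p_i$, each with ramification index and inertia degree $1$, so each such $\p_i$ contributes exactly $\ell$ rational places of $L$. For the nonzero coordinates, the second half of the proof of \cref{c:splits whenever trace vanishes} already shows that every place of $L$ above $\p_i$ has inertia degree strictly larger than $1$, hence is not rational.

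Next, I would handle the distinguished place $\p$ separately. Since $v_\p(f) < 0$ with $p \nmid v_\p(f)$, the same \cite[Proposition~3.7.10]{Stich} invoked to establish the extension $L/F$ also asserts that $\p$ is totally ramified: a unique $\pp \in \mathbb{P}_L$ lies above $\p$, with ramification index $\ell$ and inertia degree $1$, whence $\deg \pp = 1$ and $\pp$ is rational. Finally, any rational place $\pp'$ of $L$ must lie above some $\q \in \mathbb{P}_F$, and since $\deg \q$ divides $\deg \pp' = 1$ we must have $\deg \q = 1$; hence all rational places of $L$ are exhausted by those above $\p, \p_1, \ldots, \p_n$. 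Tallying contributions gives $N_L = 1 + \ell z_f$, and rearranging yields the desired formula for $\delta$.

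The argument is essentially a rational-place bookkeeping exercise; all the hard function-field machinery is already packed into \cref{c:splits whenever trace vanishes} and \cite[Proposition~3.7.10]{Stich}. The only point requiring care is to remember that $\p$ contributes one extra rational place (via total ramification) on top of those coming from the completely split $\p_i$'s, and to rule out rational places of $L$ that sit over non-rational places of $F$.
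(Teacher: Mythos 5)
Your argument is correct and follows essentially the same route as the paper: invoke \cref{c:splits whenever trace vanishes} to count $\ell$ rational places of $L$ over each zero coordinate and none over the nonzero ones, add the single rational place over the totally ramified $\p$, and conclude $N_L = \ell z_f + 1$. Your extra step ruling out rational places of $L$ lying over non-rational places of $F$ is a detail the paper leaves implicit, but otherwise the two proofs coincide.
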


\begin{proof}
Choose a codeword  
$\left( \tr(f(\p_1)), \ldots, \tr(f(\p_n)) \right)$ 
of
minimal weight, and let
\[
Z = \{ i \in [n] : \tr(f(\p_i)) = 0 \}.
\]
Then the relative distance of the code is given by $\delta = 1-\frac{|Z|}{n}$. 

By \cref{c:splits whenever trace vanishes}, for every $i \in Z$ there are exactly $\ell$ rational places of $L$ lying above $\p_i$. Moreover, from the proof of the claim, for every $i\in [n]\setminus Z$, there are no rational places of $L$ lying above $\p_i$. As the remaining rational place $\p\in\mathbb{P}_F$ is totally ramified in $L\slash F$, there is one more rational place of $L$ lying above it. 
Altogether, we conclude that 
$$ N_L = |Z|\cdot \ell +1, $$
hence $|Z|=\frac{N_L-1}{\ell}$,
and the proof follows.
\end{proof}

By \cref{cor:distance of trace and as}, to lower bound the distance of $\trcode$, it suffices to upper bound the ratio $\frac{N_L-1}{n} = \frac{N_L-1}{N_F-1}$, or, essentially equivalently, the more natural ratio $\frac{N_L}{N_F}$. Note that the trivial upper bound $N_L \le \ell n + 1$ yields only the trivial lower bound $\delta \ge 0$. Obtaining a nontrivial upper bound on $N_L$ is the content of \cref{sec:bound on nl}.

\section{Our Bound on the Number of Rational Points}\label{sec:bound on nl}
In this section, we prove the following fairly general—though somewhat technical to state—proposition, which serves as the foundation for our results on elementary abelian $p$-extensions and Kummer extensions. The statement of \cref{prop:general} is formulated in a broad setting and thus involves several unspecified parameters. In \cref{cor:still quite general}, we present new normalized parameters, and rewrite the bound with these notations. In \cref{prop:After beta and mu instantiation} we instantiate these parameters, and obtain the main general result, although still quite technical.
In \cref{sec:extensions we work with} we show that a quite general family of extensions satisfy the conditions of \cref{prop:After beta and mu instantiation}, and conclude \cref{thm:extensions we work with}, which will be used in all our applications.

From here onwards, $p$ is a prime number and $q = p^u$ for some integer $u \ge 1$. Moreover, $F/\FF_q$ is a function field with $N$ rational places and genus $g$. We focus on curves in the regime 
\begin{equation}\label{eq:g_omega_N_over_sqrt_q}
    g = \Omega\left ( \frac{N}{\sqrt{q}}\right ).
\end{equation}
For the complementary regime, one can apply the Grothendieck's trace formula to bound the number of rational places over extensions of \(F\); see the discussion in ~\cref{sec:intro trace of ag}.
Let $\p$ be a rational place of $F$, and let $x \in \rrs(s \p) \setminus \rrs((s-1)\p)$ for some integer $s > 1$. Hence $\deg (x)_\infty = s$. We consider field extensions of the form $L/F$ where $L = F(z)$. We denote by 
\[
t \deff \deg(z)_\infty.
\]
The reader may think of $t$ as given as input, and the resulted bound depends on $t$. The parameter $s$ on the other hand is ``internal'' to the proof and it will be beneficiary to choose it as small as possible.\footnote{Note that by Claim \ref{clm:LB_for_the_gonality}, $s\geq \frac{N}{q+1}$.}

We will need to introduce a few more parameters. Let $1 \le m \le q$ be a power of $p$. Denote 
\[
A \deff \left\lfloor \frac{Nm}{2q} \right\rfloor
\]
and let $B \in \mathbb{N}$ be a parameter.

\begin{proposition}\label[proposition]{prop:general}
	Let $L/F$ be a function field extension of the form $L = F(z)$, and let $d \deff [L:F]$.
	Assume that $\p$ is totally ramified in $L$, and denote by $\pp$ the unique place of $L$ lying above $\p$. Let $\{b_i\}_{i \in \mathcal{I}}$ be a basis of $\rrs(A\pp)$ such that $v_\pp(b_i) = -i$. Assume that the elements
	\begin{equation}\label{eq:S}
		S = \left\{ b_i x^{jm} z^{km} : i \leq A,\, j \le B,\, k < d \right\}
	\end{equation}
	are linearly independent over $\FF_q$. Further assume that
	\begin{ineq}\label{ineq:general prop assumption}
		(A - g_L)Bd > \frac{N}{2} + dsB + (d-1)t,
	\end{ineq}
	where $g_L$ denotes the genus of $L$. Then, the number of rational points $N_L$ of $L$ satisfies
	\[
	N_L \le \left( sdB + \frac{N}{2q} + td \right) m.
	\]
\end{proposition}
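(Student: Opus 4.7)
The plan is a Stepanov-style auxiliary-polynomial argument inside $L$. Define $V := \spn_{\FF_q}(S)$; the linear-independence hypothesis yields $\dim_{\FF_q} V = |S|$. Since $\{b_i\}$ is a basis of $\rrs(A\pp)$, Riemann's inequality gives
\[
\dim_{\FF_q} V \;=\; \ell_L(A\pp)\cdot(B+1)\cdot d \;\ge\; (A - g_L + 1)(B+1)d.
\]

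First, I bound the pole divisor of any $h = \sum c_{i,j,k}\,b_i\,x^{jm}\,z^{km} \in V$. Each $b_i \in \rrs(A\pp)$ has its unique pole at $\pp$ of order at most $A$. Because $\p$ is totally ramified in $L/F$, we have $(x)_\infty^{L} = sd\cdot\pp$, so $x^{jm}$ contributes at most $Bmsd$ to the pole order at $\pp$ and nothing elsewhere. Finally, since $k < d$ and $\deg(z)_\infty^L = t$, the pole divisor of $z^{km}$ has total degree at most $(d-1)mt$. Summing the three contributions yields
\[
\deg(h)_\infty \;\le\; A + Bmsd + (d-1)mt.
\]

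The crux of the proof is producing a nonzero $h \in V$ vanishing at every rational place of $L$ distinct from $\pp$. I proceed by contradiction: suppose $N_L > (sdB + N/(2q) + td)m$. Using $A \le Nm/(2q)$, a short rearrangement shows $N_L - 1 > A + Bmsd + (d-1)mt$, i.e.\ the number of rational places strictly exceeds the bound on $\deg(h)_\infty$. I then consider the evaluation map $\phi \colon V \to \FF_q^{N_L - 1}$ sending $h$ to its values at the $N_L - 1$ rational places other than $\pp$. The technical hypothesis~\cref{ineq:general prop assumption}---namely $(A - g_L)Bd > N/2 + dsB + (d-1)t$---is calibrated precisely so that, coupled with the Riemann-Roch lower bound on $\dim V$, it forces $\ker\phi \neq 0$. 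Any nonzero $h \in \ker\phi$ then has at least $N_L - 1$ rational zeros, so $\deg(h)_0 \ge N_L - 1 > \deg(h)_\infty$, contradicting $\deg(h)_0 = \deg(h)_\infty$.

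The main obstacle I expect is establishing $\ker\phi \neq 0$: the Riemann-Roch lower bound $(A - g_L + 1)(B+1)d$ on $\dim V$ must be shown to exceed the number of vanishing conditions imposed by $\phi$. The hypothesis~\cref{ineq:general prop assumption} is tailored for this, but correctly absorbing the factor of $m$ that scales the pole contributions of $x^{jm}$ and $z^{km}$ against the $m$-free terms on the left-hand side of the hypothesis requires careful accounting of the additive slack $(A - g_L)d + (B+1)d$ coming from expanding $(A - g_L + 1)(B+1)d$. The divisor computation and the final zero count are routine divisor calculus by comparison; the entire subtlety of the argument lies in how~\cref{ineq:general prop assumption} is shaped to make the dimension-versus-constraints balance come out right.
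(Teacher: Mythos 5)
Your overall template (find a nonzero auxiliary function in $\spn_{\FF_q}(S)$ of controlled pole degree that vanishes at every rational place other than $\pp$, then compare zeros with poles) matches the paper's, and your pole-degree bound $\deg(h)_\infty \le A + sdBm + (d-1)mt$ is the one the paper uses. But the step where you produce the auxiliary function has a genuine gap that the hypothesis \cref{ineq:general prop assumption} cannot repair. You propose to take the literal evaluation map $\phi\colon V \to \FF_q^{\,N_L-1}$ and argue $\ker\phi \neq 0$ by dimension count. That requires $\dim_{\FF_q} V > N_L - 1$. However, $\dim_{\FF_q} V = \ell(A\pp)(B+1)d \le (A+1)(B+1)d$, which in the normalized parameters of \cref{eq:normalized greeks} is roughly $\mu\beta N/2$, i.e.\ on the order of $N$ when $\mu\beta \approx 1$; meanwhile $N_L$ can a priori be as large as about $dN$ (and in the regime you are assuming for contradiction it exceeds $\mu N(\sigma\beta + \tau d + \cdots)$, which already rules out $\dim V > N_L-1$ unless $\sigma < 1/2$, false in all the paper's instantiations). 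So the window you need, $(A-g_L+1)(B+1)d > N_L - 1 > A + sdBm + (d-1)mt$, is empty: imposing one $\FF_q$-linear condition per rational place is simply too expensive, and \cref{ineq:general prop assumption} is not calibrated for that comparison --- its right-hand side $\frac{N}{2} + dsB + (d-1)t$ carries no factor of $m$ and is much smaller than $N_L-1$.

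What the paper does instead --- and this is the Stepanov/Bombieri-type idea your sketch gestures at but does not actually deploy --- is to replace the evaluation map by the $\FF_p$-linear ``Frobenius-detwisting'' map
\[
\Psi\Bigl(\sum c_{i,j,k}\, b_i x^{jm} z^{km}\Bigr) \;=\; \sum c_{i,j,k}^{q/m}\, b_i^{q/m} x^{j} z^{k},
\]
whose image lies in $\rrs\bigl((N/2 + dsB)\pp + (d-1)(z)_\infty\bigr)$: dividing the exponents by $m$ collapses the pole degrees to $N/2 + dsB + (d-1)t$, \emph{without} the factor of $m$. Requiring $\Psi(h)=0$ therefore costs only about $N/2 + dsB + (d-1)t + 1$ conditions, which is exactly what \cref{ineq:general prop assumption} beats. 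The vanishing of $h$ at every rational place $\qq \neq \pp$ is then not imposed but deduced, from the pointwise identity $h(\qq) = \Psi(h)(\qq)^m$ valid at rational places (since $c^{q}=c$ for $c\in\FF_q$ and $m$ is a power of $p$). Without this twist --- or some equivalent device for making the vanishing conditions cheap --- the dimension-versus-constraints balance you describe cannot come out right.
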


	\begin{proof}[Proof of \cref{prop:general}]
        First, notice that as $\p$ is totally ramified, $\FF_q$ is the full constant field of $L$.
	Let $U$ be the $\FF_q$ vector space spanned by $S$ as given in \cref{eq:S}.
	Consider the $\FF_p$ linear map
	$
	\Psi : U \to L
	$
	defined by
	$$
	\Psi\left( \sum_{\substack{i \leq A \\ j \le B, k < d}}{\!\!\!c_{i,j,k} b_ix^{jm}z^{km}} \right) = \sum_{i,j,k} c_{i,j,k}^{q/m} b_i^{q/m} x^j z^k,
	$$
	where $c_{i,j,k} \in \FF_q$. Note that
	$$
	\mathrm{Im}\Psi \subseteq \rrs\left(\left(N/2+dsB \right)\pp +(d-1) (z)_\infty\right),
	$$
	where we used that
	$
	\v_\pp(x) = e(\pp|\p) v_\p(x) = ds.
	$
        This, together with \cref{clm:UB_for_rrs_dim}, shows that
        \[
            \dim_{\FF_p} \mathrm{Im} \Psi \leq \left (N/2 + dsB + (d-1)t + 1 \right ) \cdot [\FF_q:\FF_p].
        \]
	Per our assumption on $S$, as an $\FF_q$-vector space,
	$$
	\dim_{\FF_p} U = \dim_{\FF_q} \rrs(A \pp)\cdot (B+1)\, d \cdot [\FF_q:\FF_p] \ge (A-g_L)\, (B+1)\, d \cdot [\FF_q:\FF_p],
	$$
	where the last inequality follows by Riemann's Theorem.
	
	Therefore our assumption \cref{ineq:general prop assumption} implies that $\dim_{\FF_p} U > \dim_{\FF_p} \mathrm{Im}\Psi$ and so there exists a nonzero element 
    $h \in U$ such that $\Psi(h) = 0$. Indeed, this follows since $\Psi$ is additive. Note however that for every degree-1 place $\qq \neq \pp$ of $L$,
	$$
	0 = \Psi(h)^m(\qq) = h(\qq).
	$$
	That is, $h$ vanishes on all rational places $\qq\neq\pp$ of $L$, and so $N_L \le \deg(h)_\infty + 1$. But
	$$
	h \in U \subseteq \rrs\left(\left(A+sdBm\right)\pp + (d-1)m (z)_\infty \right), 
	$$
	hence
	$
	\deg(h)_\infty \le A+sdBm+t(d-1)m,
	$
	which concludes the proof.
	\end{proof}
	
	We define the parameters $\tau,\beta,\gamma,\mu$ and $\sigma$ which satisfies the following relations with regards to the previously defined parameters in this section so far:
	\begin{align}\label{eq:normalized greeks}
		t &= \tau \frac{ N}{\sqrt{q}} \notag\\
		B &= \beta\frac{ \sqrt{q}}{d} \notag \\
		g &= \gamma\frac{ N}{\sqrt{q}} \notag\\
		m &= \mu \sqrt{q}  \notag\\
		s &= \sigma \frac{ N}{q}
	\end{align}
    Note that per \cref{eq:g_omega_N_over_sqrt_q} we assume that $\gamma = \Omega(1)$.
	
    With this we have the following corollary, retaining the notation from \cref{prop:general}.
	\begin{corollary}\label[corollary]{cor:still quite general}
		Assume that
		\begin{ineq}\label{ineq:general gl bound}
		g_L \le d(g+t)
		\end{ineq}
		and that
		\begin{ineq}\label{ineq:cor constraint dim}
			\beta \mu \ge 
			    1 + 2(\tau+\gamma)\beta d+
			    \frac2{\sqrt{q}} \left( \tau (d-1)+
			    \sigma \beta \right)
		\end{ineq}
		hold. Assume further, as in \cref{prop:general}, that the elements in $S$ are linearly independent over $\FF_q$. Then,
		\begin{ineq}\label{ineq:general bound on ratio}
		N_L \le \mu N \!\left(\sigma \beta + \tau d + \frac1{2\sqrt{q}} \right).
		\end{ineq}
	\end{corollary}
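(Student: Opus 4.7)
The plan is to deduce the corollary directly from \cref{prop:general} in two steps: first verify that its hypothesis \cref{ineq:general prop assumption} holds under \cref{ineq:general gl bound} and \cref{ineq:cor constraint dim}, and then translate the resulting bound into the normalized variables $\tau,\beta,\gamma,\mu,\sigma$.

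For the first step, I would start with \cref{ineq:general prop assumption} and use \cref{ineq:general gl bound} to bound $A - g_L \ge A - d(g+t)$, so that it suffices to show
\[
(A - d(g+t))\,Bd \;>\; N/2 + dsB + (d-1)t.
\]
Substituting the normalizations (namely $A \ge N\mu/(2\sqrt q) - 1$ from the floor definition, $Bd = \beta\sqrt q$, $d(g+t) = d(\gamma+\tau)N/\sqrt q$, $dsB = \sigma\beta N/\sqrt q$, and $(d-1)t = (d-1)\tau N/\sqrt q$), and then dividing through by $N/2$, the inequality reduces to
\[
\beta\mu \;>\; 1 + 2\beta d(\gamma+\tau) + \tfrac{2}{\sqrt q}\bigl(\sigma\beta + \tau(d-1)\bigr) + \tfrac{2\beta\sqrt q}{N}.
\]
This is precisely \cref{ineq:cor constraint dim} except for the final error term, which arises from the $-1$ in the floor bound on $A$; relative to the leading terms this error is of order $\sqrt q / N = O(1/g)$ and is absorbed by the regime assumption \cref{eq:g_omega_N_over_sqrt_q} (together with taking $g$ sufficiently large).

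For the second step, \cref{prop:general} yields $N_L \le (sdB + N/(2q) + td)\,m$, and the substitutions collapse cleanly: $sdB\,m = (\sigma N/q)(\beta\sqrt q)(\mu\sqrt q) = \sigma\beta\mu N$ (the $d$'s cancel), $(N/(2q))\,m = \mu N/(2\sqrt q)$, and $td\,m = (\tau N/\sqrt q)\,d\,(\mu\sqrt q) = \tau d\mu N$. Adding these three contributions gives exactly the right-hand side of \cref{ineq:general bound on ratio}.

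The only delicate point I expect is the interplay between the strict inequality required in \cref{ineq:general prop assumption} and the weak inequality in \cref{ineq:cor constraint dim}, together with the loss of $1$ coming from the floor in $A$. As sketched above, both effects contribute an additive error of order $\sqrt q / N$, which the regime \cref{eq:g_omega_N_over_sqrt_q} is precisely designed to render negligible; everything else is a routine substitution and rearrangement.
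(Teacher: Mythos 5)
Your proposal is correct and follows essentially the same route as the paper: verify hypothesis \cref{ineq:general prop assumption} of \cref{prop:general} by substituting the normalized parameters from \cref{eq:normalized greeks}, then translate the resulting bound, with all substitutions matching the paper's computation exactly. The one place you are actually more careful than the paper is the floor in $A=\lfloor Nm/(2q)\rfloor$: the paper silently uses $A\ge Nm/(2q)$, whereas you track the resulting $O(\beta\sqrt{q}/N)$ error and absorb it via \cref{eq:g_omega_N_over_sqrt_q}, which is the right way to handle the (minor) mismatch between the weak inequality in \cref{ineq:cor constraint dim} and the strict one required in \cref{ineq:general prop assumption}.
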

	
	\begin{proof}
		Per our assumption, \cref{ineq:general gl bound}, and using the parameters defined in \cref{eq:normalized greeks}, we have that the LHS of \cref{ineq:general prop assumption} can be bounded from below by
		\begin{align*}
			(A-g_L) Bd &\ge \left(\frac{Nm}{2q} - d (g+t) \right) Bd \\
			&= (\mu/2 - (\tau+\gamma)d)\beta N ,
		\end{align*}
		whereas the RHS of \cref{ineq:general prop assumption} can be rewritten as
		$$
		\frac{N}{2}+dsB+(d-1)t = \left( \frac{1}{2} + \frac1{\sqrt{q}} \left( \tau (d-1) + \sigma \beta  \right) \right)N.
		$$
		Thus, \cref{ineq:general prop assumption} in \cref{prop:general} holds per our assumption, \cref{ineq:cor constraint dim}. Thus, we can invoke \cref{prop:general} to conclude that
		\begin{align*}
			N_L &\le \left( sdB + \frac{N}{2q} + td \right)m \\
			&= \mu N \left(\sigma \beta + \tau d + \frac1{2\sqrt{q}} \right).
		\end{align*}
	\end{proof}
	
	\subsection{Before the Instantiation: A Parameter Walkthrough}\label{subsec:parameter walkthrough}
	Before proceeding to instantiate the parameters $\beta, \mu$ in \cref{cor:still quite general}, we provide the reader with some informal insight into the magnitudes of the various parameters. While this discussion is not rigorous, it is intended to offer intuition about the bounds one should expect.
	
	The five parameters appearing in \cref{eq:normalized greeks} can be grouped into three distinct categories. As previously noted, the reader should view $\tau$ as the input parameter -- the bound we derive for $N_L$ depends on $\tau$, and it is the nature of this dependence that we aim to understand. The parameters $\beta$ and $\mu$ are under our control; by selecting them appropriately, we can optimize the resulting bound. In contrast, the parameters $\gamma$ and $\sigma$ are intrinsic to the function field under consideration, and for the purposes of this discussion, the reader may assume $\gamma = \sigma = 1$.

As a starting point, we ignore terms that vanish as $q \to \infty$. This simplification allows us to gain an initial understanding of how the bound depends on the dominant quantities, namely, $p$, $d$ and $\tau$. It also offers guidance on how to optimize the bound by appropriately choosing the parameters $\beta$ and $\mu$. The assumption in \cref{ineq:cor constraint dim} then simplifies to
\[
\beta \mu \ge 1 + 2(\tau + 1)\beta d.
\]
We take this to hold with equality, yielding the following relation between $\beta$ and $\mu$:
\begin{equation}\label{eq:informal mu beta}
\mu = \frac{1}{\beta} + 2(\tau + 1)d.
\end{equation}
Now, the bound guaranteed by \cref{cor:still quite general} simplifies, as discussed above, to
\[
\frac{N_L}{N} \le \mu \left(\beta + \tau d\right).
\]
Plugging in the relation between $\mu$ and $\beta$, we obtain
\begin{align*}
	\frac{N_L}{N} &\le \left( \frac{1}{\beta} + 2(1 + \tau)d \right) \left(\beta + \tau d\right).
\end{align*}

Generally, for $a,b>0$, the minimum of the function $f(x) = (x + a)\left(\frac{1}{x} + b\right)$ in $(0,\infty)$ is $(1 + \sqrt{ab})^2$, attained at $x = \sqrt{\frac{a}{b}}$. Applying this, we get that setting $\beta = \sqrt{\frac{\tau }{2(1+\tau)}}$ yields the bound
\[
\frac{N_L}{N} \le \left(1 + \sqrt{2\tau(1 + \tau)}\, d\right)^2.
\]
Analogously to the Hasse--Weil theorem, where the number of rational points on a curve lying over the projective line is expressed as $q + 1 + \mathrm{Err}$, the presence of the constant term $1$ is natural and expected.
The remaining contribution is the ``error term''. Note that a bound of $d$ on the ratio is trivial, since $L/F$ is an extension of degree $d$. Thus, the bound is meaningful only when $\tau \ll 1$, in which case we can simplify the informal (and inaccurate) bound to
\[
\frac{N_L}{N} \le \left(1 + \sqrt{2\tau}\, d\right)^2 = 1 + 2\sqrt{2\tau}\, d + 2\tau d^2.
\]

Note that the contributions of the two summands on the right-hand side to the error term are incomparable; which one dominates depends on whether $2\tau < \frac{1}{d^2}$ or not.
\paragraph{The dependence on $q$.}
Of course, this bound is not accurate — in particular, it ignores the dependence on $q$, which is actually quite important in applications, as it determines how the alphabet reduction (i.e., the minimal value of $q$ for which we start) affects the bound. Moreover, the parameter $\beta$ should be chosen such that $B$ is an integer, and $\mu$ should be chosen such that $m$ is a power of $p$, which we will take care of later.

However, the above discussion provides useful insight into how to choose $\beta$ and $\mu$. In particular, we set
$
\beta = \sqrt{\frac{\tau}{2(1+\tau)}},
$
which is well-approximated by $\sqrt{\tau/2}$ under the assumption that $\tau \ll 1$. Thus, we proceed with the choice $\beta = \sqrt{\tau/2}$. Substituting this into \cref{eq:informal mu beta}, and using again that $\tau \ll 1$, a suitable choice for $\mu$ is
\begin{equation}\label{eq:choice of mu}
    \mu = \frac{1}{\sqrt{\tau/2}} + 2d.
\end{equation}
Plugging this into \cref{ineq:general bound on ratio}, we obtain
\begin{ineq}\label{ineq:informal_bound}
    \frac{N_L}{N} \le \left(1 + \sqrt{2\tau} d \right)^2 + \frac{1}{\sqrt{q}} \left( d + \frac{1}{\sqrt{2\tau}} \right).
\end{ineq}
Notice that by \cref{clm:LB_for_the_gonality}, we have $\tau = t \frac{\sqrt{q}}{N} \geq \frac{1}{\sqrt{q}} \cdot \frac{q}{q+1}$. Thus,
\begin{align*}
    \frac{d}{\sqrt{q}} &\leq \tau d \cdot \left ( 1+q^{-1}\right ), \\
    \frac{1}{\sqrt{2q\tau}} &\leq \sqrt{\tau}.
\end{align*}
Hence, the bound in \cref{ineq:informal_bound} can be rewritten as
\begin{ineq}\label{ineq:bound O notation}
    \frac{N_L}{N} \leq 1 + O(\sqrt{\tau}d + \tau d^2).
\end{ineq}

\subsection{Instantiation of Parameters}
In this subsection, we formalize the above discussion and establish the following general result.
\begin{proposition}\label[proposition]{prop:After beta and mu instantiation}
    Let $L/F$ be a function field extension of the form $L = F(z)$, and let $d \deff [L:F]$.
	Assume that $\p$ is totally ramified in $L$, and denote by $\pp$ the unique place of $L$ lying above $\p$. Let $\{b_i\}_{i \in \mathcal{I}}$ be a basis of $\rrs(A\pp)$ such that $v_\pp(b_i) = -i$. Assume that the elements
	\begin{equation}
		S = \left\{ b_i x^{jm} z^{km} : i \leq A,\, j \le B,\, k < d \right\}
	\end{equation}
	are linearly independent over $\FF_q$. Further, assume that $g_L \le d(g+t)$.
    Then,
    \begin{equation}
        \frac{N_L}{N} \leq \sigma + O_{\sigma}\left ( \sqrt{\tau} d \gamma p + \tau d^2 \gamma p + \frac{d}{\sqrt{ q}} \left ( \frac{1}{\sqrt{\tau}} + d \right ) \gamma p \right ).
    \end{equation}
\end{proposition}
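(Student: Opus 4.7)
The plan is to derive \cref{prop:After beta and mu instantiation} from \cref{cor:still quite general} by choosing $\beta$ and $\mu$ near the saddle point identified in \cref{subsec:parameter walkthrough}, rounded to the admissible discrete values ($B\in\mathbb{N}$ and $m$ a power of $p$ with $m\le q$). Guided by that informal analysis, I set
\[
\beta_0 \deff \sqrt{\tfrac{\tau}{2\sigma(\tau+\gamma)}},\qquad \mu_0 \deff \tfrac{1}{\beta_0}+2(\tau+\gamma)d+\tfrac{2\sigma}{\sqrt{q}}+\tfrac{2\tau(d-1)}{\beta_0\sqrt{q}},
\]
so that $\beta_0\mu_0$ equals the entire right-hand side of \cref{ineq:cor constraint dim} at $\beta=\beta_0$. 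I then take $B=\lceil\beta_0\sqrt{q}/d\rceil$ and let $m\le q$ be the smallest power of $p$ with $m\ge\mu_0\sqrt{q}$, so that $\beta_0\le\beta\le\beta_0+d/\sqrt{q}$ and $\mu_0\le\mu\le p\mu_0$.

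The first step is to verify \cref{ineq:cor constraint dim} for the rounded parameters. By construction the inequality is an equality at $(\beta_0,\mu_0)$; when $\beta$ inflates by at most $d/\sqrt{q}$, the extra demand on the right-hand side is $O\bigl((\tau+\gamma)d^2/\sqrt{q}+\sigma d/q\bigr)$, which is absorbed by the slack $(\beta-\beta_0)\mu_0\ge(\beta-\beta_0)/\beta_0$ on the left (and, when needed, by the multiplicative $p$-slack in $\mu$). The second step is to expand the bound $\mu(\sigma\beta+\tau d+1/(2\sqrt{q}))$ from \cref{ineq:general bound on ratio}. At the saddle point the two dual cross-terms $2\sigma(\tau+\gamma)\mu_0\beta_0^2 d$ and $\mu_0\tau d/\beta_0$ both evaluate to $d\sqrt{2\sigma\tau(\tau+\gamma)}$, the quadratic term $2(\tau+\gamma)\mu_0\tau d^2/\beta_0$ produces $O_\sigma\bigl(\tau d^2(\tau+\gamma)\bigr)$, and $\mu_0/(2\sqrt{q})$ yields $O_\sigma\bigl(1/(\beta_0\sqrt{q})+(\tau+\gamma)d/\sqrt{q}\bigr)$. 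The rounding-induced $d/\sqrt{q}$ bump in $\beta$ contributes $\mu_0\sigma d/\sqrt{q}$, of the same order as the previous lower-order terms.

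Collecting everything and simplifying using $\gamma=\Omega(1)$ and $\tau\le O(1)$ (so $\tau+\gamma=\Theta(\gamma)$), the three stated error groups emerge: $\sqrt{\tau}\,d\gamma p$ from the balanced cross-terms, $\tau d^2\gamma p$ from the quadratic term, and $\tfrac{d\gamma p}{\sqrt{q}}\bigl(\tfrac{1}{\sqrt{\tau}}+d\bigr)$ from the $1/\sqrt{q}$ lower-order group, where the universal factor of $p$ arises from the worst-case upward rounding of $m$ to a power of $p$. The main obstacle I anticipate is purely bookkeeping: carefully tracking how the rounding losses—multiplicative $p$ in $\mu$, additive $d/\sqrt{q}$ in $\beta$—propagate through every line of the expansion, and verifying that the two $1/\sqrt{q}$ corrections pre-loaded into $\mu_0$ exactly balance the lower-order right-hand side of \cref{ineq:cor constraint dim} without inflating the leading $\sigma$ term beyond what is absorbed by $O_\sigma(\cdot)$.
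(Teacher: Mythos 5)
Your overall strategy is the same as the paper's: feed \cref{cor:still quite general} with parameters $(\beta,\mu)$ near the saddle point from \cref{subsec:parameter walkthrough}, round to admissible values, and expand. But there is a concrete gap in your discretization step. You fix $\beta_0$ and $\mu_0$ first, take $B=\lceil\beta_0\sqrt{q}/d\rceil$ (so $\beta\ge\beta_0$), and then round $m$ \emph{up} to a power of $p$, so that $\mu$ can be as large as $p\mu_0$. The constraint \cref{ineq:cor constraint dim} indeed survives this (enlarging $\mu$ only helps it), but the resulting bound from \cref{ineq:general bound on ratio} has leading term $\mu\sigma\beta \le p\,\mu_0\beta_0\,\sigma + \cdots \approx p\sigma$, not $\sigma$. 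The surplus $(p-1)\sigma$ is \emph{not} absorbed by $O_\sigma(\sqrt{\tau}\,d\gamma p)$ in the regime of interest, where $\sqrt{\tau}\,d\gamma \ll 1$; and a leading term of $p\sigma$ makes the conclusion vacuous exactly where it matters (e.g.\ $p=\ell=2$, $\sigma\le 1$ gives $N_L/N\le 2$, i.e.\ no distance bound). Your sentence attributing ``the universal factor of $p$'' only to the error groups silently assumes the main term is unaffected by the rounding of $m$, which is false with your choice of $B$.

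The fix is the one the paper uses: choose $\beta$ \emph{after} fixing the rounded $\mu$, as (essentially) the smallest value satisfying \cref{ineq:cor constraint dim} for that $\mu$, namely $\beta = \frac{1+\frac{2}{\sqrt{q}}\tau(d-1)}{\mu-2d(\tau+\gamma)-\frac{2\sigma}{\sqrt{q}}}+O(d/\sqrt{q})$. Then $\mu\beta = 1+O(\cdot)$ regardless of how far $\mu$ was rounded up, the leading term stays $\sigma$, and the factor $p$ only enters the subleading terms through $\mu$ itself (in $\mu\tau d$, $\mu/(2\sqrt{q})$, and the $O(\sigma\mu d/\sqrt{q})$ correction from making $B$ an integer). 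Equivalently, you could keep your $\beta_0$ but rescale it by $\mu_0/\mu$ before rounding $B$. As written, though, the proof does not establish the stated bound.
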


\begin{remark}
    If $\sigma \sim 1, \gamma \sim 1$ as in \cref{subsec:parameter walkthrough}, this is the same error term as in \cref{ineq:informal_bound} up to a factor of $p$.
\end{remark}

\begin{proof}
    By \cref{cor:still quite general}, if
    \begin{ineq}\label{ineq:beta inequality Artin-Schreier}
        \beta \mu \ge 
			    1 + 2(\tau+\gamma)\beta d+
			    \frac2{\sqrt{q}} \left( \tau (d-1)+
			    \sigma \beta \right),
    \end{ineq}
    then,
    \begin{ineq}\label{ineq:internal to proof upper bound before instantiation}
		N_L \le \mu N \left(\sigma \beta + \tau d + \frac1{2\sqrt{q}} \right).
    \end{ineq}
    
    We now choose values for the parameters $\beta$ and $\mu$. Similarly to \cref{subsec:parameter walkthrough}, we want to take \cref{ineq:beta inequality Artin-Schreier} with equality. But we have to make sure that $B = \beta \frac{\sqrt{q}}{d}$ is an integer. Isolating $\beta$ in \cref{ineq:beta inequality Artin-Schreier}, as long as $\mu > 2d(\tau + \gamma) + \frac{2\sigma}{\sqrt{q}}$, we get
\begin{ineq}
    \beta \geq \frac{1 + \frac{2}{\sqrt{q}}\tau(d-1)}{\mu - 2d(\tau + \gamma) - \frac{2\sigma}{\sqrt{q}}}.
\end{ineq}
To ensure that $B$ is an integer, we can choose
\[
    \beta = \frac{1 + \frac{2}{\sqrt{q}}\tau(d-1)}{\mu - 2d(\tau + \gamma) - \frac{2\sigma}{\sqrt{q}}} + O\left(\frac{d}{\sqrt{q}}\right ).
\]
Notice the inequality $\frac{1}{1-x} \leq 1 + 2x$ which holds for $0 \leq x \leq 0.5$. Under the assumption 
\begin{ineq}\label{ineq:assumption_on_mu}
    \frac{d}{\mu}(\tau + \gamma) + \frac{\sigma}{\sqrt{q}\mu} \leq \frac{1}{4},
\end{ineq}
we have
\[
    \beta \leq \frac{1}{\mu} \left ( 1 + \frac{2}{\sqrt{q}}\tau(d-1) \right )
    \left ( 1 + \frac{4d}{\mu}(\tau + \gamma) + \frac{4\sigma}{\mu \sqrt{q}} \right ) + O\left ( \frac{d}{\sqrt{q}} \right ).
\]
We want to choose $\mu$ similarly to \cref{eq:choice of mu}, but we have to make sure that $m$ is a power of $p$, and that \cref{ineq:assumption_on_mu} is satisfied. Pick $0\leq \alpha \leq 1$ and $C = C(\sigma)$ such that
\[
    \mu = \left (\frac{1}{\sqrt{\tau/2}} + 2d\right )\gamma C p^\alpha
\]
satisfies \cref{ineq:assumption_on_mu}, and such that $m$ is a power of $p$.
Instantiating those choices into \cref{ineq:internal to proof upper bound before instantiation}, we obtain
\begin{align*}
    \frac{N_L}{N} &\leq \mu \beta \sigma + \mu \left ( \tau d + \frac{1}{2\sqrt{q}} \right ) \\
    &\leq \sigma \left ( 1 + \frac{2}{\sqrt{q}}\tau(d-1) \right )
    \left ( 1 + \frac{4d}{\mu}(\tau + \gamma) + \frac{4\sigma}{\mu \sqrt{q}} \right ) + O\left ( \frac{\sigma \mu d}{\sqrt{q}} \right ) \\
    &\quad + \left(\tau d + \frac{1}{2\sqrt{q}}\right) \left (\frac{1}{\sqrt{\tau/2}} + 2d\right )\gamma C p^{\alpha}.
\end{align*}
Expanding using \cref{ineq:assumption_on_mu}, we have
\begin{align*}
    \frac{N_L}{N} &\leq \sigma \left ( 1 + \frac{4d}{\mu}(\tau + \gamma) + \frac{4\sigma}{\mu \sqrt{q}} + O \left ( \frac{\mu d}{\sqrt{q}} \right ) \right ) \\
    &\quad + C\sqrt{2\tau} d \gamma p^{\alpha} + C2\tau d^2 \gamma p^{\alpha} + C\frac{\gamma p^{\alpha}}{\sqrt{2q \tau}} + C\frac{d \gamma p^{\alpha}}{\sqrt{q}}.
\end{align*}
Now, notice that $\frac{1}{\mu} \leq \frac{\sqrt{\tau/2}}{C \gamma p^\alpha}$, and $p^\alpha \leq p$, hence
\begin{align*}
    C\sqrt{2\tau} d \gamma p^{\alpha} + C2\tau d^2 \gamma p^{\alpha} &+ C\frac{\gamma p^{\alpha}}{\sqrt{2q \tau}} + C\frac{d \gamma p^{\alpha}}{\sqrt{q}} \\
    &= O \left ( \sqrt{\tau} d \gamma p + \tau d^2 \gamma p + \frac{\mu d}{\sqrt{q}} \right ).
\end{align*}
If $\tau \geq 1$, the bound is trivial, therefore we can assume $\tau < 1$. As $\gamma = \Omega(1)$, we have
$
    \tau + \gamma = O(\gamma)
$, hence
\[
\frac{4d}{\mu}(\tau + \gamma) + \frac{4\sigma}{\mu \sqrt{q}} = O_\sigma(d\sqrt{\tau}).
\]
This implies
\begin{align*}
    \frac{N_L}{N} &\leq \sigma + O_\sigma \left ( \sqrt{\tau} d \gamma p + \tau d^2 \gamma p + \frac{\mu d}{\sqrt{q}} \right ) \\
    &= \sigma + O_{\sigma} \left (\sqrt{\tau} d \gamma p + \tau d^2 \gamma p + \frac{d}{\sqrt{ q}} \left ( \frac{1}{\sqrt{\tau}} + d \right ) \gamma p \right ).
\end{align*}
\end{proof}

\subsection{Extensions that Satisfy the Conditions of \cref{prop:After beta and mu instantiation}}\label{sec:extensions we work with}
\begin{proposition}\label[proposition]{prop:deg(f)=t and independence of S}
    Let $\p \in \mathbb{P}_F$ be a rational place, and let $f \in F$ have poles only at $\p$. Let $L/F$ be a function field extension defined by
	\[
	L = F(z) \quad \text{such that} \quad \phi(z) = f,
	\]
	where $\phi(T) \in \FF_q[T]$ is a polynomial of degree $d$. Assume that the polynomial $\phi(T) - f \in F[T]$ is irreducible. Assume that $\p$ is totally ramified in $L$, and denote by $\pp$ the unique place of $L$ lying above $\p$. Let $\{b_i\}_{i \in \mathcal{I}}$ be a basis of $\rrs(A\pp)$ such that $v_\pp(b_i) = -i$. Then,
    \begin{enumerate}
        \item $t \deff \deg_L(z)_\infty = -\v_\pp(z) = \deg_F(f)_\infty.$\label{claim_part:z_degree}
        \item If we can write $\deg_F (f)_\infty = t = \ell s + \varepsilon \frac{N}{q}$ for some $\ell$ relatively prime to $d$, and $|\varepsilon| < \frac{\sigma - 1/2}{d-1}$, then the set 
        \begin{equation}
		      S = \left\{ b_i x^{jm} z^{km} : i \leq A,\, j \le B,\, k < d \right\}
	    \end{equation}
        is linearly independent over $\FF_q$.\label{claim_item:independent_S}
    \end{enumerate}

\end{proposition}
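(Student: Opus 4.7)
The plan is to handle the two parts separately, both via $\pp$-adic valuation computations, with Part~(\ref{claim_item:independent_S}) relying on Part~(\ref{claim_part:z_degree}).

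For Part~(\ref{claim_part:z_degree}), I would first argue that $\pp$ is the unique pole of $z$ in $L$. If $\qq \in \mathbb{P}_L$ satisfies $v_\qq(z) < 0$, then the leading term $a_d z^d$ of $\phi(z)$ strictly dominates all lower-order terms in $\pp$-adic valuation, so $v_\qq(\phi(z)) = d \cdot v_\qq(z) < 0$. Since $\phi(z) = f$ and $f$'s only pole in $F$ is $\p$, we must have $\qq \mid \p$, forcing $\qq = \pp$. The same dominance, applied at $\pp$, yields $d \cdot v_\pp(z) = v_\pp(f) = e(\pp\mid\p)\, v_\p(f) = -d \cdot \deg_F(f)_\infty$. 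Since $\pp$ is rational (total ramification with rational $\p$ together with $e(\pp\mid\p) f(\pp\mid\p) = d$ forces $f(\pp\mid\p) = 1$), we conclude $\deg_L(z)_\infty = -v_\pp(z) = \deg_F(f)_\infty$.

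For Part~(\ref{claim_item:independent_S}), the strategy is to show that the $\pp$-adic valuations of the monomials in $S$ are pairwise distinct. Once this is established, $\FF_q$-linear independence is immediate: any nontrivial $\FF_q$-linear relation would have a unique term of strictly smallest valuation, which cannot be cancelled, contradicting equality with $0$. Using $v_\pp(b_i) = -i$, $v_\pp(x) = e(\pp\mid\p)\, v_\p(x) = -ds$, and $v_\pp(z) = -t$ from Part~(\ref{claim_part:z_degree}), I compute
\[
v_\pp\bigl(b_i x^{jm} z^{km}\bigr) \;=\; -\bigl(i + jmds + kmt\bigr).
\]

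Comparing two triples $(i,j,k)$ and $(i',j',k')$ that yield the same valuation, I set $a = i' - i$, $b = j - j'$, $c = k - k'$, with $|a| \le A$, $|b| \le B$, $|c| \le d-1$, and use the decomposition $t = \ell s + \varepsilon N/q$ to rewrite the equality as $a = ms(c\ell - bd) + cm\varepsilon N/q$. The argument then splits. If $c\ell - bd = 0$, then $\gcd(\ell,d)=1$ together with $|c| < d$ forces $c = 0$, hence $b = 0$, and finally $a = 0$, so the triples agree. If $c\ell - bd \neq 0$, then $|a| \ge ms - (d-1)m|\varepsilon| N/q$, while $|a| \le A \le Nm/(2q)$; substituting $s = \sigma N/q$ yields $\sigma \le 1/2 + (d-1)|\varepsilon|$, contradicting the hypothesis $|\varepsilon| < (\sigma - 1/2)/(d-1)$. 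The main delicate point---and the reason for the specific hypothesis on $\varepsilon$---is calibrating the deviation $c m\varepsilon N/q$ of $t$ from a multiple of $s$ against the $A$-budget for $|i - i'|$; once this calibration is fixed, the remainder of the argument is routine.
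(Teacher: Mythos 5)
Your proposal is correct and follows essentially the same route as the paper's proof: Part~(1) via the strict triangle inequality locating all poles of $z$ at $\pp$ and using total ramification to get $v_\pp(f)=d\,v_\p(f)$, and Part~(2) by showing the monomials in $S$ have pairwise distinct $\pp$-adic valuations through the decomposition $t=\ell s+\varepsilon N/q$, a coprimality argument for the $k$-component, and the size constraint $|i-i'|\le A\le Nm/(2q)$ against the hypothesis on $\varepsilon$. The only quibble is a sign slip in your difference equation (with $a=i'-i$, $b=j-j'$, $c=k-k'$ one gets $a=ms(bd+c\ell)+cm\varepsilon N/q$, not $ms(c\ell-bd)+\cdots$), which is immaterial since the case split on whether $d\mid c\ell$ and the ensuing bounds are unaffected.
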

\begin{proof}
    First, we prove that for every $h \in F$ we have
		$
		\deg_L(h)_\infty = d \deg_F (h)_\infty.
		$
    Indeed, since the place $\p$ is totally ramified, $\FF_q$ is the constant field of $L$ and so by Proposition 3.1.9 and Corollary 3.1.14 in~\cite{Stich} we have
		$$
		\deg_L(h)_\infty = \deg \con_{L/F} (h)_\infty = [L:F] \deg_F(h)_\infty = d \deg_F(h)_\infty.
		$$
    In particular, for $h=f$ we obtain
		$$
		d \deg_F(f)_\infty = \deg_L (f)_\infty = \deg_L(\phi(z))_\infty = d \deg_L(z)_\infty.
		$$
		We turn to prove the remaining equality appearing in \cref{claim_part:z_degree}, namely, $\deg_L(z)_\infty = -\v_\pp(z)$. As $f \in \rrs(r \p)$ it has a pole only at $\p$, and so $\v_\p(f) = -\deg_F(f)_\infty$. Consider now a pole $\qq$ of $z$. Then, $\v_\qq(z) < 0$ and so by the strict triangle inequality
		$
		\v_\qq(\phi(z)) = d \v_\qq(z) < 0.
		$
		But $\v_\qq(\phi(z)) = \v_\qq(f)$, and so $\qq$ lies over a pole of $f$. As the only pole of $f$ is $\p$ and since $\pp$ is the only place lying over $\p$, we get that $\qq = \pp$. Thus, $(z)_\infty = -\v_\pp(z) \pp$, and so
		$$
		\deg_L(z)_\infty = -\v_\pp(z) \deg \pp = -\v_\pp(z),
		$$
		where the last equality follows since $\pp\,|\,\p$ totally ramifies, hence $f(\pp|\p) = 1$ and $\deg \pp = f(\pp|\p) \deg \p = 1$. \\
        
    Next, we prove \cref{claim_item:independent_S}.
    Recall that $A = \lfloor \frac{Nm}{2q} \rfloor$. We claim that for 
    $i, i'\leq A,\  j, j' \leq B$ and $k,k' < d$ we have $(i, j, k) = (i', j', k')$ if and only if $v_\pp(b_i x^{jm} z^{km}) = v_\pp(b_{i'} x^{j'm} z^{k'm})$. This will complete the proof.
    
    Assume that $v_\pp(b_i x^{jm} z^{km}) = v_\pp(b_{i'} x^{j'm} z^{k'm})$. Denote $\Delta_i = i - i'$ and similarly for $j, k$.
    We have
    \begin{align*}
        0 = v_\pp(b_i x^{jm} z^{km}) - v_\pp(b_{i'} x^{j'm} z^{k'm}) = \Delta_i + \frac{mN}{q}\left(\sigma d \Delta_j + t \frac{q}{N} \Delta_k\right).
    \end{align*}
    Since $t = \ell s + \varepsilon \frac{N}{q}$, we can write $t \frac{q}{N} = \ell \sigma + \varepsilon$, to get
    \begin{align*}
        0 &= \Delta_i + \frac{mN}{q}(\sigma d \Delta_j + (\ell \sigma + \varepsilon) \Delta_k) \\
        &= \Delta_i + \frac{mN}{q}(\sigma(d\Delta_j + \ell \Delta_k) + \varepsilon \Delta_k),
    \end{align*}
    which is equivalent to
    \begin{equation}\label{eq:delta_i}
        -\Delta_i = \frac{mN}{q} \left(\sigma(d\Delta_j + \ell \Delta_k) + \varepsilon \Delta_k\right),
    \end{equation}
    Since $\gcd(\ell, d) = 1$ and $|\Delta_k|<d$,
    we have either $\Delta_k = 0$ or $\ell \Delta_k \not\equiv 0 \mod d$. 
    However, if $\ell \Delta_k \not\equiv 0 \mod d$, then
    \begin{align*}
        |\sigma (d \Delta_j + \ell \Delta_k)  + \varepsilon \Delta_k| \geq \sigma - (d-1) \varepsilon > \frac{1}{2}.
    \end{align*}
    Plugging this into \cref{eq:delta_i}, we get
    \[
        \frac{mN}{2q} < \left | \frac{mN}{q}\left(\sigma(d\Delta_j + \ell \Delta_k) + \varepsilon \Delta_k\right) \right | = |\Delta_i| \leq \frac{mN}{2q},
    \]
    which is a contradiction.
    Thus, $\Delta_k = 0$. 
    Now, if $\Delta_j \neq 0$, then by \cref{eq:delta_i} we have
    \[
        \frac{mN}{q} < \left | \frac{mN}{q}\sigma d \Delta_j \right | = |\Delta_i| \leq \frac{mN}{2q},
    \]
    which is again a contradiction. Thus $\Delta_j = \Delta_k = 0$, and hence $\Delta_i = 0$ which completes the proof.
\end{proof}

As a corollary, we state the main theorem which will be used in all our applications.
\begin{theorem}\label{thm:extensions we work with}
    Let $\p \in \mathbb{P}_F$ be a rational place, and let $f \in F$ have poles only at $\p$. Assume that $\deg_F(f)_\infty$ is relatively prime to $d$, and that we can write $\deg_F (f)_\infty = \ell s + \varepsilon \frac{N}{q}$ for some $\ell$ relatively prime to $d$, and $0 \leq \varepsilon < \frac{\sigma - 1/2}{d-1}$. Let $L/F$ be a function field extension defined by
	\[
	L = F(z) \quad \text{such that} \quad \phi(z) = f,
	\]
	where $\phi(T) \in \FF_q[T]$ is a polynomial of degree $d$. Assume that the polynomial $\phi(T) - f \in F[T]$ is irreducible. Assume that $\p$ is totally ramified in $L$. Further, assume that $g_L \le d(g+t)$.
    Then $\deg_F(f)_\infty = t$, and
    \begin{ineq}
        \frac{N_L}{N} \leq \sigma + O_{\sigma}\left ( \sqrt{\tau} d \gamma p + \tau d^2 \gamma p \right ).
\end{ineq}
\end{theorem}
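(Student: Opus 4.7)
The theorem is essentially a clean packaging of the two preceding results: \cref{prop:deg(f)=t and independence of S}, which handles the ``side conditions'' needed to invoke \cref{prop:After beta and mu instantiation}, and \cref{prop:After beta and mu instantiation} itself, which produces the bound. Accordingly, my plan is to (a) verify the hypotheses of \cref{prop:After beta and mu instantiation} from the hypotheses of the theorem, (b) apply it verbatim, and (c) absorb the extra error terms in its conclusion into the simpler form stated here.

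For step (a), note that all the structural hypotheses of \cref{prop:deg(f)=t and independence of S} are already given: $f$ has poles only at $\p$; $L = F(z)$ is defined by $\phi(z)=f$ with $\phi(T)-f$ irreducible over $F$; and $\p$ is totally ramified in $L$. Its \cref{claim_part:z_degree} then yields $t \deff \deg_L(z)_\infty = \deg_F(f)_\infty$, which is the first conclusion of the theorem. For \cref{claim_item:independent_S}, the hypothesis of the theorem is tailored so that the required decomposition $\deg_F(f)_\infty = \ell s + \varepsilon\,N/q$ holds with $\gcd(\ell,d)=1$ and $|\varepsilon| < (\sigma-1/2)/(d-1)$, so the set $S$ defined in \cref{prop:After beta and mu instantiation} is linearly independent over $\FF_q$. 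Combined with the assumed bound $g_L \le d(g+t)$ and total ramification of $\p$, every hypothesis of \cref{prop:After beta and mu instantiation} is now in place.

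For step (b), applying \cref{prop:After beta and mu instantiation} directly gives
\[
    \frac{N_L}{N} \;\le\; \sigma + O_\sigma\!\left( \sqrt{\tau}\,d\gamma p \;+\; \tau d^2 \gamma p \;+\; \frac{d}{\sqrt{q}}\!\left(\frac{1}{\sqrt{\tau}}+d\right)\gamma p \right).
\]
For step (c), it remains to absorb the last error term into the first two. The key observation is the gonality lower bound \cref{clm:LB_for_the_gonality}, which forces $s \ge N/(q+1)$, and therefore, using $\ell \ge 1$, $\varepsilon \ge 0$, and $N = (1+o(1))\,\sigma q/s^{-1}$ style reasoning, one gets $t \ge s \ge N/(q+1)$, i.e.\ $\tau \ge \Omega(1/\sqrt{q})$. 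Consequently,
\[
    \frac{d}{\sqrt{q\,\tau}} \;\le\; \sqrt{\tau}\,d
    \qquad\text{and}\qquad
    \frac{d^2}{\sqrt{q}} \;\le\; \tau\,d^2,
\]
so the extra term $\frac{d}{\sqrt{q}}\bigl(\frac{1}{\sqrt{\tau}}+d\bigr)\gamma p$ is dominated (up to a constant) by $\sqrt{\tau}\,d\gamma p + \tau d^2 \gamma p$, giving the stated bound.

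I do not expect genuine obstacles: the substantive work has already been done in \cref{prop:general}, \cref{cor:still quite general}, \cref{prop:After beta and mu instantiation}, and \cref{prop:deg(f)=t and independence of S}. The only delicate point is the absorption step, where one must check that the parameter regime $\tau = \Omega(1/\sqrt{q})$ is truly guaranteed by the hypotheses; this is where the gonality bound in \cref{clm:LB_for_the_gonality} is used. Everything else is direct substitution.
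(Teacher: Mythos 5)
Your proposal is correct and follows essentially the same route as the paper: establish $\deg_F(f)_\infty = t$ (the paper does this inline via the degree identity $e(\pp|\p)\deg_F(f)_\infty = d\,\deg_L(z)_\infty$ and coprimality, while you route it through \cref{prop:deg(f)=t and independence of S}, which is the same computation), apply \cref{prop:After beta and mu instantiation}, and absorb the term $\frac{d}{\sqrt{q}}(\frac{1}{\sqrt{\tau}}+d)\gamma p$ using the gonality bound \cref{clm:LB_for_the_gonality} to get $\tau = \Omega(1/\sqrt{q})$. The only cosmetic blemish is the garbled phrase in your absorption step; the clean justification is simply $t \ge \ell s \ge s \ge N/(q+1)$, exactly as the paper argues.
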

\begin{proof}
    Let $\mathfrak{P} \in \mathbb{P}_L$ be a place lying above $\p$.
    The equation $\phi(z) = f$ implies that
    \[
        e(\mathfrak{P}|\p) \cdot\deg_F(f)_\infty = \deg_L(f)_\infty = d\, \deg_L(z)_\infty = d t.
    \]
    As $\deg_F(f)_\infty$ is relatively prime to $d$, we must have $\deg_F (f)_\infty = t$ and $e(\mathfrak{P}|\p) = d$. Hence we are in a position to apply \cref{prop:After beta and mu instantiation}, to obtain
    \begin{equation*}
        \frac{N_L}{N} \leq \sigma + O_{\sigma}\left ( \sqrt{\tau} d \gamma p + \tau d^2 \gamma p + \frac{d}{\sqrt{ q}} \left ( \frac{1}{\sqrt{\tau}} + d \right ) \gamma \right ).
    \end{equation*}
    To complete the proof, we use \cref{clm:LB_for_the_gonality} to observe that 
    $$
    \tau = t \frac{\sqrt{q}}{N} = \deg_F(f)_\infty \cdot \frac{\sqrt{q}}{N} \geq \frac{1}{2\sqrt{q}}.
    $$
    Hence
\begin{align*}
    \frac{d^2}{\sqrt{q}} &\leq 2 \tau d^2, \\
    \frac{d}{\sqrt{q\tau}} &\leq 2 \sqrt{\tau} d.
\end{align*}
\end{proof}

\section{Abelian Extensions and Character Sums}\label{sec:abelian_extensions}
In this section, we instantiate the bound from \cref{thm:extensions we work with} for two families of abelian extensions. 
The first family consists of elementary abelian $p$-extensions of the form $L = F(z)$, where $z^\ell - z = f$ and $d$ is a prime power. We apply the bound to this setting to derive a lower bound on the minimum distance of TAG codes, as discussed in \cref{sec:trace codes}. Moreover, in the Artin--Schreier case (i.e., when $\ell = p$ is prime), the bound yields an upper bound on exponential sums of functions over curves. The second family we consider is Kummer extensions, namely those of the form $L = F(z)$ with $z^d = f$, where $d$ is not divisible by the characteristic $p$ of $F$.

\subsection{Elementary Abelian $p$-Extensions and Exponential Sums}\label{subsec:p-abelian extensions}
In this subsection we consider elementary abelian $p$-extensions of the form $L = F(z)$, where $z^\ell - z = f$ and $\ell$ is a prime power. These extensions satisfy the conditions of \cref{thm:extensions we work with}.
As discussed in \cref{sec:trace codes}, any non-trivial bound for the number of rational points on this type of extensions immediately yields a lower bound on the minimum distance of the corresponding TAG code. In \cref{prop:Artin-Schreier bound before instantiation}, we use this connection to obtain a lower bound for the distance of TAG codes; in \cref{cor: distance trace code} we give an argument that also yields an \emph{upper} bound. We use this in  \cref{thm:exponential_sums} to get an upper bound for exponential sums of function on curves.

Recall the setting of \cref{sec:bound on nl}. Let $\ell>1$ be a power of $p$. Let $\phi(T) = T^{\ell} - T$, let $f \in \rrs(r\p)$ and assume that $t = \deg(f)_\infty = -v_\p(f)$ is not divisible by $p$. Let $L = F(z)$ with $\phi(z) = f$. By  \cite[Proposition 3.7.10(a)]{Stich},  $L / F$ is an elementary abelian extension of exponent $p$.
By part (d) of this proposition, the prime $\p$ is totally ramified in $L$, and moreover by part (e) we have
\[
    g_L = \ell g + \frac{\ell-1}{2} ( -2 + (t + 1)) \leq \ell (g + t).
\]
Thus we are in a position to apply \cref{thm:extensions we work with} to $L / F$.
\begin{proposition}\label{prop:Artin-Schreier bound before instantiation}
    In the above setting, assume further that $f$ satisfies the condition in \cref{claim_item:independent_S} of \cref{prop:deg(f)=t and independence of S}. Then,
    \begin{ineq}
        \frac{N_L}{N} \leq \sigma + O_{\sigma}\left ( \ell \sqrt{\tau} \gamma p + \ell^2 \tau \gamma p \right ).
    \end{ineq}
\end{proposition}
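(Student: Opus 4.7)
The plan is to deduce this proposition as a direct instantiation of \cref{thm:extensions we work with} with $d := \ell$, since the hypotheses of \cref{prop:Artin-Schreier bound before instantiation} are precisely the Artin--Schreier-type setup to which that general result applies.

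First I would verify each hypothesis of \cref{thm:extensions we work with}. The function $f \in \rrs(r\p)$ has poles only at $\p$, by definition. The extension degree is $d = \deg \phi = \ell$, and $\deg_F(f)_\infty = t$ is coprime to $\ell$ since $p \nmid t$ and $\ell$ is a power of $p$. The decomposition hypothesis $\deg_F(f)_\infty = \ell_0\, s + \varepsilon\, N/q$ with $\ell_0$ coprime to $\ell$ and $0 \le \varepsilon < (\sigma - 1/2)/(\ell-1)$ (there is a mild notational clash, so I would rename the multiplier in that decomposition to avoid confusion with the extension degree) is exactly the condition of \cref{claim_item:independent_S} of \cref{prop:deg(f)=t and independence of S} that we are assuming. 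The place $\p$ is totally ramified in $L$ by part~(d) of \cite[Proposition 3.7.10]{Stich}, and the bound $g_L \le \ell(g+t)$ on the genus was computed above via part~(e) of the same proposition.

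The only remaining condition to check is irreducibility of $\phi(T) - f \in F[T]$. Since \cite[Proposition 3.7.10]{Stich} yields $[L:F] = \ell = \deg(\phi(T)-f)$, the minimal polynomial of $z$ over $F$ must coincide with $\phi(T)-f$ up to scalar, hence it is irreducible.

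With all the hypotheses verified, I would invoke \cref{thm:extensions we work with} with $d = \ell$ to obtain
\begin{equation*}
    \frac{N_L}{N} \leq \sigma + O_{\sigma}\!\left(\ell \sqrt{\tau}\, \gamma p + \ell^2 \tau\, \gamma p\right),
\end{equation*}
which is the stated bound. There is no real obstacle here: all the technical work---the construction of the auxiliary map $\Psi$, the Riemann--Roch dimension count, the choice of $\beta$ and $\mu$, and the independence argument---has already been carried out in \cref{prop:general}, \cref{prop:After beta and mu instantiation}, and \cref{prop:deg(f)=t and independence of S}. The role of this proposition is merely to record the consequence for the family of extensions most relevant to the TAG-code application and to the exponential-sum bound in \cref{thm:exponential_sums}.
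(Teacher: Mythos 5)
Your proposal is correct and follows essentially the same route as the paper: the paper's argument is exactly the preceding setup paragraph (total ramification of $\p$ and the bound $g_L \le \ell(g+t)$ via \cite[Proposition~3.7.10]{Stich}) followed by an invocation of \cref{thm:extensions we work with} with $d=\ell$, which is what you do. Your additional checks (coprimality of $t$ with $\ell$, irreducibility of $\phi(T)-f$ via $[L:F]=\ell$, and flagging the notational clash in the decomposition condition) are details the paper leaves implicit, and they are verified correctly.
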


Let $\tr$ be the trace function from $\FF_q$ to $\FF_\ell$. 
\cref{c:splits whenever trace vanishes} relates splitting places in $L / F$ to the vanishing of the trace of $f$ at these places. We use this relation together with \cref{prop:Artin-Schreier bound before instantiation} to get an upper bound for the number of vanishing traces of evaluations of $f$.
\begin{corollary}\label[corollary]{cor: vanishing trace upper bound}
    For every function $f \in \rrs(r\p)$ such that $t = \deg (f)_\infty$ is not divisible by $p$, we have
    \begin{equation*}
        \left|\left\{\mathfrak{q} \in \mathbb{P}^1_F \setminus \{\p\}: \tr(f(\mathfrak{q})) = 0 \right\}\right| = \frac{N_L - 1}{\ell}.
    \end{equation*}
    If moreover $f$ satisfies the condition in \cref{claim_item:independent_S} of \cref{prop:deg(f)=t and independence of S}, then
    \begin{ineq}\label{ineq: upper bound on trace code weight}
        \frac{1}{N - 1}\ \left|\left\{\mathfrak{q} \in \mathbb{P}^1_F \setminus \{\p\}: \tr(f(\mathfrak{q})) = 0 \right\}\right| \leq \frac{\sigma}{\ell} +  O_{\sigma} \left ( p \sqrt{\tau} \gamma + p \ell \tau \gamma\right ).
    \end{ineq}
\end{corollary}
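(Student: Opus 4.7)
The plan is to combine \cref{c:splits whenever trace vanishes} with a simple counting argument (already used in the proof of \cref{cor:distance of trace and as}) to establish the exact identity in the first assertion, and then to divide by $N-1$ and insert the upper bound from \cref{prop:Artin-Schreier bound before instantiation} to deduce the quantitative estimate.

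For the first, exact assertion, I would argue exactly as in the proof of \cref{cor:distance of trace and as}. By \cref{c:splits whenever trace vanishes}, every $\q \in \mathbb{P}^1_F \setminus \{\p\}$ with $\tr(f(\q))=0$ splits completely in $L/F$ and therefore contributes exactly $\ell = [L:F]$ rational places of $L$. Conversely, from the last paragraph of the proof of that claim, each $\q \in \mathbb{P}^1_F \setminus \{\p\}$ with $\tr(f(\q))\neq 0$ has only places of degree $>1$ lying above it in $L$, and so contributes no rational places. Finally, $\p$ is totally ramified, contributing the single rational place $\pp$. Writing $Z = \{\q \in \mathbb{P}^1_F \setminus \{\p\} : \tr(f(\q))=0\}$, this gives $N_L = \ell\,|Z| + 1$, and the first assertion follows.

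For the second assertion, under the linear-independence hypothesis I apply \cref{prop:Artin-Schreier bound before instantiation}, which yields
\[
\frac{N_L}{N} \;\leq\; \sigma + O_\sigma\!\left(\ell\sqrt{\tau}\,\gamma p + \ell^{2}\tau\gamma p\right).
\]
Using the identity from the first part and bounding $N_L-1 \leq N_L$,
\[
\frac{|Z|}{N-1} \;=\; \frac{N_L-1}{\ell(N-1)} \;\leq\; \frac{1}{\ell}\cdot \frac{N_L}{N}\cdot \frac{N}{N-1},
\]
and the factor $N/(N-1) = 1 + O(1/N)$ contributes an additive term of order $1/(\ell N)$. This correction is dominated by the leading error term: by the gonality bound of \cref{clm:LB_for_the_gonality} applied as at the end of the proof of \cref{thm:extensions we work with}, we have $\tau = \Omega(1/\sqrt{q})$, so $\sqrt{\tau}\,\gamma p = \Omega(q^{-1/4})$, which absorbs the $O(1/N)$ contribution in all interesting regimes (where $N$ is at least polynomially large in $q$). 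Dividing the bound on $N_L/N$ by $\ell$ therefore yields exactly the stated inequality $\sigma/\ell + O_\sigma(p\sqrt{\tau}\gamma + p\ell\tau\gamma)$.

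I do not foresee a genuine obstacle: the identity in the first part is pure bookkeeping, already performed in \cref{cor:distance of trace and as}, and the upper bound is a mechanical division of \cref{prop:Artin-Schreier bound before instantiation} by $\ell$. The only point requiring mild care is verifying that the $N/(N-1)$ correction is subsumed by the leading error term, which is handled via the gonality bound as above.
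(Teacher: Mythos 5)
Your argument is correct and matches the paper's intended derivation: the exact count $N_L=\ell\,|Z|+1$ via \cref{c:splits whenever trace vanishes} (exactly as in \cref{cor:distance of trace and as}), followed by dividing the bound of \cref{prop:Artin-Schreier bound before instantiation} by $\ell$. The only loose point, absorbing the $N/(N-1)$ correction, in fact needs no ``interesting regime'' caveat: since $t\ge 1$ and $\gamma=\Omega(1)$, one has $\sqrt{\tau}\,\gamma p \ge c\sqrt{\sqrt{q}/N} = \Omega(1/\sqrt{N})$, which dominates the $O(1/N)$ term unconditionally.
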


As we will see in \cref{sec:specific function fields}, for all function fields in which we instantiate our results, it will be possible to choose $x \in F$ such that $\sigma \leq 1$. Moreover, \cref{prop:deg(f)=t and independence of S} provides a condition for the set $S = S(f)$ to be linearly independent over $\FF_q$, depending only on $t$. Consequently, if the result applies to $f$, it also applies to $f - \alpha$ for all $\alpha \in \FF_q$. \\ In this case, we can also obtain a lower bound.
\begin{corollary}\label[corollary]{cor: distance trace code}
    Assume that $\sigma \leq 1.$ For every function $f \in \rrs(r\p)$ of degree not divisible by $p$, and that satisfies the condition in \cref{claim_item:independent_S} of \cref{prop:deg(f)=t and independence of S}, we have
    \begin{equation*}
        \frac{1}{N - 1}\ \left|\left\{\mathfrak{q} \in \mathbb{P}^1_F \setminus \{\p\}: \tr(f(\mathfrak{q})) = \beta \right\}\right| = \frac{1}{\ell} +  O \left ( \ell \sqrt{\tau} \gamma p \right ).
    \end{equation*}
\end{corollary}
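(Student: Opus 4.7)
The plan is to deduce the two-sided equality from the one-sided upper bound of \cref{cor: vanishing trace upper bound} by the elementary observation that shifting $f$ by a constant lets us target any trace value, combined with a pigeonhole on the partition of $\mathbb{P}^1_F\setminus\{\p\}$ into $\ell$ fibers.

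First, I would reduce to the case $\beta=0$. Since $\tr\colon\FF_q\to\FF_\ell$ is surjective, for every $\beta\in\FF_\ell$ there exists $\alpha_0\in\FF_q$ with $\tr(\alpha_0)=\beta$. Setting $g\deff f-\alpha_0$, the place divisor at $\p$ is unchanged because $\alpha_0$ is a constant: $v_\p(g)=v_\p(f)$ and $v_\q(g)\ge 0$ for $\q\neq\p$, so $(g)_\infty = t\p$ with the same $t$, and in particular $p\nmid t$ and the condition in \cref{claim_item:independent_S} of \cref{prop:deg(f)=t and independence of S} (which depends only on $t$) continues to hold for $g$. By $\FF_\ell$-linearity of the trace, $\tr(f(\q))=\beta$ iff $\tr(g(\q))=0$, so the fiber at $\beta$ has the same cardinality as the zero-trace set for $g$.

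Next, applying \cref{cor: vanishing trace upper bound} to $g$ and writing $\rho_\beta\deff\tfrac{1}{N-1}|\{\q\in\mathbb{P}^1_F\setminus\{\p\}:\tr(f(\q))=\beta\}|$, I obtain the uniform upper bound
\[
\rho_\beta\;\le\;\frac{\sigma}{\ell}+E\;\le\;\frac{1}{\ell}+E,\qquad E=O\!\left(p\sqrt{\tau}\gamma+p\ell\tau\gamma\right),
\]
where the second inequality uses the hypothesis $\sigma\le 1$. The key step is then to convert this uniform upper bound into a matching lower bound via the partition identity $\sum_{\beta\in\FF_\ell}\rho_\beta=1$. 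For any fixed $\beta$,
\[
\rho_\beta \;=\; 1-\sum_{\beta'\neq\beta}\rho_{\beta'} \;\ge\; 1-(\ell-1)\!\left(\tfrac{1}{\ell}+E\right) \;=\; \tfrac{1}{\ell}-(\ell-1)E,
\]
which, combined with the upper bound, yields $\bigl|\rho_\beta-\tfrac{1}{\ell}\bigr|\le(\ell-1)E = O\!\left(\ell p\sqrt{\tau}\gamma+\ell^2 p\tau\gamma\right)$.

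The last step is cosmetic: collapsing the two error terms into the single $O(\ell\sqrt{\tau}\gamma p)$ claimed in the statement. If $\ell\sqrt{\tau}\ge 1$, then $\ell\sqrt{\tau}\gamma p\ge 1$ dominates the trivial estimate $|\rho_\beta-\tfrac{1}{\ell}|\le 1$, so the claim is vacuous. Otherwise $\ell\sqrt{\tau}<1$, which is equivalent to $\ell^2\tau\le\ell\sqrt{\tau}$, so the term $\ell^2 p\tau\gamma$ is absorbed into $\ell p\sqrt{\tau}\gamma$. The main "obstacle" is really only this bookkeeping to reconcile the two error terms with the form asserted in the corollary; the substantive input, \cref{cor: vanishing trace upper bound}, is applied as a black box (once per coset) and the rest is a one-line counting argument.
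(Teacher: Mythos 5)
Your proposal is correct and follows essentially the same route as the paper: shift $f$ by a constant of the desired trace (justified, as the paper also notes, because the independence condition depends only on $t$), apply \cref{cor: vanishing trace upper bound} to get a uniform upper bound with $\sigma\le 1$, convert it to a matching lower bound via the partition over the $\ell$ trace fibers, and absorb the $\ell^2\tau$ term into $\ell\sqrt{\tau}$ by splitting on whether $\ell\sqrt{\tau}\ge 1$. No gaps.
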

\begin{proof}
    Let $\beta \in \FF_\ell$. Since the map $\tr: \FF_q \to \FF_\ell$ is onto, there exists some $\gamma \in \FF_q$ such that $\tr(\gamma) = \beta$.
    Applying \cref{cor: vanishing trace upper bound} to $f - \gamma$, we obtain
\[
    \frac{1}{N - 1}\ \left|\left\{\mathfrak{q} \in \mathbb{P}^1_F \setminus \{\p\}: \tr(f(\mathfrak{q})) = \beta \right\}\right| \leq \frac{1}{\ell} +   O\left ( \sqrt{\tau} \gamma p + \ell \tau \gamma p \right ).
\]
    For $\alpha \in \FF_\ell$, let
\[
    \mathcal{S}_\alpha = \left|\left\{\mathfrak{q} \in \mathbb{P}^1_F \setminus \{\p\}: \tr(f(\mathfrak{q})) = \alpha \right\}\right|.
\]
Since
\[
    \sum_{\alpha \in \FF_\ell} \mathcal{S}_\alpha = N - 1,
\]
we get
\[
    1 - \frac{\mathcal{S}_\beta}{N - 1} = \frac{1}{N-1}\sum_{\beta \neq \alpha \in \FF_\ell} \mathcal{S}_\alpha \leq 1 - \frac{1}{\ell} + O \left ( \ell \sqrt{\tau} \gamma p + \ell^2 \tau \gamma p \right ).
\]
Thus,
\[
    \frac{1}{\ell} - O \left ( \ell \sqrt{\tau} \gamma p + \ell^2 \tau \gamma p \right ) \leq \frac{\mathcal{S}_\beta}{N-1}.
\]
To complete the proof, notice that if $\sqrt{\tau}\ell \geq 1$ the result is trivial, and otherwise $\tau \ell^2<\sqrt{\tau}\ell < 1$.
\end{proof}
In \cref{sec:specific function fields} we use these results to construct and analyze TAG codes on some curves.

Lastly, from the special case of Artin-Schreier extensions, i.e. when $\ell=p$, we conclude an upper bound for the exponential sums arising from these functions.
\begin{theorem}\label{thm:exponential_sums}
    Let $\ell = p$. Assume that $\sigma \leq 1.$ For every function $f \in \rrs(r\p)$ of degree not divisible by $p$, and that satisfies the condition in \cref{claim_item:independent_S} of \cref{prop:deg(f)=t and independence of S}, we have
\[
    \frac{1}{N}\,\Bigg|\sum_{\q \in \mathbb{P}^1_F \setminus \{ \p \} } e^{\frac{2\pi i}{p}\tr ( f (\q))}\Bigg| = O\left (\sqrt{\tau} \gamma p^3 \right ).
\]
\end{theorem}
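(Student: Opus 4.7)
The plan is to derive the exponential sum bound as an essentially immediate consequence of \cref{cor: distance trace code}, by partitioning the sum according to the value of $\tr(f(\q))$ and invoking the standard orthogonality of $p$-th roots of unity. Let $\omega = e^{2\pi i/p}$, and for each $\beta \in \FF_p$ define the level set count
\[
    \mathcal{S}_\beta \;=\; \left|\left\{\q \in \mathbb{P}^1_F \setminus \{\p\}\,:\, \tr(f(\q)) = \beta\right\}\right|.
\]
Then the exponential sum can be rewritten as
\[
    \sum_{\q \in \mathbb{P}^1_F \setminus \{\p\}} e^{(2\pi i/p)\,\tr(f(\q))} \;=\; \sum_{\beta \in \FF_p} \mathcal{S}_\beta\, \omega^{\beta}.
\]

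Next, I would use the fact that $\sum_{\beta \in \FF_p} \omega^{\beta} = 0$ to eliminate the main term. Writing $\mathcal{S}_\beta = (N-1)/p + E_\beta$, \cref{cor: distance trace code} applied with $\ell = p$ (which is legitimate since all of its hypotheses hold by assumption of the theorem) yields
\[
    |E_\beta| \;\leq\; (N-1)\cdot O\!\left(p^{2}\sqrt{\tau}\,\gamma\right)
\]
for every $\beta \in \FF_p$. Then
\[
    \sum_{\beta \in \FF_p} \mathcal{S}_\beta\, \omega^{\beta}
    \;=\; \frac{N-1}{p}\sum_{\beta \in \FF_p} \omega^{\beta} \;+\; \sum_{\beta \in \FF_p} E_\beta\, \omega^{\beta}
    \;=\; \sum_{\beta \in \FF_p} E_\beta\, \omega^{\beta}.
\]

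Finally, by the triangle inequality and the fact that $|\omega^\beta| = 1$,
\[
    \left|\sum_{\q \in \mathbb{P}^1_F \setminus \{\p\}} e^{(2\pi i/p)\,\tr(f(\q))}\right|
    \;\leq\; \sum_{\beta \in \FF_p} |E_\beta|
    \;\leq\; p \cdot (N-1)\cdot O\!\left(p^{2}\sqrt{\tau}\,\gamma\right)
    \;=\; O\!\left(N\,p^{3}\sqrt{\tau}\,\gamma\right).
\]
Dividing through by $N$ gives the claimed bound. There is no substantive obstacle here: the entire content of the theorem is packaged into \cref{cor: distance trace code}, and the only additional ingredient is the trivial cancellation of $p$-th roots of unity. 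The only mild subtlety worth noting is that the constants in \cref{cor: distance trace code} are implicit in the $O(\cdot)$ and therefore uniform over $\beta$ (since the hypothesis on $f$ depends only on $t$ and is preserved under translation $f \mapsto f - \gamma$ used in the proof of \cref{cor: distance trace code}), so summing over the $p$ values of $\beta$ costs only a multiplicative factor of $p$.
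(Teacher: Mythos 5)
Your proof is correct, and it follows the same basic route as the paper (decompose the sum over the level sets of $\tr(f(\cdot))$ and invoke \cref{cor: distance trace code} for each level set). However, your write-up is actually \emph{more} careful than the paper's at the one point where care is needed. The paper's displayed chain of inequalities applies the triangle inequality immediately, bounding
\[
\frac{1}{N}\Bigl|\sum_{\beta} e^{2\pi i\beta/p}\,\mathcal{S}_\beta\Bigr| \;\le\; \frac{1}{N}\sum_{\beta}\mathcal{S}_\beta \;\le\; \sum_{\beta}\Bigl(\tfrac{1}{p}+O(\sqrt{\tau}\gamma p^2)\Bigr),
\]
and then asserts this equals $O(\sqrt{\tau}\gamma p^3)$ — but the sum of the main terms is $1$, so as written that chain only yields $1+O(\sqrt{\tau}\gamma p^3)$, which is trivial. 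The intended (and necessary) step is exactly the one you make explicit: write $\mathcal{S}_\beta=(N-1)/p+E_\beta$, use $\sum_\beta \omega^\beta=0$ to cancel the main term, and only then apply the triangle inequality to the $E_\beta$'s, where the two-sided estimate of \cref{cor: distance trace code} gives $|E_\beta|\le (N-1)\cdot O(p^2\sqrt{\tau}\gamma)$ uniformly in $\beta$. Your observation about uniformity of the implied constant over $\beta$ (via the translation $f\mapsto f-\gamma$ inside the proof of \cref{cor: distance trace code}) is also the right thing to check. No gaps.
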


\begin{proof}
    Applying \cref{cor: distance trace code} to the case $\ell = p$, we get
    \begin{align*}
        \frac{1}{N}\,\Bigg|\sum_{\q \in \mathbb{P}^1_F \setminus \{ \p \} } e^{\frac{2\pi i}{p} \tr ( f (\q))}\Bigg| &= \frac{1}{N} \,\Bigg|\sum_{\beta = 0}^{p-1} e^{\frac{2\pi i}{p} \beta} \left|\left\{\mathfrak{q} \in \mathbb{P}^1_F \setminus \{\p\}: \tr(f(\mathfrak{q})) = \beta \right\}\right|\Bigg| \\
        &\leq \frac{1}{N}\sum_{\beta=0}^{p-1}\left|\left\{\mathfrak{q} \in \mathbb{P}^1_F \setminus \{\p\}: \tr(f(\mathfrak{q})) = \beta \right\}\right|
        \\
        &\leq \sum_{\beta=0}^{p-1} \left ( \frac{1}{p} +  O \left (\sqrt{\tau} \gamma p^2 \right ) \right ) \\
        &= O \left (\sqrt{\tau} \gamma p^3 \right ),
    \end{align*}
    which finishes the proof.
\end{proof}

\subsection{Kummer Extensions and Multiplicative Character Sums}\label{sec:Kummer}

In this section, following a similar outline to \cref{subsec:p-abelian extensions}, we instantiate the bound from \cref{thm:extensions we work with} in the setting of Kummer extensions to get an upper bound on the number of rational places in such extensions. Similarly to \cref{cor: distance trace code}, in \cref{prop:kummer equality} we give an argument that also yields a \emph{lower} bound. We then apply this bound to derive upper bounds for multiplicative character sums of functions on curves, in direct analogy with the upper bounds for exponential sums obtained for the Artin–Schreier case.

Recall the setting of \cref{sec:bound on nl}. Let $\phi(T) = T^{d}$ for $d>1$ such that $d \mid q-1$. Let $f \in \rrs(r\p)$ and assume that $t = \deg(f)_\infty = -v_\p(f)$ is relatively prime to $d$. Let $L = F(z)$ with $\phi(z) = f$. 
Since $d\mid q-1$, the field $\FF_q$ contains a primitive $d$-th root of unity.
By Corollary 3.7.4 in \cite{Stich}, $L / F$ is a cyclic Kummer extension of degree $d$, and hence $\phi(T) - f$ is irreducible.
By Proposition 3.7.3(b) in \cite{Stich}, the place $\p$ is totally ramified in $L$, and moreover by Corollary 3.7.4 in~\cite{Stich} we have
\[
    g_L = 1 + d(g-1) + \frac{1}{2}\sum_{\q \in \mathbb{P}_F}\left(d-(d, v_\q(f))\right) \deg \q.
\]
Note that for $\q\in\mathbb{P}_F$, if $v_\q(f)=0$ then $d-(d,v_\q(f))=0$. Otherwise, $d-(d,v_\q(f))\leq d$. Therefore,
\begin{align*}
    g_L &\leq 1+d(g-1) + \frac{1}{2}\cdot\sum_{\q\in\mathbb{P}_F\,:\,v_\q(f)\ne 0}d\deg\q 
    \\
    &\leq 
    dg + \frac{d}{2}\left(\deg(f)_0 + \deg(f)_\infty\right)\leq d(g+t)
\end{align*}
as $\deg(f)_0=\deg(f)_\infty=t$. 
Thus we are in a position to apply \cref{thm:extensions we work with} to $L / F$.
\begin{proposition}\label[proposition]{prop:Kummer bound before instantiation}
    In the above setting, assume further that $\deg_F(f)_\infty$ is relatively prime to $d$, and satisfies the condition in \cref{claim_item:independent_S} of \cref{prop:deg(f)=t and independence of S}. Then,
    \begin{ineq}
        \frac{N_L}{N} \leq \sigma + O_{\sigma}\left ( \sqrt{\tau} d \gamma p + \tau d^2 \gamma p \right ).
    \end{ineq}
\end{proposition}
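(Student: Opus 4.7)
The plan is to derive this proposition as an immediate corollary of \cref{thm:extensions we work with}. The paragraph preceding the proposition has already assembled all the structural hypotheses needed for that theorem in the Kummer setting: since $d \mid q-1$, the field $\FF_q$ contains a primitive $d$-th root of unity, so Corollary 3.7.4 of \cite{Stich} yields that $\phi(T) - f = T^d - f$ is irreducible over $F$ and that $L/F$ is cyclic of degree $d$; total ramification of $\p$ in $L$ follows from Proposition 3.7.3(b) of \cite{Stich}, using that $v_\p(f) = -t$ is coprime to $d$; and the genus bound $g_L \le d(g+t)$ is obtained from the Kummer genus formula by bounding each summand $d - \gcd(d, v_\q(f)) \le d$ and using the identity $\deg(f)_0 = \deg(f)_\infty = t$.

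With this groundwork, I would simply invoke \cref{thm:extensions we work with}. The coprimality of $\deg_F(f)_\infty$ with $d$ is a standing hypothesis of this subsection, and the additional assumption imposed in the proposition is exactly item \ref{claim_item:independent_S} of \cref{prop:deg(f)=t and independence of S}, which guarantees the $\FF_q$-linear independence of the set $S$. Thus every hypothesis of \cref{thm:extensions we work with} is in force, and its conclusion is verbatim
\[
\frac{N_L}{N} \le \sigma + O_\sigma\!\left(\sqrt{\tau}\, d \gamma p + \tau d^2 \gamma p\right),
\]
which is the claim.

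The genuine technical content has already been absorbed into the proofs of \cref{thm:extensions we work with}, \cref{prop:After beta and mu instantiation}, and ultimately \cref{prop:general}, so there is no new obstacle here. The only computation that is not pure citation is the genus estimate, which is pleasantly loose: places outside the supports of $(f)_0$ and $(f)_\infty$ contribute nothing to the Kummer genus formula, and each contributing place contributes at most $d \deg \q$. Summing gives $\tfrac12 \cdot d \cdot (\deg(f)_0 + \deg(f)_\infty) = dt$, so $g_L \le 1 + d(g-1) + dt \le d(g+t)$, which is all that is required.
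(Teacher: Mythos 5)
Your proposal is correct and takes essentially the same route as the paper: there the proposition is stated without a separate proof precisely because it is an immediate application of \cref{thm:extensions we work with}, whose Kummer-specific hypotheses---irreducibility of $T^d-f$ and cyclicity of $L/F$ via the primitive $d$-th root of unity, total ramification of $\p$ from the coprimality of $v_\p(f)$ with $d$, and the genus bound $g_L\le d(g+t)$ from the Kummer genus formula with each summand bounded by $d\deg\q$ and $\deg(f)_0=\deg(f)_\infty=t$---are verified in the paragraph preceding the statement exactly as you describe. Your genus estimate and the invocation of the independence condition from \cref{claim_item:independent_S} of \cref{prop:deg(f)=t and independence of S} coincide with the paper's argument, so there is nothing to add.
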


Similarly to \cref{subsec:p-abelian extensions}, we can also achieve a lower bound under further assumptions.

\begin{proposition}\label[proposition]{prop:kummer equality}
     Assume that $\sigma \leq 1$. For every function $f \in \rrs(r\p)$ such that $\deg_F(f)_\infty$ is relatively prime to $d$, and satisfies the condition in \cref{claim_item:independent_S} of \cref{prop:deg(f)=t and independence of S}, we have
    \begin{equation*}
        \frac{N_L}{N} = 1 + O\left ( \sqrt{\tau} d^2 \gamma p \right ).
    \end{equation*}
    Furthermore, we have
    \[
        \frac{1}{N} \left| \left\{ \q \in \mathbb{P}^1_F \setminus \{ \p \} : \exists\,y \in \FF_q^\times,\ y^d = f(\q ) \right\}\right| = \frac{1}{d} + O\left ( \sqrt{\tau} d \gamma p \right ).
    \]
\end{proposition}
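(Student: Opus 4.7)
The plan is to follow the template used for the Artin--Schreier case in \cref{cor: distance trace code}: use \cref{prop:Kummer bound before instantiation} together with the structural description of rational places in a Kummer extension to obtain the upper halves of both equalities, and then recover the matching lower bounds by applying the same proposition to every scaling $\alpha f$ with $\alpha\in\FF_q^\times$, thereby controlling the distribution of $f(\q)$ among the cosets of the $d$-th powers in $\FF_q^\times$.

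The first step will be to invoke \cref{prop:Kummer bound before instantiation} on $L=F(z)$ with $z^d=f$. Using $\sigma\leq 1$ yields $N_L/N\leq 1+O(\sqrt{\tau}\,d\,\gamma p+\tau d^2\gamma p)$; since the claim is trivial unless $\sqrt{\tau}\,d\leq 1$, the quadratic term is absorbed, giving the upper half of the first equality. I will then unpack the rational places of $L$ via Kummer theory (as recalled in the setup of \cref{prop:Kummer bound before instantiation}): each $\q\neq\p$ with $f(\q)\in(\FF_q^\times)^d$ splits completely and contributes exactly $d$ rational places to $L$; each $\q$ with $f(\q)\notin(\FF_q^\times)^d\cup\{0\}$ contributes no rational place, because the places above it have residue degree $>1$; the unique place $\pp\mid\p$ contributes one rational place; and the zeros of $f$ contribute at most $\deg(f)_0=t$ further rational places. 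This yields the decomposition $N_L=d\,\mathcal{S}_0+Y+1$ with $0\leq Y\leq t$, where $\mathcal{S}_0$ is exactly the quantity in the second equality, and so the upper half of the second equality follows from $\mathcal{S}_0\leq N_L/d$.

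For the lower bounds, fix coset representatives $\alpha_0=1,\alpha_1,\dots,\alpha_{d-1}$ of $(\FF_q^\times)^d$ in $\FF_q^\times$, and set $\mathcal{S}_i=|\{\q\neq\p:f(\q)\in\alpha_i(\FF_q^\times)^d\}|$ together with $Z=|\{\q\neq\p:f(\q)=0\}|\leq t$, so $\sum_i\mathcal{S}_i+Z=N-1$. The key point will be that $\alpha_j f$ satisfies every hypothesis of \cref{prop:Kummer bound before instantiation}: it has the same pole divisor as $f$, its pole order at $\p$ is still $t$ (coprime to $d$), the condition from \cref{claim_item:independent_S} of \cref{prop:deg(f)=t and independence of S} depends only on $\deg_F(f)_\infty$, and $T^d-\alpha_j f$ remains irreducible. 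Applying the proposition to the extension $z^d=\alpha_j f$ and reusing the decomposition of the previous step gives $\mathcal{S}_{-j}/N\leq 1/d+O(\sqrt{\tau}\,\gamma p+\tau d\,\gamma p)$ for every $j$. Summing over $j\neq 0$, subtracting from $(N-1-Z)/N$, and absorbing the $\tau d^2\gamma p$, $Z/N\leq\tau/\sqrt{q}$ and $1/N$ terms using $\sqrt{\tau}\,d\leq 1$ will produce the lower bound $\mathcal{S}_0/N\geq 1/d-O(\sqrt{\tau}\,d\,\gamma p)$. Substituting this back into $N_L=d\,\mathcal{S}_0+Y+1$ closes the first equality with error $O(\sqrt{\tau}\,d^2\gamma p)$.

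The main obstacle will be arithmetic bookkeeping rather than anything structural: one must track the contributions from the at most $t$ zeros of $f$ (both to $Y$ in the Kummer decomposition and to $Z$ in the coset sum), absorb them together with the secondary $\tau d^2\gamma p$ error using the trivial-case reduction $\sqrt{\tau}\,d\leq 1$, and verify that $(d-1)$ summed upper bounds still match the target error $O(\sqrt{\tau}\,d\,\gamma p)$ after the same absorption. The invariance of the hypotheses of \cref{prop:Kummer bound before instantiation} under scaling by a constant in $\FF_q^\times$ is essentially formal, since those hypotheses depend only on the pole divisor, the degree, and coprimality, none of which are affected.
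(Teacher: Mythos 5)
Your proposal is correct and matches the paper's proof in essentially every respect: both apply \cref{prop:Kummer bound before instantiation} to the scaled extensions $z^d=\epsilon_i f$ for coset representatives $\epsilon_i$ of $\FF_q^\times/(\FF_q^\times)^d$, use Kummer's theorem to write $N_{L_i}=d|\mathcal{S}_i|+(\text{zero contributions})+1$, combine with the partition $\sum_i|\mathcal{S}_i|+|\{f=0\}|=N-1$ to get the lower bounds, and absorb secondary terms via the trivial-case reduction $\sqrt{\tau}\,d\le 1$. The only nitpick is that the zeros of $f$ can contribute up to $d\cdot t$ (not $t$) rational places to $L$, but this changes nothing in the error analysis since $dt/N=d\tau/\sqrt{q}=O(\sqrt{\tau}\,d\,\gamma p)$.
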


\begin{proof}
    Fix representatives $\{ \epsilon_1 = 1, \ldots , \epsilon_d \}$ of $\left ( \FF_q^\times / \FF_q^{d\times} \right )$. For $i=1,\ldots,d$, let
    \[
        \mathcal{S}_i = \left\{ \q \in \mathbb{P}^1_F \setminus \{ \p \} : \exists\,y \in \FF_q^\times,\ y^d = \epsilon_i f(\q ) \right\}.
    \]
    Note that the sets $\mathcal{S}_i$ form a partition of the set $\{\q \in \mathbb{P}^1_F \setminus \{\p\} : f(\q)\neq 0 \}$.
    Let
    \[
        L_i := F(z_i),\ z_i^d = \epsilon_i f.
    \]
    Since $L_i/F$ is Galois, every $\q \in \mathbb{P}_F$ decomposes in $L_i$ to places of the same degree with the same ramification index. Hence, rational places in $L_i$ must lie above splitting rational places in $F$, or above ramified rational places. Let $\mathbb{P}_{L_i}^1$ be the set of rational place in $L_i$. We have,
    \begin{align*}
        N_{L_i} &= \left| \left\{\mathfrak{Q} \in \mathbb{P}_{L_i}^1:\mathfrak{Q} \text{ lies above }\q \text{ such that } f(\q) \neq 0 \right\}\right|\\ 
        &+ \left| \left\{\mathfrak{Q} \in \mathbb{P}_{L_i}^1:\mathfrak{Q} \text{ lies above }\q \text{ such that } f(\q) = 0 \right\}\right| \\
        &+ 1,
    \end{align*}
    where the last term corresponds to the totally ramified place lying above $\p$.
    By \cref{thm:kummers theorem}, if $f(\q) \neq 0$, then $\q \in \mathcal{S}_i$ if and only if $\q$ splits completely into $d$ rational places in $L_i$. Hence the first summand in the RHS is $d|\mathcal{S}_i|$.
    As for the second summand, note that the number of rational places $\q$ in $F$ with $f(\q) = 0$ is at most $\tau \frac{N}{\sqrt{q}} = t = \deg(f)_0$, and there are at most $d$ places in $L_i$ lying above each. Thus,
    \begin{equation}\label{eq:difference N_(L_i) and S_i}
        0 \leq N_{L_i} - d |\mathcal{S}_i| = O \left ( d \tau \frac{N}{\sqrt{q}} \right ).
    \end{equation}
    By applying \cref{prop:Kummer bound before instantiation} to $L_i$ in the inequality above on the left, we obtain
    \begin{ineq}\label{ineq:S_i bound}
        \frac{|\mathcal{S}_i|}{N-1} \leq \frac{1}{d} + O\left ( \sqrt{\tau} \gamma p + \tau d \gamma p \right ).
    \end{ineq}
    On the other hand we have
    \[
        \sum_{i = 1}^d |\mathcal{S}_i| + \left| \left\{ \q \in \mathbb{P}^1_F \setminus \{ \p \} : f(\q) = 0 \right\}\right| = N-1. 
    \]
    Thus,
    \begin{align*}
        \frac{N_L}{N-1} &\geq \frac{d}{N-1}|\mathcal{S}_1| \\ &\geq  d - \frac{d}{N-1}\left| \left\{ \q \in \mathbb{P}^1_F \setminus \{ \p \} : f(\q) = 0 \right\}\right| - \frac{d}{N-1}\sum_{i = 2}^d |\mathcal{S}_i|.
    \end{align*}
Notice that $\tau \leq 1$, and hence $\deg (f)_0 \leq \frac{N}{\sqrt{q}}$. Therefore,
\[
    \left| \left\{ \q \in \mathbb{P}^1_F \setminus \{ \p \} : f(\q) = 0 \right\}\right|  = O\left(\frac{N}{\sqrt{q}} \right ).
\]
Combined with \cref{ineq:S_i bound}, we obtain
\[
    \frac{N_L}{N-1} \geq d - O \left ( \frac{d}{\sqrt{q}} \right ) - (d-1) - O\left ( \sqrt{\tau} d^2 \gamma p + \tau d^3 \gamma p \right ).
\]
To complete the proof of the first part, notice that if $\sqrt{\tau}d \geq 1$ the result is trivial, and otherwise $\tau d^2<\sqrt{\tau}d < 1$. \\
To conclude the second part, plug the first part into the right equality in \cref{eq:difference N_(L_i) and S_i}.
\end{proof}
Lastly, we conclude the bound for multiplicative character sums arising from these functions.
\begin{theorem}
    For all multiplicative characters $\chi \in \widehat{\FF_q^\times}$ of order $d$, and for all functions $f\in F$ which satisfy the conditions of \cref{prop:kummer equality}, we have
    \[
        \frac{1}{N}\,\left|\sum_{\q \in \mathbb{P}^1_F \setminus \{\p \}} \chi(f(\q))\right| =
        O\left ( \sqrt{\tau} d^2 \gamma p \right ).
    \]
\end{theorem}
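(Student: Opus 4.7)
The plan is to expand the character sum by partitioning $\mathbb{P}^1_F \setminus \{\p\}$ into the $d$ cosets $\mathcal{S}_i = \{\q : \epsilon_i f(\q) \in (\FF_q^\times)^d\}$ (for representatives $\epsilon_1,\ldots,\epsilon_d$ of $\FF_q^\times/(\FF_q^\times)^d$, exactly as in the proof of \cref{prop:kummer equality}), together with the zero set $\{\q : f(\q)=0\}$. The first key observation is that $\chi$, having order $d$, is trivial on $d$-th powers, so for every $\q \in \mathcal{S}_i$ one has $\chi(f(\q)) = \chi(\epsilon_i^{-1})\,\chi(y)^d = \chi(\epsilon_i)^{-1}$; extending $\chi$ to $\FF_q$ by $\chi(0) = 0$ then makes the zero set contribute nothing. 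This reduces the character sum to $\sum_{i=1}^{d} \chi(\epsilon_i)^{-1}\,|\mathcal{S}_i|$.

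Next, I would use character orthogonality: since $\chi$ has order \emph{exactly} $d$, the induced character on $\FF_q^\times/(\FF_q^\times)^d$ is nontrivial, so $\sum_{i=1}^d \chi(\epsilon_i)^{-1} = 0$. This lets me subtract the ``mean'' $(N-1)/d$ from each $|\mathcal{S}_i|$ for free and rewrite
\[
\sum_{\q \in \mathbb{P}^1_F \setminus \{\p\}} \chi(f(\q)) \;=\; \sum_{i=1}^{d} \chi(\epsilon_i)^{-1}\!\left(|\mathcal{S}_i| - \frac{N-1}{d}\right).
\]

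The third step is to bound each fluctuation $\bigl||\mathcal{S}_i| - (N-1)/d\bigr|$ by applying \cref{prop:kummer equality} with $\epsilon_i f$ in place of $f$: multiplying by a nonzero scalar preserves the pole divisor, the pole order $t$, the coprimality $\gcd(t,d)=1$, and the genus estimate $g_{L_i}\le d(g+t)$, while the conclusion set becomes exactly $\mathcal{S}_i$. This yields $\bigl||\mathcal{S}_i|/N - 1/d\bigr| = O(\sqrt{\tau}\,d\gamma p)$ uniformly in $i$. Applying the triangle inequality with $|\chi(\epsilon_i)^{-1}|=1$ and summing $d$ terms gives
\[
\Bigl|\sum_{\q} \chi(f(\q))\Bigr| \;=\; O\!\left(N\sqrt{\tau}\,d^2 \gamma p\right),
\]
and dividing by $N$ completes the proof.

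The only mild obstacle is confirming this uniformity of \cref{prop:kummer equality} across all scalar translates $\epsilon_i f$; this is essentially automatic since its hypotheses depend only on the pole-divisor data of $f$, which is invariant under multiplication by a nonzero constant. Conceptually, the argument is the standard orthogonality trick that relates character sums to coset sizes, where our prior count of $|\mathcal{S}_i|$ (itself derived from the rational-place bound \cref{prop:Kummer bound before instantiation} on $L_i/F$) supplies the quantitative input.
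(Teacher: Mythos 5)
Your proposal is correct and follows essentially the same route as the paper: decompose $\mathbb{P}^1_F\setminus\{\p\}$ into the cosets $\mathcal{S}_i$ on which $\chi(f(\q))=\chi(\epsilon_i)^{-1}$, invoke the second part of \cref{prop:kummer equality} (applied to each $\epsilon_i f$, which is legitimate since the hypotheses depend only on the pole divisor) to estimate each $|\mathcal{S}_i|/N$ as $1/d + O(\sqrt{\tau}\,d\gamma p)$, and kill the main term via $\sum_i \chi(\epsilon_i)^{-1}=0$. The only difference is cosmetic — you subtract the mean before applying the estimate, whereas the paper substitutes the estimate and then cancels — and you are slightly more careful about the $f(\q)=0$ places and the uniformity over the $\epsilon_i$, both of which the paper leaves implicit.
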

\begin{proof}
We use the notation of $\mathcal{S}_i$ and $\epsilon_i$ used in \cref{prop:kummer equality}.
    Notice that,
    \begin{align*}
        \frac{1}{N} \sum_{\q \in \mathbb{P}^1_F \setminus \{\p \}} \chi(f(\q)) &=
        \frac{1}{N}\sum_{i=1}^d \chi(\epsilon_i)^{-1} \mathcal{S}_i \\ 
        &= \left ( \frac{1}{N}\sum_{i=1}^d \chi(\epsilon_i)^{-1} \frac{1}{d} \right ) +  O\left ( \sqrt{\tau} d^2 \gamma p \right ) \\ 
        &=  O\left ( \sqrt{\tau} d^2 \gamma p \right ),
    \end{align*}
    where we have used the second part of \cref{prop:kummer equality}.
\end{proof}

\section{TAG Codes Instantiations}\label{sec:specific function fields}
In this section, we use the results obtained in the previous sections to construct and analyze TAG codes of specific function fields $F$. As discussed in \cref{subsubsec:constant rate TAGs}, the functions captured by our result are roughly those of degree $t < \frac{g}{p^2}$. Therefore, in order to construct TAG codes with high rate, it is necessary to consider curves that admit a place for which the associated Riemann-Roch spaces of degree $< \frac{g}{p^2}$ have sufficiently large dimensions.

\subsection{The Hermitian TAG Code}
Let $r$ be a power of a prime $p$, and let $q = r^2$. The Hermitian function field over $\FF_q$ is defined by
\begin{equation}\label{eq:hermitian_curve_eq}
    F = \FF_q(x, y), \qquad y^r + y = x^{r+1}.
\end{equation}
It is an elementary abelian $p$-extension of $\FF_q(x)$ of degree $r$.
Consider the place $\p_\infty \in \mathbb{P}_{\FF_q(x)}$. Let $\mathfrak{P}$ be some place lying above it. Then
\[
v_\pp(x^{r+1}) = e(\pp|\p_\infty)\cdot v_{\p_\infty}(x^{r+1}) = -(r+1)\cdot e(\pp|\p_\infty) < 0,
\]
hence by \cref{eq:hermitian_curve_eq},
\[
v_\pp(y^r+y) = r\cdot v_\pp(y) = -(r+1)\cdot e(\pp|\p_\infty). 
\]
Since $r$ and $r+1$ are coprime, it follows that
\begin{align*}
    e(\pp|\p_\infty) &= r, \\
    v_\pp(x) &= -r, \\
    v_\pp(y) &= -(r + 1).
\end{align*}
In particular, the place $\p_\infty$ is totally ramified. Let $\mathfrak{P}_\infty$ be the only place lying above it.

This function field has $N = r^3 + 1$ rational places - $\mathfrak{P}_\infty$, and another place $\mathfrak{P}_{\alpha, \beta}$ for all $\alpha, \beta \in \FF_q$ such that $\alpha^{r+1} = \beta^r + \beta$.
Taking $x$ as the element with $\deg(x)_\infty=s$, we have $s=r$ and $\sigma = s\frac{q}{N} = \frac{r^3}{r^3 + 1} \leq 1$. 
The genus of this curve is $g = \frac{r(r-1)}{2}$ \cite[Lemma 6.4.4(a)]{Stich}, and hence $\gamma = g \frac{\sqrt{q}}{N} = \frac{r^2(r-1)}{2(r^3 + 1)} = \Theta(1)$. \\
Fix an integer $r \leq T \leq g$. We have
\[
    \rrs(T \mathfrak{P}_\infty) = 
    \spn_{\FF_q} \left\{ x^iy^j: i,j\geq 0,\ ir + j(r+1) \leq T \right\}.
\]
To use \cref{cor: vanishing trace upper bound}, we seek a subspace of $\rrs(T \mathfrak{P}_\infty)$ consisting of functions whose degrees are not divisible by $p$, and that satisfy the condition in \cref{claim_item:independent_S} of \cref{prop:deg(f)=t and independence of S}.
Notice that $r = s$, and that
\[
    \deg(x^i y^j)_\infty = -v_{\mathfrak{P}_\infty} (x^i y^j) = ir + j(r + 1) = (i + j)r + j.
\]
Thus, if $p\nmid j$ then the degree of $x^iy^j$ is not divisible by $p$. Moreover, if $i + j$ is relatively prime to $\ell$, and $j \frac{q}{N} < \frac{1}{3\ell}$, then the monomial $x^iy^j$ satisfies the condition of \cref{claim_item:independent_S}.
Therefore we define $V \leq \rrs(T \mathfrak{P}_\infty)$ by
\begin{equation}\label{def:Hermitian V}
\begin{split}
    V = \spn_{\FF_q} \{ x^iy^j: &i,j\geq 0, \\
    &ir + j(r+1) \leq T, \\
    &j\not\equiv 0 \mod p, \\
    &\gcd(i+j,\,\ell) = 1, \\
    &j < r/(3\ell)\}.
\end{split}
\end{equation}
Every function $f \in V$ must satisfy the conditions of \cref{cor: vanishing trace upper bound}.
We are in a position to obtain a code and calculate its dimension and distance.
\begin{theorem}\label{thm:trace code of Hermitian curve}
    Let $T$ be an integer such that $r\leq T \leq g$, and let $\ell$ be a power of $p$ such that $\FF_p\subseteq\FF_\ell\subseteq\FF_q$. Let $V$ be defined as in \cref{def:Hermitian V}. Let $\mathfrak{P}_1, \ldots, \mathfrak{P}_{N-1}$ be the set of all rational places of the Hermitian function field, except $\mathfrak{P}_\infty$.
    Let $\trcode:V \to \FF_\ell^{r^3}$ be the trace code of $V$ to $\FF_\ell$, defined by
    \begin{equation*}
        \trcode(f) = \left( \tr(f(\mathfrak{P}_1)), \ldots, \tr(f(\mathfrak{P}_{N-1})) \right).
    \end{equation*}
    Then, this code has rate $\rho$ and relative distance $\delta$ satisfying
    \begin{align}
        \rho &= \Omega \left (\frac{T^2 \log(q)}{\ell \log (\ell) q^{5/2}} \right ), \\
        \delta &= 1-\frac{1}{\ell} - O_p \left ( \frac{\sqrt{T}}{\sqrt{q}} + \ell \frac{T}{q} \right ).\label{eq:delta Hermitian curve}
    \end{align}
\end{theorem}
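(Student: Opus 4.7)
The plan is to realize the code $\trcode:V\to\FF_\ell^{r^3}$ as an instance of the general construction from \cref{sec:trace codes} with $x$ taken to be the Hermitian coordinate function (so $\deg_F(x)_\infty = s = r$), and then to invoke \cref{cor: distance trace code} on every nonzero $f\in V$. The argument splits into (i) recording the Hermitian-specific parameters; (ii) verifying that every generator of $V$ meets the hypotheses of \cref{claim_item:independent_S} of \cref{prop:deg(f)=t and independence of S}; (iii) applying \cref{cor: distance trace code} for the distance; and (iv) an elementary lattice-point estimate for $\dim_{\FF_q} V$ yielding the rate.

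From $N = r^3+1$, $q = r^2$, $g = r(r-1)/2$, and $s = r$, one reads off $\sigma = sq/N = r^3/(r^3+1) \leq 1$ and $\gamma = g\sqrt{q}/N = \Theta(1)$. A monomial $x^iy^j$ has pole order $t = ir + j(r+1) = (i+j)s + j$ at $\mathfrak{P}_\infty$, which is precisely the decomposition $t = L s + \varepsilon\, N/q$ with $L = i+j$ and $\varepsilon = jq/N = jr^2/(r^3+1)$. The four constraints built into \cref{def:Hermitian V} translate exactly: $\gcd(i+j,\ell) = 1$ is coprimality of $L$ with $d = \ell$; the bound $j < r/(3\ell)$ forces $\varepsilon < 1/(3\ell) < (\sigma - 1/2)/(\ell-1)$ for all $r\geq 2$; and $p \nmid j$ ensures $p\nmid t$ since $r$ is a power of $p$ and $t \equiv j \pmod p$. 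The monomials $x^iy^j$ have pairwise distinct valuations at $\mathfrak{P}_\infty$ (as $\gcd(r,r+1) = 1$), so by the strong triangle inequality the pole order of any nonzero $f\in V$ is realized by a unique leading monomial in the index set of \cref{def:Hermitian V}, and hence $f$ itself inherits all the required hypotheses.

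For the distance, since $\sigma\leq 1$, applying \cref{cor: distance trace code} to each nonzero $f\in V$ with $\tau = t\sqrt{q}/N \leq T/q$ and $\gamma = \Theta(1)$ gives
\[
\frac{|\{i : \tr(f(\mathfrak{P}_i)) = 0\}|}{r^3} \leq \frac{1}{\ell} + O_p\!\left(\frac{\sqrt{T}}{\sqrt{q}} + \frac{\ell T}{q}\right),
\]
which rearranges to \cref{eq:delta Hermitian curve}. For the rate, the arithmetic sieves $p\nmid j$ and $\gcd(i+j,\ell)=1$ each retain a fraction $1 - 1/p$ of candidates (using that $\ell$ is a power of $p$), so $\dim_{\FF_q} V$ is of the same order as the number of lattice points in $\{i\geq 0,\; 0\leq j < r/(3\ell),\; ir + j(r+1)\leq T\}$. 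This region is a triangle of area $\Theta(T^2/q)$ when $T\leq r^2/(3\ell)$ and a trapezoid of area $\Theta(T/\ell)$ otherwise; using $T\leq g < q$, both estimates dominate $T^2/(\ell q)$.

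The distance bound implies $\trcode$ is injective on $V$ in the nontrivial parameter range, so $\dim_{\FF_\ell}\trcode(V) = \dim_{\FF_q} V \cdot \log_\ell q = \Omega\!\left(T^2 \log q/(\ell \log\ell \cdot q)\right)$, and dividing by the block length $r^3 = q^{3/2}$ yields the claimed rate. The conceptual content sits entirely in \cref{cor: distance trace code}; what remains is the dictionary between the abstract splitting condition of \cref{prop:deg(f)=t and independence of S} and the four explicit constraints of \cref{def:Hermitian V}, together with an elementary two-case area estimate for $V$. There is no essential obstacle, only the bookkeeping that fixes the slightly awkward cutoff $j < r/(3\ell)$ and the coprimality $\gcd(i+j,\ell)=1$ as precisely the conditions needed to invoke the black-box vanishing-trace bound.
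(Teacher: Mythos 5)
Your proposal is correct and follows essentially the same route as the paper: check that every $f\in V$ meets the hypotheses of the vanishing-trace bound (via its leading monomial at $\mathfrak{P}_\infty$), plug in $\sigma\le 1$, $\gamma=\Theta(1)$, $\tau\le T/q$, and count lattice points for the dimension — indeed you supply more detail than the paper on the leading-monomial reduction and the two-case area estimate. One small correction: the distance step should invoke \cref{cor: vanishing trace upper bound}, whose error term $O_\sigma\left(p\sqrt{\tau}\gamma+p\ell\tau\gamma\right)$ is exactly the displayed $O_p\left(\sqrt{T/q}+\ell T/q\right)$; the result you actually cite, \cref{cor: distance trace code}, is stated with the weaker error $O\left(\ell\sqrt{\tau}\gamma p\right)$, which would only give $\delta\ge 1-\tfrac{1}{\ell}-O_p\left(\ell\sqrt{T/q}\right)$ and does not match \cref{eq:delta Hermitian curve} for large $\ell$.
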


\begin{proof}
    Every function $f \in V$ satisfies the conditions of \cref{cor: vanishing trace upper bound}. Hence, the distance of the code is at least
    \[
        \delta = 1 - \frac{1}{\ell} - O_p(\sqrt{\tau} + \ell \tau),
    \]
    where $\tau\frac{N}{r} = T$.
    Using \cref{def:Hermitian V}, it is easy to verify that
\[
    \dim V = \Omega\left (\frac{1}{\ell}\left(\frac{T}{r}\right)^2  \right ).
\]
    Hence,
    \[
        \rho = \frac{\dim V\cdot \log_\ell(q)}{r^3} = \Omega \left (\frac{T^2 \log(q)}{\ell \log (\ell) q^{5/2}} \right ).
    \]
\end{proof}

We instantiate \cref{thm:trace code of Hermitian curve} to the setting of $\varepsilon$-balanced codes to get an $\FF_2$ linear code.
\begin{theorem}\label{thm:epsilon-balanced codes from Hermitian curve}
    For every $k$ and every $\varepsilon > 0$, there are choices for $T, r=2^a$ such that the Hermitian trace code $\trcode(V)$ from $\FF_q$ to $\FF_2$ is a $[n, \Omega(k)]_2$ linear code that is $\varepsilon$-balanced, with
    \[
        n = O\left ( \frac{k}{\varepsilon^4} \right )^{3/2}.
    \]
\end{theorem}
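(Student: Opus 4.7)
The plan is to instantiate \cref{thm:trace code of Hermitian curve} and its underlying \cref{cor: distance trace code} with $\ell=p=2$, making quantitative choices of $r=2^a$ and $T$ as functions of $(k,\varepsilon)$. Concretely, I would set $T=\lfloor c_1\varepsilon^2 q\rfloor$ with $q=r^2$ and a sufficiently small absolute constant $c_1>0$. The distance error term $O_p(\sqrt{T/q}+\ell T/q)$ in \cref{eq:delta Hermitian curve} then becomes $O(\sqrt{c_1}\,\varepsilon)$, which is $\le\varepsilon$ once $c_1$ is chosen small enough; the same choice ensures $T<r^2/6$ and $r\le T\le g$ in the regime of interest.

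Rather than relying on the one-sided distance bound of \cref{thm:trace code of Hermitian curve}, I would invoke the equality form \cref{cor: distance trace code}, which asserts that for every nonzero $f\in V$ and every $\beta\in\FF_2$,
\[
  \frac{|\{\q\in\mathbb{P}^1_F\setminus\{\mathfrak{P}_\infty\}:\tr(f(\q))=\beta\}|}{N-1}=\tfrac{1}{2}+O(\sqrt{\tau}\,\gamma),
\]
with $\gamma=\Theta(1)$ (since $g=r(r-1)/2$ and $N=r^3+1$) and $\tau=O(T/q)=O(\varepsilon^2)$. Taking $\beta=0$ and $\beta=1$ in turn shows that every nonzero codeword of $\trcode(V)$ has relative Hamming weight inside $[\tfrac{1}{2}-\varepsilon,\tfrac{1}{2}+\varepsilon]$; that is, the code is $\varepsilon$-balanced. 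In particular $\trcode$ is injective on $V$, so $\dim_{\FF_2}\trcode(V)=[\FF_q:\FF_2]\cdot\dim_{\FF_q}V=2a\cdot\dim_{\FF_q}V$.

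Unpacking \cref{def:Hermitian V} with $\ell=2$, the constraints on the monomials $x^i y^j$ become $i$ even, $j$ odd, $j<r/6$ (automatic once $T<r^2/6$), and $ir+j(r+1)\le T$. Counting these lattice points gives $\dim_{\FF_q}V=\Omega((T/r)^2)=\Omega(\varepsilon^4 r^2)$, matching the rate bound of \cref{thm:trace code of Hermitian curve}. Consequently the binary dimension is $\Omega(\varepsilon^4 r^2 \log r)$. To conclude I would pick $r=2^a$ with $a=\lceil\log_2(\sqrt{k}/\varepsilon^2)\rceil$, so that $r=\Theta(\sqrt{k}/\varepsilon^2)$; then $\dim_{\FF_2}\trcode(V)=\Omega(\varepsilon^4 r^2 \log r)=\Omega(k\log r)\supseteq\Omega(k)$, and the block length is $n=r^3=\Theta((\sqrt{k}/\varepsilon^2)^3)=O((k/\varepsilon^4)^{3/2})$, as required. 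I do not anticipate a real obstacle: the argument is bookkeeping on top of the preceding results, with the one mildly delicate point being the upgrade from the one-sided distance statement of \cref{thm:trace code of Hermitian curve} to the two-sided $\varepsilon$-balanced property, which is exactly what routing through \cref{cor: distance trace code} provides.
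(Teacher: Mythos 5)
Your proposal is correct and follows essentially the same route as the paper: the same parameter choices $T=\Theta(k/\varepsilon^2)$, $r=\Theta(\sqrt{k}/\varepsilon^2)$, $q=r^2$, plugged into \cref{thm:trace code of Hermitian curve}, with the same rate and block-length bookkeeping. Your explicit appeal to the two-sided statement of \cref{cor: distance trace code} to get the $\varepsilon$-balanced property (rather than only the distance lower bound) is exactly what the paper's proof implicitly relies on, and is a welcome clarification rather than a deviation.
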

\begin{proof}
    Let $p = \ell = 2$. Pick $r$ and $T$ with $r \leq T \leq \frac{r(r-1)}{2}$, where $r$ is a power of $2$, and
    \begin{align*}
        T &= \Theta \left ( \frac{k}{\varepsilon^2} \right ), \\
        r &= \Theta \left( \frac{\sqrt{k}}{\varepsilon^2} \right ),
    \end{align*}
    such that the $O$-term in \cref{eq:delta Hermitian curve} is at most $\varepsilon$.
    Consider the construction of \cref{thm:trace code of Hermitian curve} over $\FF_q$ with these choices of $T, r$, and $q = r^2$. It admits an $\varepsilon$-balanced codes over $\FF_2$, with rate
    \[
        \Omega \left ( \frac{T^2 \log(q)}{r^2} \right ) = \Omega(k \log(q)).
    \]
    Since $n = r^3$, we have
    \[
        n = r^3 = O\left (  \frac{k^{3/2}}{\varepsilon^6} \right ) = O\left (  \frac{k}{\varepsilon^4} \right )^{3/2},
    \]
    as claimed.
\end{proof}

\subsection{The Norm-Trace TAG Code}
Here we present some known facts that can be found in \cite{Norm-Trace-epsilon-balanced-codes}.
Let $r$ be a prime power, let $e \geq 2$ be an integer and let $q = r^e$. The extended norm-trace function field is a function field over $\FF_{q}$, with a parameter $u>1$ such that $u \mid \frac{q-1}{r-1}$, and is defined by
\begin{equation}\label{eq:norm-trace definition}
    F = \FF_{q}(x, y), \qquad y^{r^{e-1}} + y^{r^{e-2}} + \cdots + y = x^u.
\end{equation}
The extended norm-trace function field $F/\FF_q$ has genus $g = \frac{(u-1)(r^{e-1} - 1)}{2}$, and 
$$N = r^{e-1}(u(r-1)+1) + 1$$
places of degree one. Hence,
\[
    g \approx \frac{ur^{e-1}}{2}, \qquad N \approx ur^e, \qquad \frac{N}{g} \approx 2r.
\]
Thus we have
\begin{equation}\label{eq:gamma norm-trace}
    \gamma = g \frac{\sqrt{q}}{N} = 
    \Theta\left(r^{e/2 - 1}\right).
\end{equation}
Consider the place $\p_\infty \in \mathbb{P}_{\FF_q(x)}$. Let $\mathfrak{P}$ be some place lying above it. Then
\[
v_\pp(x^u)=e(\pp|\p_\infty)\cdot v_{\p_\infty}(x^u) = -u\cdot e(\pp|\p_\infty) <0 
\]
and hence by \cref{eq:norm-trace definition},
\[
v_\pp(y^{r^{e-1}}+y^{r^{e-2}}+\ldots+y) = r^{e-1}\cdot v_\pp(y) = -u\cdot e(\pp|\p_\infty). 
\]
Since $u \mid q-1$ we have $(u, p) = 1$, hence $r^{e-1} \mid e(\mathfrak{P} | \p_\infty)$. As $e(\pp|\p_\infty)\leq [F:\FF_q(x)] = r^{e-1}$, we conclude that
\begin{align*}
    e(\mathfrak{P}| \p_\infty) &= r^{e-1}, \\
    v_\mathfrak{P}(x) &= -r^{e-1}, \\
    v_\mathfrak{P}(y) &= -u.
\end{align*}
In particular, the place $\p_\infty$ is totally ramified. Let $\mathfrak{P}_\infty$ be the only place lying above it. \\
From now on, assume that $u = \frac{q-1}{r-1} = \sum_{i = 0}^{e-1} r^{i} = O(r^{e-1})$. We remark that in this case, \cref{eq:norm-trace definition} may be written as \[
    F = \FF_q(x, y),\quad \text{Tr}_{\FF_q / \FF_r}(y) = N_{\FF_q / \FF_r}(x).
\]
Taking $x$ as the element with $\deg(x)_\infty=s$, we get that $s = r^{e-1}$ and
\[
    \sigma = s \frac{q}{N} = r^{e-1} \frac{r^e}{ur^e - ur^{e-1} + r^{e-1} + 1} = \frac{r^{e-1}}{u(1-1/r) + 1/r + 1/r^e}.
\]
Since $u = \frac{q-1}{r-1}$, we have $u\left(1-\frac{1}{r}\right) = r^{e-1} - \frac{1}{r}$. Thus we obtain
\[
    \sigma = \frac{r^{e-1}}{r^{e-1} + 1/r^e} \leq 1.
\]
Fix an integer $r^{e-1} \leq T \leq g$. We have
\[
    \rrs(T \mathfrak{P}_\infty) = 
    \spn_{\FF_q} \left\{ x^iy^j: i,j\geq 0,\ ir^{e-1} + ju \leq T \right\}.
\]
To use \cref{cor: vanishing trace upper bound}, we seek a subspace of 
$\rrs(T \mathfrak{P}_\infty)$ consisting of functions whose degrees are not divisible by $p$, and that satisfy the condition in \cref{claim_item:independent_S} of \cref{prop:deg(f)=t and independence of S}.
Notice that
\[
    \deg(x^iy^j)_\infty = -v_{\mathfrak{P}_\infty} (x^i y^j) = ir^{e-1} + ju = 
    (i + j)r^{e-1} + j\sum_{i=0}^{e-2} r^i.
\]
Thus, if $p\nmid j$ then the degree of $x^iy^j$ is not divisible by $p$.
Write $j = a(r-1) + b$ for some integers $a, b$ such that $0\leq a < r^{e-2}$ and $0 \leq b < r-1$. Then,
\begin{align*}
    -v_{\mathfrak{P}_\infty} (x^i y^j) &= (i + j)r^{e-1} + (a(r-1)+b)\sum_{i=0}^{e-2} r^i \\
    &= (i + j)r^{e-1} + a(r^{e-1} - 1) + b\sum_{i=0}^{e-2} r^i \\
    &= (i + j + a)r^{e-1} + b \sum_{i=0}^{e-2} r^i - a.
\end{align*}
Notice that $s=r^{e-1}$. If $\gcd(i + j + a,\,\ell) = 1$, and $\left ( b \sum_{i=0}^{e-2} r^i - a \right ) \frac{q}{N} = O \left ( \frac{b}{r} \right ) < \frac{1}{\ell}$, then the monomial $x^iy^j$ satisfies the condition of \cref{claim_item:independent_S}.
Let $V \leq \rrs(T \mathfrak{P}_\infty)$
be defined by
\begin{equation}\label{def:norm-trace V}
\begin{split}
    V = \spn_{\FF_q} \{ x^iy^j: &i,j\geq 0, \\
    &ir^{e-1} + ju \leq T, \\
    &j \neq 0 \mod p \\
    &\gcd(i + j + a, \ell) = 1, \\
    &b < O \left ( \frac{r}{\ell} \right )\}.
\end{split}
\end{equation}
Thus, every function $f \in V$ satisfies the conditions of \cref{cor: vanishing trace upper bound}.
\begin{theorem}\label{thm:trace code of norm trace curves}
    Let $T$ be an integer such that $r^{e-1} \leq T \leq g$, and let $\ell$ be a power of $p$ such that $\FF_p\subseteq\FF_\ell\subseteq\FF_q$. Let $V$ be defined as in \cref{def:norm-trace V}. Let $\mathfrak{P}_1, \ldots, \mathfrak{P}_{N-1}$ be the set of all rational places in the norm-trace function field, except $\mathfrak{P}_\infty$.
    Let $\trcode:V \to \FF_\ell^{N - 1}$ be the trace code to of $V$ to $\FF_\ell$ , defined by
    \begin{equation*}
        \trcode(f) = \left( \tr(f(\mathfrak{P}_1)), \ldots, \tr(f(\mathfrak{P}_{N-1})) \right).
    \end{equation*}
    Then,
    \begin{align}
        \rho &= \Omega \left (\frac{T^2 \log(q)}{\ell \log (\ell) r^{4e - 3}} \right ), \\
        \delta &= 1-\frac{1}{\ell} - O_p \left ( \sqrt{\frac{T}{r^{e/2 + 1}}} + \ell \frac{T}{r^{e}} \right ).\label{eq:delta norm-trace}
    \end{align}
\end{theorem}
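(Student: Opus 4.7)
The plan is to follow the template of the Hermitian case (\cref{thm:trace code of Hermitian curve}) and instantiate \cref{cor: vanishing trace upper bound} using the extended norm--trace parameters computed in the setup above the theorem statement.

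For the distance, I will verify that every $f \in V$ satisfies the hypotheses of \cref{cor: vanishing trace upper bound}: its pole order at $\mathfrak{P}_\infty$ is not divisible by $p$ (guaranteed by the condition $j \not\equiv 0 \pmod p$), and it satisfies the linear-independence condition of \cref{claim_item:independent_S} of \cref{prop:deg(f)=t and independence of S} (this is precisely why the constraints $\gcd(i+j+a,\ell)=1$ and $b < O(r/\ell)$ are imposed, via the reparametrization $j=a(r-1)+b$ carried out just before the theorem). With $\sigma \leq 1$, $\gamma = \Theta(r^{e/2-1})$ from \cref{eq:gamma norm-trace}, and $\tau = T\sqrt{q}/N = \Theta(T/r^{3e/2-1})$ (using $N = \Theta(r^{2e-1})$ and $\sqrt{q}=r^{e/2}$), the error term $O_p(\sqrt{\tau}\gamma + \ell\tau\gamma)$ collapses to $O_p\bigl(\sqrt{T/r^{e/2+1}} + \ell T/r^e\bigr)$, matching \cref{eq:delta norm-trace}.

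For the rate, I will bound $\dim_{\FF_q} V$ from below by counting the admissible pairs $(i,j)$. Since $u = \Theta(r^{e-1})$, the pole-order inequality $ir^{e-1}+ju \leq T$ confines $(i,j)$ to a triangle containing $\Theta(T^2/r^{2e-2})$ lattice points. The congruence $j \not\equiv 0 \pmod p$ costs only a constant factor; the restriction $b < O(r/\ell)$ (where $b = j \bmod (r-1)$) removes a further $\Theta(1/\ell)$ fraction; and the coprimality $\gcd(i+j+a,\ell)=1$, with $\ell$ a power of $p$, again costs only a constant factor. Hence $\dim_{\FF_q} V = \Omega(T^2/(\ell r^{2e-2}))$. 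Multiplying by $\log_\ell q = \log q/\log \ell$ and dividing by the block length $N-1 = \Theta(r^{2e-1})$ produces the stated rate $\Omega(T^2\log q/(\ell \log\ell\cdot r^{4e-3}))$.

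The main obstacle will be the dimension count: one needs to argue that the four arithmetic constraints cutting out $V$ are sufficiently ``independent'' that their densities approximately multiply, and that the count is not dominated by degenerate boundary effects in the pole-order triangle. I would handle this by enumerating column-by-column (fix $j$ in its admissible range and count $i$ with $0 \leq i \leq (T-ju)/r^{e-1}$ and $\gcd(i+j+a,\ell)=1$), exactly as in the Hermitian case, and then summing over $j$ in the restricted residue class.
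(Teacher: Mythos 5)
Your proposal is correct and follows essentially the same route as the paper's proof: invoke \cref{cor: vanishing trace upper bound} for every $f\in V$ with $\sigma\le 1$, $\gamma=\Theta(r^{e/2-1})$ and $\tau=\Theta(T/r^{3e/2-1})$ to get the distance, and lower-bound $\dim V=\Omega(T^2/(\ell r^{2e-2}))$ by lattice-point counting to get the rate. You actually supply more detail on the dimension count than the paper, which simply asserts it is "easy to verify."
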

\begin{remark}
    Since $r^{e-1} \leq T$, \cref{eq:delta norm-trace} is non-trivial only if $e < 4$.
\end{remark}

\begin{proof}
    Every function $f \in V$ satisfies the conditions of \cref{cor: vanishing trace upper bound}. Hence, the distance of the code is at least
    \[
        \delta = 1 - \frac{1}{\ell} - O_p(\sqrt{\tau} \gamma + \ell \tau \gamma),
    \]
    where $\tau\frac{N}{r^{e/2}} = T$. We have,
\[
    \tau = \Theta \left ( T \frac{r^{e/2}}{r^{2e-1}} \right ) = \Theta \left ( \frac{T}{r^{3e/2 - 1}} \right ).
\]
Together with \cref{eq:gamma norm-trace}, we obtain \cref{eq:delta norm-trace}.
To analyze the rate of the code,
using \cref{def:norm-trace V} it is easy to verify that
\[
    \dim V = \Omega\left (\frac{1}{\ell}\left(\frac{T}{r^{e-1}}\right)^2  \right ).
\]
    We output $N-1 = \Theta\left(r^{2e-1}\right)$ elements in $\FF_\ell$, hence
    \[
        \rho = \frac{\dim V\cdot\log_\ell(q)}{N} = \Omega \left (\frac{T^2 \log(q)}{\ell \log (\ell) r^{4e - 3}} \right ).
    \]
\end{proof}
We instantiate \cref{thm:trace code of norm trace curves} with $e = 3$ in the setting of $\varepsilon$-balanced codes, thereby obtaining an $\FF_2$-linear code. In this case, the resulting $\varepsilon$-balanced codes are significantly weaker than those in \cref{thm:epsilon-balanced codes from Hermitian curve}, primarily because $\gamma$ is non-constant and appears in the error term of the relative distance.
\begin{theorem}
    For every $k$ and every $\varepsilon > 0$, there are choices for $T, r=2^a$ such that the trace code $\trcode(V)$ of the norm-trace curve with $e=3$ from $\FF_q$ to $\FF_2$ is a $[n, \Omega(k)]_2$ linear code that is $\varepsilon$-balanced, with
    \[
        n = O\left ( \frac{k}{\varepsilon^4} \right )^{5}.
    \]
\end{theorem}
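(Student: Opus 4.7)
The plan is to mirror the proof of \cref{thm:epsilon-balanced codes from Hermitian curve}, instantiating \cref{thm:trace code of norm trace curves} with the binary choices $p=\ell=2$ and $e=3$, and then optimizing the parameters $T$ and $r=2^a$. With $e=3$ the underlying function field has $q=r^3$, block length $n=N-1=\Theta(r^5)$, and genus $g=\Theta(r^4)$. \cref{thm:trace code of norm trace curves} then specializes to rate
\[
    \rho=\Omega\!\left(\frac{T^2\log r}{r^9}\right)
    \qquad\text{and}\qquad
    \delta \;\ge\; \tfrac12-O\!\left(\sqrt{T/r^{5/2}}\;+\;T/r^{3}\right).
\]

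For an $\varepsilon$-balanced guarantee I would choose $T$ so that both error terms in the distance expression are $O(\varepsilon)$. The first term is the binding one: setting $T=\Theta(\varepsilon^{2}r^{5/2})$ makes $\sqrt{T/r^{5/2}}=\Theta(\varepsilon)$, while the second becomes $T/r^{3}=\Theta(\varepsilon^{2}/\sqrt r)$, which is dominated as soon as $r \gtrsim 1/\varepsilon^{2}$. The admissibility constraints $r^{e-1}=r^{2}\le T\le g$ then translate into $r\gtrsim 1/\varepsilon^{4}$ on the lower side (trivially satisfied by our final choice of $r$) and into $\varepsilon^{2}r^{5/2}\ll r^{4}$ on the upper side, which always holds. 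Just as in the Hermitian proof, $\varepsilon$-balancedness (as opposed to one-sided distance) follows from \cref{cor: distance trace code}, whose hypothesis $\sigma\le 1$ was verified for the extended norm–trace field with $u=(q-1)/(r-1)$ in the discussion preceding the theorem.

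With these choices the $\FF_2$-dimension of the code is
\[
    \dim_{\FF_2}\trcode(V)
    \;=\;\dim_{\FF_q}V\cdot\log_2 q
    \;=\;\Omega\!\left(\tfrac{T^{2}}{r^{4}}\cdot\log r\right)
    \;=\;\Omega(\varepsilon^{4}\,r\log r).
\]
To force this to be at least $k$ I would take $r=\Theta(k/\varepsilon^{4})$ (the $\log r$ factor is absorbed into the implicit constant and only helps). Picking the nearest power of $2$ to this value gives a valid $r=2^{a}$ without changing asymptotics. Then
\[
    n=\Theta(r^{5})\;=\;O\!\left(\tfrac{k}{\varepsilon^{4}}\right)^{5},
\]
which is precisely the claimed block length.

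The only real subtlety is the parameter balancing: with two error terms of different shape one must confirm that the term tied to the genus contribution ($\sqrt{T/r^{e/2+1}}$, inflated by the $\gamma=\Theta(r^{e/2-1})$ arising from the Drinfel'd–Vl\u{a}du\c{t} ratio of the extended norm–trace family) is indeed the binding one; this is what forces the weaker $n=O(k/\varepsilon^{4})^{5}$ here as opposed to $n=O(k/\varepsilon^{4})^{3/2}$ for the Hermitian code, and it is also what restricts us to $e<4$. Beyond that, the argument is a direct plug-in into \cref{thm:trace code of norm trace curves} and does not require any new ideas beyond those used in \cref{thm:epsilon-balanced codes from Hermitian curve}.
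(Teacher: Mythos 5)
Your proposal is correct and follows essentially the same route as the paper: instantiate \cref{thm:trace code of norm trace curves} with $p=\ell=2$, $e=3$, and choose $T,r$ so that the genus-inflated error term $\sqrt{T/r^{5/2}}$ is $\Theta(\varepsilon)$ and the $\FF_2$-dimension is $\Omega(k)$ — your choices $T=\Theta(\varepsilon^2 r^{5/2})$, $r=\Theta(k/\varepsilon^4)$ coincide with the paper's $T=\Theta(k^{5/2}/\varepsilon^8)$, $r=\Theta(k/\varepsilon^4)$, giving $n=\Theta(r^5)=O(k/\varepsilon^4)^5$. Your explicit appeal to \cref{cor: distance trace code} for the two-sided ($\varepsilon$-balanced) guarantee and your verification of the constraint $r^2\le T\le g$ are both appropriate and, if anything, slightly more careful than the paper's write-up.
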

\begin{proof}
    In the above setting, let $e=3$, and let $p = \ell = 2$. Pick $r,T$ such that $r^2 \leq T \leq g$ where $r$ is a power of $2$, and
    \begin{align*}
        T &= \Theta \left ( \frac{k^{5/2}}{\varepsilon^8} \right ), \\
        r &= \Theta \left( \frac{k}{\varepsilon^4} \right ),
    \end{align*}
    such that the $O$-term in \cref{eq:delta norm-trace} is at most $\varepsilon$.
    Consider the construction of \cref{thm:trace code of norm trace curves} over $\FF_q$ with these choices of $e=3,\ T,\ r$, and $q = r^2$. It admits an $\varepsilon$-balanced codes over $\FF_2$, with rate
    \[
        \Omega \left ( \frac{T^2 \log(q)}{r^4} \right ) = \Omega(k \log(q)) = \Omega(k).
    \]
    Since $n = O(r^5)$, we have
    \[
        n = O(r^5) = O\left (  \frac{k}{\varepsilon^4} \right )^{5},
    \]
    as claimed.
\end{proof}

\subsection{The Hermitian Tower TAG Code}
In this subsection we mainly follow \cite[Section 3.1]{Folded_codes_from_function_field_towers}.

Let $p$ be a prime number, and $r = p^\ell$ for some integer $\ell \ge 1$. Let $q = r^2$ and let $e \leq r/2$ be an integer. The Hermitian tower is defined by the following recursive equations
\[
    x_{i+1}^r + x_{i+1} = x_i^{r+1},\qquad i=1,2,\ldots, e-1,
\]
and $F_e = \FF_q(x_1, x_2, \ldots, x_e)$.
The place $\p_\infty \in \mathbb{P}_{\FF_q(x)}$ is totally ramified in $F_e$ and let $\mathfrak{P}_\infty$ be the unique place lying above it. This is a rational place. There are exactly $r^{e+1}$ more rational places in $\mathbb{P}_{F_e}$, coming from $e$-tuples $(\alpha_1,\ldots, \alpha_e) \in \FF_q^{e}$ such that $\alpha_{i+1}^r + \alpha_{i+1} = \alpha_i^{r+1},\ i=1,2,\ldots, e-1.$ The genus of $F_e$ is
\[
    g_e = \frac{1}{2}\left ( \sum_{i=1}^{e-1}r^e\left (1+\frac{1}{r} \right )^{i-1} - (r+1)^{e-1} + 1 \right ) \leq er^e.
\]
Hence,
\begin{equation}\label{eq:gamma Hermitian tower}
    \gamma = g \frac{\sqrt{q}}{N} \leq er^e \frac{r}{r^{e+1}} = e.
\end{equation}
Taking $x$ as the element with $\deg(x)_\infty=s$, we have $s = -v_{\pp_\infty}(x) = e(\pp_\infty|\p_\infty) = r^{e-1}$. Hence 
\[
    \sigma = s \frac{q}{N} = r^{e-1} \frac{r^2}{r^{e+1} + 1} \leq 1.
\]
Moreover, for all $1 \leq i \leq e$, we have
\[
    v_{\mathfrak{P}_\infty}(x_i) = r^{e-i}(r+1)^{i-1}.
\]
Next, fix an integer $r^{e-1} \leq T \leq g_e \leq er^{e}$. We have
\begin{equation*}
    \rrs(T\mathfrak{P}_\infty) = \spn_{\FF_q}\left \{ x_1^{j_1} \cdots x_e^{j_e}: (j_1, \ldots, j_e ) \in \mathbb{Z}^e_{\geq 0},\, \sum_{i=1}^e j_i r^{e-i}(r+1)^{i-1} \leq T \right \}.
\end{equation*}
To use \cref{cor: vanishing trace upper bound}, we seek a subspace of $\rrs(T \mathfrak{P}_\infty)$ consisting of functions whose degrees are not divisible by $p$, and that satisfy the condition in \cref{claim_item:independent_S} of \cref{prop:deg(f)=t and independence of S}.
Notice that $s = r^{e-1}$, and that
\begin{align*}
    -v_{\mathfrak{P}_\infty} (x_1^{j_1} \cdots x_e^{j_e}) &= \sum_{i=1}^e j_i r^{e-i}(r+1)^{i-1} \\
    &= j_1 r^{e-1} + \sum_{i=2}^e j_i (r^{e-1} + O_e(r^{e-2})).
\end{align*}
Hence, if $j_e$ is not divisible by $p$, then $\deg (x_1^{j_1} \cdots x_e^{j_e})_\infty$ is not divisible by $p$. Moreover, if $j_i \leq O_e(\frac{r}{\ell})$ for all $i \geq 2$, then this monomial satisfies the condition of \cref{claim_item:independent_S}.
Let $V \leq \rrs(T\mathfrak{P}_\infty)$ be defined by
\begin{equation}\label{def:Hermitian tower V}
\begin{split}
    V = \spn_{\FF_q} \{x_1^{j_1} \cdots x_e^{j_e}\,:\  &j_i\geq 0, \\
    &\sum_{i=1}^e j_i r^{e-i}(r+1)^{i-1} \leq T, \\
    &j_e \neq 0 \mod p, \\
    &\gcd \left (\sum_{i=1}^e j_i,\,\ell \right ) = 1, \\
    &j_i \leq O_e \left ( \frac{r}{\ell} \right )\forall i \geq 2\}.
\end{split}
\end{equation}
Thus, every function $f \in V$ satisfies the conditions of \cref{cor: vanishing trace upper bound}.
\begin{theorem}\label{thm:trace code of Hermitian tower}
    Let $T$ be an integer such that $r^{e-1} \leq T \leq g_e \leq er^e$, and let $\ell$ be a power of $p$ such that $\FF_p \subseteq \FF_\ell \subseteq \FF_q$.
    Let $V$ be defined as in \cref{def:Hermitian tower V}. Let $\mathfrak{P}_1, \ldots, \mathfrak{P}_{N-1}$ be the set of all rational places in $F_e$, except $\mathfrak{P}_\infty$.
    Let $\trcode:V \to \FF_\ell^{N - 1}$ be the trace code to of $V$ to $\FF_\ell$ , defined by
    \begin{equation*}
        \trcode(f) = \left( \tr(f(\mathfrak{P}_1)), \ldots, \tr(f(\mathfrak{P}_{N-1})) \right).
    \end{equation*}
    Then,
    \begin{align}
        \rho &= \Theta_e \left (\frac{T^e \log(q)}{\ell^{e-1} r^{e(e-1)} \log (\ell)} \right ), \\
        \delta &= 1-\frac{1}{\ell} - O_{p, e} \left ( \sqrt{\frac{T}{r^e}} + \ell \frac{T}{r^{e}} \right ).\label{eq:delta Hermitian tower}
    \end{align}
\end{theorem}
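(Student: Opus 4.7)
My plan is to follow the template used in the proofs of \cref{thm:trace code of Hermitian curve,thm:trace code of norm trace curves}: first, use the constraints built into the definition of $V$ to verify that every $f\in V$ meets the hypotheses of \cref{cor: vanishing trace upper bound}; second, translate the resulting weight bound into the claimed distance; and third, count the spanning monomials to obtain the rate.

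For the distance, I would reuse the intrinsic parameters of $F_e$ collected just before the theorem: $N-1=r^{e+1}$, $s=-v_{\mathfrak P_\infty}(x_1)=r^{e-1}$, $\sigma\le 1$, $g_e\le er^e$, and therefore $\gamma=g_e\sqrt q/N\le e$ and $\tau=t\sqrt q/N=t/r^e$ whenever $f$ has pole order $t\le T$. For a monomial $f=x_1^{j_1}\cdots x_e^{j_e}\in V$ with $t=\sum_{i=1}^{e} j_i r^{e-i}(r+1)^{i-1}$, the three constraints in \cref{def:Hermitian tower V} realize the hypotheses of \cref{claim_item:independent_S} of \cref{prop:deg(f)=t and independence of S} in turn: the condition $j_e\not\equiv 0\pmod p$ forces $t\equiv j_e\pmod p$, since $r$ is a power of $p$ so $r^{e-i}(r+1)^{i-1}\equiv 0\pmod p$ for $i<e$ while $(r+1)^{e-1}\equiv 1\pmod p$; the condition $\gcd(\sum_i j_i,\ell)=1$ supplies the integer $\ell'=\sum_i j_i$ coprime to $d=\ell$ in the representation $t=\ell' s + \varepsilon N/q$, which is obtained by expanding each $(r+1)^{i-1}=\sum_{k=0}^{i-1}\binom{i-1}{k}r^{k}$ and collecting the coefficient of $r^{e-1}$; and the bound $j_i\le O_e(r/\ell)$ for $i\ge 2$ controls the leftover, yielding $|\varepsilon|\le(\sigma-1/2)/(d-1)$ for a suitably small implicit constant. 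Substituting $\tau\le T/r^e$ and $\gamma=O_e(1)$ into \cref{cor: vanishing trace upper bound} then reproduces \cref{eq:delta Hermitian tower}.

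The rate calculation reduces to counting the spanning set of $V$, since $\rho=\dim V\cdot\log_\ell q/(N-1)$ and $N-1=r^{e+1}$. The inequality $\sum_{i=1}^{e}j_i r^{e-i}(r+1)^{i-1}\le T$ alone defines a simplex in $\mathbb Z_{\ge 0}^{e}$; because $\prod_{i=1}^{e}r^{e-i}(r+1)^{i-1}=\Theta_e(r^{e(e-1)})$, this simplex contains $\Omega_e(T^e/r^{e(e-1)})$ lattice points. To accommodate the box constraint $j_i\le O_e(r/\ell)$ for $i\ge 2$, I would restrict to a product region in which $j_1\in[0,T/(2r^{e-1})]$ and $j_i\in[0,\min\{c_e T/r^{e-1},\,r/\ell\}]$ for $i\ge 2$, with $c_e$ small enough that the overall sum constraint is still respected. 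A short case split on whether $T\le r^{e}/\ell$ shows that this region contains $\Omega_e(T^e/(\ell^{e-1}r^{e(e-1)}))$ tuples. The two mod-$p$ conditions (recall $\ell=p^{k}$, so $\gcd(\sum_i j_i,\ell)=1$ just means $p\nmid\sum_i j_i$) jointly remove only a $\Theta_p(1)$ fraction, giving the claimed rate.

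The main obstacle I anticipate lies in the $\varepsilon$-bound verification above. Expanding $r^{e-i}(r+1)^{i-1}=\sum_{k=0}^{i-1}\binom{i-1}{k}r^{e-i+k}$, the residue of $t$ modulo $r^{e-1}$ becomes a binomial combination of the $j_i$ in which the coefficient of $r^{e-1-m}$ is $\sum_{i\ge m+1}\binom{i-1}{m}j_i$. Bounding this uniformly by $O_e(r^{e-1}/\ell)$ is what forces the constraint $j_i\le O_e(r/\ell)$ for every $i\ge 2$, and pinning down the correct implicit constant (depending on $e$ alone) is where the Hermitian--tower analysis is strictly more intricate than in the single-curve Hermitian and norm--trace cases.
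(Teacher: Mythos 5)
Your proposal is correct and follows essentially the same route as the paper: every $f\in V$ is checked against the hypotheses of \cref{cor: vanishing trace upper bound} exactly as in the discussion preceding the theorem, the distance follows by substituting $\tau=\Theta(T/r^{e})$ and $\gamma\le e$, and the rate follows from counting the spanning monomials of $V$. Your more explicit binomial-expansion verification of the $\varepsilon$-condition and the lattice-point count (with the case split on $T\lessgtr r^{e}/\ell$) simply fill in what the paper dismisses as ``easy to verify.''
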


\begin{proof}
    Every function $f \in V$ satisfies the conditions of \cref{cor: vanishing trace upper bound}. Hence, the relative distance of the code is at least
    \[
        \delta = 1 - \frac{1}{\ell} - O_p(\sqrt{\tau} \gamma + \ell \tau \gamma),
    \]
    where $\tau\frac{N}{r} = T$. We have,
\[
    \tau = \Theta \left ( T \frac{r}{r^{e+1}} \right ) = \Theta \left ( \frac{T}{r^{e}} \right ).
\]
Together with \cref{eq:gamma Hermitian tower}, we obtain \cref{eq:delta Hermitian tower}.
    Using \cref{def:Hermitian tower V}, it is easy to verify that
\[
    \dim V = \Omega_e \left ( \frac{T}{r^{e-1}} \left ( \frac{T}{\ell r^{e-1}} \right )^{e-1}  \right )
\]
    Since we output $N - 1 = r^{e+1}$ elements in $\FF_\ell$, we conclude
    \[
        \rho = \frac{\dim V\cdot \log_\ell(q)}{N} = \Theta_e \left (\frac{T^e \log(q)}{\ell^{e-1} r^{e(e-1)} \log (\ell)} \right ).
    \]
\end{proof}
We instantiate \cref{thm:trace code of Hermitian tower} to the setting of $\varepsilon$-balanced codes to get an $\FF_2$ linear code.

\begin{theorem}\label{thm:epsilon-balanced codes from Hermitian tower}
    For every $k$, $\varepsilon > 0$ and $e \geq 2$, there are choices for $T, r=2^a$ such that the trace code of the Hermitian tower of level $e$ $\trcode(V)$ from $\FF_q$ to $\FF_2$ is a $[n, \Omega_e(k)]_2$ linear code that is $\varepsilon$-balanced, with
    \[
        n = O_e\left ( \frac{k}{\varepsilon^{2e}} \right )^{\frac{e+1}{e}}.
    \]
\end{theorem}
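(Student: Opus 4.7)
The plan is to instantiate \cref{thm:trace code of Hermitian tower} in the regime $p = \ell = 2$ (so the trace goes all the way down to $\FF_2$ and produces an $\FF_2$-linear code), and to choose $r$ and $T$ so that the error term in the distance formula~\cref{eq:delta Hermitian tower} is at most $\varepsilon$ while the dimension is $\Omega_e(k)$.

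With $p = \ell = 2$ and $q = r^2$, the relative distance bound~\cref{eq:delta Hermitian tower} becomes $\delta \geq \tfrac12 - O_e\!\bigl(\sqrt{T/r^e} + T/r^e\bigr)$, so the code is automatically $\varepsilon$-balanced once the dominant term $\sqrt{T/r^e}$ is $\le c\varepsilon$ for a small enough constant $c = c(e)$. This suggests taking $T = \Theta_e(r^{e}\varepsilon^{2})$. For the rate, plugging this choice into $\dim V = \Omega_e\!\bigl(\tfrac{T}{r^{e-1}}(\tfrac{T}{\ell r^{e-1}})^{e-1}\bigr)$ from \cref{thm:trace code of Hermitian tower} telescopes to $\dim V = \Omega_e(r^e\varepsilon^{2e})$, and the $\FF_2$-dimension of the code is $\dim V \cdot \log_2 q = \Omega_e(r^e \varepsilon^{2e}\log r)$. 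Demanding this be $\Omega_e(k)$ leads naturally to the choice $r = \Theta_e\!\bigl((k/\varepsilon^{2e})^{1/e}\bigr)$, rounded \emph{up} to the nearest power of $2$ (which changes $r$ by at most a factor of $2$ and is absorbed into the $O_e$/$\Omega_e$ notation).

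Having made these choices, I would verify the admissibility conditions of \cref{thm:trace code of Hermitian tower}: namely, that $r^{e-1} \le T \le g_e$. The upper bound $T \le g_e$ holds since $g_e = \Theta_e(r^e)$ and $T = \Theta_e(r^e \varepsilon^2)$ with $\varepsilon < 1$; the lower bound $T \ge r^{e-1}$ holds provided $\varepsilon^2 r \ge 1$, which is automatic once $k$ is large relative to $\varepsilon$ (and trivially true in the asymptotic regime; small-$k$ cases can be handled by a trivial direct construction absorbed into the $O_e$ constant). Then the block length is $n = N - 1 = r^{e+1}$, and substituting $r = \Theta_e((k/\varepsilon^{2e})^{1/e})$ gives
\[
    n = r^{e+1} = \Theta_e\!\left(\left(\frac{k}{\varepsilon^{2e}}\right)^{(e+1)/e}\right),
\]
as claimed.

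The only mildly delicate point, and the step I would treat most carefully, is the rounding: choosing $r$ to be an integer power of $2$ while simultaneously choosing $T$ to be an integer in the permitted range, in a way that preserves both the $\Omega_e(k)$ lower bound on the dimension and the $\varepsilon$-balancedness. Since the constraint $r = 2^a$ costs at most a constant factor in $r$ (hence at most a constant factor in $n$ and in $\dim V$), and since $T$ can be taken as the floor of $c_1 r^{e}\varepsilon^{2}$ for a sufficiently small $c_1 = c_1(e)$ while still controlling the error term in~\cref{eq:delta Hermitian tower}, this is straightforward but should be spelled out. Everything else is a direct substitution into \cref{thm:trace code of Hermitian tower}, mirroring the proof of \cref{thm:epsilon-balanced codes from Hermitian curve}.
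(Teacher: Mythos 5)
Your proposal is correct and follows essentially the same route as the paper: set $p=\ell=2$, choose $T=\Theta_e(r^e\varepsilon^2)=\Theta_e(k/\varepsilon^{2(e-1)})$ and $r=\Theta_e\bigl((k/\varepsilon^{2e})^{1/e}\bigr)=\Theta_e(k^{1/e}/\varepsilon^2)$ (rounded to a power of $2$), and substitute into \cref{thm:trace code of Hermitian tower}; these are exactly the paper's parameter choices, merely derived in the reverse order (distance constraint first, then dimension). Your extra attention to the rounding and to verifying $r^{e-1}\le T\le g_e$ is sound and if anything slightly more careful than the paper's own write-up.
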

\begin{proof}
    Let $p = \ell = 2$. Pick $r, T$ such that $r^{e-1} \leq T \leq g_e \leq er^e$, where $r$ is a power of $2$, and
    \begin{align*}
        T &= \Theta_e \left ( \frac{k}{\varepsilon^{2(e-1)}} \right ), \\
        r &= \Theta_e \left( \frac{k^{1/e}}{\varepsilon^{2}} \right ),
    \end{align*}
    such that the $O$-term in \cref{eq:delta Hermitian tower} is at most $\varepsilon$.
    Consider the construction of \cref{thm:trace code of Hermitian tower} over $\FF_q$ with these choices of $T, r$, and $q = r^2$. It admits an $\varepsilon$-balanced codes over $\FF_2$, with rate
    \[
        \Omega_e \left ( \frac{T}{r^{e-1}} \right )^{e} = \Omega_e \left ( \frac{k}{\varepsilon^{2(e-1)}} \frac{\varepsilon^{2(e-1)}}{k^{\frac{e-1}{e}}} \right )^e = \Omega_e(k).
    \]
    Since $n = r^{e+1}$, we have
    \[
        n = r^{e+1} = O_e\left (  \frac{k^{\frac{e+1}{e}}}{\varepsilon^{2(e+1)}} \right ) = O_e   \left (  \frac{k}{\varepsilon^{2e}} \right )^{\frac{e+1}{e}},
    \]
    as claimed.
\end{proof}

\section{TAG Codes vs.\ Concatenation in the High Distance Regime}\label{sec:trace_vs_had}

In this section we compare binary TAG codes with the familiar approach of concatenating with the Hadamard code, thereby proving \cref{prop:had vs tag}. We consider here the case where the error term guaranteed in \cref{cor: vanishing trace upper bound} is $O(\tau)$, which would lead to the $\Omega(\eps^3)$ lower bound on the rate. The ``Moreover'' part of \cref{prop:had vs tag}, corresponds to $O(\sqrt{\tau})$, is similar.

Let $C$ be a one-point evaluation AG code, defined by an algebraic function field $F / \FF_q$ with genus $g$, and a set of $N$ rational places $Y$ with respect to a divisor $G = T \mathfrak{P}$ such that $\mathfrak{P} \not\in Y$, $T < g$.
Assume that $F$ contains an element $x \in F$ with $\frac{q}{N}\deg (x)_\infty  = 1 + O(q^{-1/2})$, and $N / g = \Omega(\sqrt{q})$. Let $\ell = \ell(T \mathfrak{P})$, the dimension of the corresponding Riemann-Roch space. This AG code has parameters $n_1 = N$, designated distance $d_1 = N - T$, and dimension $\ell$, and it is defined over $\FF_q$.
We compare the parameters of the two methods described above:

\begin{enumerate}
    \item \textbf{TAG codes.} Without loss of generality, assume that every function $f \in \rrs(T\mathfrak{P})$ satisfies the conditions of \cref{cor: vanishing trace upper bound}. The error term guaranteed in \cref{cor: vanishing trace upper bound} is $O(\tau) = O\left (\frac{t \sqrt{q}}{N} \right )$. Then the resulting TAG code has parameters
    \begin{align*}
        n_T &= N,\\
        k_T &= \ell \log (q),\\
        \varepsilon_T &= \Omega\left(\frac{T \sqrt{q}}{N}\right).
    \end{align*}
    \item \textbf{Concatenation with Hadamard.} The resulting code has parameters
    \begin{align*}
        n_H &= N q,\\
        k_H &= \ell \log (q),\\
        \varepsilon_H &\le \frac{T}{N}.
    \end{align*}
\end{enumerate}
We let $k = k_H = k_T$.
Ta-Shma and Ben-Aroya~\cite{Ben-Aroya_Ta-Shma} proved that in the regime of $g = \Omega(\sqrt{q})$ we have
\begin{equation}\label{eq:amnon}
n_H = \Omega\left( \frac{k}{\varepsilon_H^{2.5}\log(k/\varepsilon_H)} \right ).
\end{equation}
Assume that $\alpha,\beta > 0$ are such that
\begin{equation}\label{eq:nh}
n_H = \Omega\left(\frac{k^\alpha}{\varepsilon_H^{\beta}}\right),
\end{equation}
and note that per \cref{eq:amnon}, $\beta \geq 2.5$. Observe that $\varepsilon_T = \Omega(\varepsilon_H \sqrt{q})$, $n_T = n_H / q$. Hence,
\[
    n_T = \frac{n_H}{q}= \Omega\left(\frac{k^{\alpha} \varepsilon_H^{-\beta}}{q}\right) = \Omega\left(k^{\alpha} \varepsilon_T^{-\beta} q^{\beta / 2 - 1}\right).
\]
Notice that by \cref{clm:LB_for_the_gonality},
in order to have $\rrs(T\mathfrak{P}) \neq \FF_q$, we must have $T \geq \frac{N}{q+1}$, hence as $\eps_T = \Omega( \frac{T \sqrt{q}}{N})$ we have $\sqrt{q} = \Omega(\frac{1}{\varepsilon_T})$. Finally, we obtain
\[
    n_T = \Omega\left(k^{\alpha} \varepsilon_T^{-\beta - (\beta - 2)}\right) = \Omega\left(\frac{k^{\alpha}}{ \varepsilon_T^{2\beta-2}}\right).
\]
Comparing this with \cref{eq:nh}, recalling that $\beta \ge 2.5$, wee see that the exponent in the dependence in $\eps$ in the TAG code, $2\beta-2$, is larger by at least $0.5$ from the corresponding exponent for the concaentated code. Indeed, the difference is $(2\beta-2)-\beta = \beta -2 \ge 0.5$.

\section*{Acknowledgment}
We are grateful to Amnon Ta\hbox{-}Shma for illuminating discussions and for sharing his insights on trace codes of algebraic geometric codes.

\bibliographystyle{alpha}

\end{document}